%% file: main.tex
\documentclass[final,3p,times]{elsarticle}

\usepackage{amssymb}

\usepackage{comment}
\input{addon/pkgloading}

\usepackage[T1]{fontenc}
\usepackage{lmodern}
\input{addon/macros}
\input{addon/utils}
\usepackage{orcidlink}
\crefformat{enumi}{#2\textup{#1}#3}

\journal{Journal of Information Security and Applications}

\input{addon/commands-for-revisions.tex}
\usepackage{tikz}
\newcommand{\tikzxmark}{%
\tikz[scale=0.23] {
    \draw[line width=0.7,line cap=round] (0,0) to [bend left=6] (1,1);
    \draw[line width=0.7,line cap=round] (0.2,0.95) to [bend right=3] (0.8,0.05);
}}
\newcommand{\tikzcmark}{%
\tikz[scale=0.23] {
    \draw[line width=0.7,line cap=round] (0.25,0) to [bend left=10] (1,1);
    \draw[line width=0.8,line cap=round] (0,0.35) to [bend right=1] (0.23,0);
}}
\begin{document}
\begin{frontmatter}

\title{A TEE-based Approach for Preserving Data Secrecy in Process Mining with Decentralized Sources}

\author[a1]{Davide~Basile\orcidlink{0000-0002-5804-4036}}
\author[a1]{Valerio Goretti\orcidlink{0000-0001-9714-4278}}
\author[a1]{Luca Barbaro\orcidlink{0000-0002-2975-5330}}
\author[a2]{Hajo A. Reijers\orcidlink{0000-0001-9634-5852}}
\author[a2]{Claudio Di Ciccio\orcidlink{0000-0001-5570-0475}}

\affiliation[a1]{organization={Sapienza University of Rome},
addressline={Via Regina Elena 295},
postcode={00161},
city={Rome},
country={Italy}}

\affiliation[a2]{organization={Utrecht University},
addressline={Princetonplein 5},
postcode={3584 CC},
city={Utrecht},
country={The Netherlands}}

\begin{abstract}
\input{content/abstract}

\end{abstract}



\begin{keyword}
 Secrecy \sep%
 Process Mining \sep
 Confidential Computing \sep
 Business Process Management
\end{keyword}

\end{frontmatter}



\section{Introduction}\label{sec:introduction}
\input{content/introduction}
\section{Background and Related Work}\label{sec:backgroundRelated}
\input{content/relatedwork}
\section{Motivating Scenario}\label{sec:motivating}
\input{content/motivating}

\section{Definitions and Formal Properties}\label{sec:formal}
\input{content/notation}
\section{Design}\label{sec:design}
\input{content/design.tex}
\section{Realization}\label{sec:realization}
\input{content/realization}
\section{System Guarantees}\label{sec:formalvalidation}
\input{content/formalvalidation}
\section{Empirical Evaluation}\label{sec:evaluation}
\input{content/evaluation}
\section{Discussion}\label{sec:discussion}
\input{content/discussion}
\section{Conclusion}\label{sec:conclusion}
\input{content/conclusion}

\input{content/ack}

\bibliographystyle{elsarticle-num-names} 
\bibliography{bibliography}

\end{document}

%% file: addon/pkgloading.tex
\usepackage[english]{babel}
\usepackage{subcaption}
\usepackage{xcolor}
\usepackage{url}
\usepackage{colortbl}
\usepackage{amssymb}
\usepackage{stmaryrd}
\usepackage{amsmath}
\usepackage{mathrsfs}
\usepackage{amssymb}
\usepackage{amsthm}
\newdefinition{definition}{Definition}[section]
\newtheorem{theorem}{Theorem}[section]
\newtheorem{assumption}{Postulate}[section]
\newtheorem{lemma}{Lemma}[section]
\newtheorem{proposition}{Proposition}[section]

\usepackage[left]{lineno}
\usepackage{xfrac}
\usepackage{nicefrac}
\usepackage[textsize=scriptsize,backgroundcolor=yellow!40]{todonotes}
\usepackage[nonumberlist,acronym,sanitize=none]{glossaries}
\glsdisablehyper
\usepackage{comment}
\usepackage[pdftex, colorlinks=true, hyperfootnotes=true, hyperindex=true,
            plainpages=false, pagebackref=false, pdfpagelabels=true, pdfstartview=FitH,
            linkcolor=blue, citecolor=blue, urlcolor=blue,
            bookmarks, bookmarksopen, bookmarksdepth=3]{hyperref}
\usepackage[capitalise,nameinlink]{cleveref}
\crefname{algocf}{Alg.}{Algs.}
\Crefname{algocf}{Alg.}{Algs.}
\crefname{definition}{Def.}{Defs.}
\Crefname{definition}{Def.}{Defs.}
\crefname{assumption}{Post.}{Posts.}
\Crefname{assumption}{Post.}{Posts.}
\crefname{theorem}{Thm.}{Thms.}
\Crefname{theorem}{Thm.}{Thms.}
\crefname{lemma}{Lem.}{Lemt.}
\Crefname{lemma}{Lem.}{Lemt.}
\crefname{corollary}{Cor.}{Cors.}
\Crefname{corollary}{Cor.}{Cors.}
\Crefname{proposition}{Prop.}{Props.}
\crefname{proposition}{Prop.}{Props.}
\crefname{aAlgocflLine}{ln.}{lns.}
\Crefname{algocfline}{ln.}{lns.}
\crefname{section}{Sect.}{Sects.} 
\Crefname{section}{Sect.}{Sects.} 
\crefname{table}{Tab.}{Tabs.} 
\Crefname{table}{Tab.}{Tabs.} 
\crefname{figure}{Fig.}{Figs.} 
\Crefname{figure}{Fig.}{Figs.} 
\usepackage{tkz-base}
\usepackage{tikz}
\usetikzlibrary{shapes.geometric}
\usetikzlibrary{decorations.pathmorphing,trees,snakes,arrows,shapes,automata,petri}
\usetikzlibrary{arrows.meta, positioning}
\usepackage{paralist}
\usepackage{multicol}
\usepackage{booktabs}
\usepackage{enumitem}
\usepackage{multirow}
\usepackage{rotating}
\usepackage{etaremune}
\usepackage[ruled,linesnumbered,algo2e]{algorithm2e}

\SetCommentSty{mycommfont} 
\SetArgSty{textrm} 
\SetFuncArgSty{textrm} 
\renewcommand{\SetProgSty}[1]{\renewcommand{\ProgSty}[1]{\textnormal{\csname#1\endcsname{##1}}\unskip}}
\SetProgSty{textrm} 
\usepackage{marginnote}
\usepackage{mathtools}
\usepackage{mathabx}
\usepackage{adjustbox}
\usepackage{ifthen}
\usepackage[normalem]{ulem}
\usepackage{lineno}
\usepackage{soul}
\usepackage{floatrow}
\usepackage{listings}

\crefname{lstlisting}{Listing}{Listings}
\usepackage{lipsum}
\usepackage{makecell}
\usepackage{diagbox}
\usepackage[scientific-notation=false,group-separator={,}]{siunitx}
\usepackage{microtype}
\usepackage{footmisc}
\usepackage{wrapfig}

\usepackage[ddmmyyyy]{datetime}
\usepackage{scrextend}
\usepackage{textcase}
\usepackage{setspace}

\usepackage{amsmath}
\usepackage{pifont}

%% file: addon/macros.tex
\definecolor{specializedcliniccolor}{RGB}{0, 153, 0}
\definecolor{pharmacolor}{RGB}{0, 110, 175}
\definecolor{scriptcolor}{RGB}{237, 237, 237}
\newcommand{\Actor}[1]{\textrm{\MakeTextLowercase{#1}}}
\newcommand{\Compo}[1]{\textnormal{\texttt{#1}}}
\newcommand{\Activ}[1]{\textrm{#1}}
\def\Org{\ensuremath{\mathscr{O}}}

\def\RCoefficent{\ensuremath{R^2}}
\newcommand{\RCoefficentF}[0]{\ensuremath{%
		 1-\frac{\sum_{i=1}^{k}(y_i- \hat{y_i})^2}{\sum_{i=1}^{k}(y_i- \overline{y_i})^2}}}
\def\Rlin{\ensuremath{R^2_\textrm{lin}}}
\def\Rlog{\ensuremath{R^2_\textrm{log}}}

\def\Slope{\ensuremath{\widehat{\mathcal{\beta}}}}
\newcommand{\SlopeF}[0]{\ensuremath{%
\frac{\sum_{i=1}^k\left(x_i-\bar{x}\right)\left(y_i-\bar{y}\right)}{\sum_{i=1}^k\left(x_i-\bar{x}\right)^2}}}

\def\OrgU {\ensuremath{\widehat{\mathcal{O}}}} 

\def\LPrv {\ensuremath{\mathcal{P}}}
\def\SecM {\ensuremath{\mathcal{M}}}   
\def\LPrvS {\ensuremath{\widehat{\mathcal{P}}}}  
\def\SecMS {\ensuremath{\widehat{\mathcal{M}}}}  
\def\CStor {\ensuremath{\textsl{Cases}}}  
\def\Case {\ensuremath{\textsl{C}}}  
\def\Evt {\ensuremath{e}}  
\def\EvtU {\ensuremath{\widehat{\mathrm{E}}}}  
\def\CasP {\ensuremath{\Case^{\LPrv}}}  
\def\CId {\ensuremath{\textrm{\textsl{iid}}}}  
\def\CIdF {\ensuremath{\mathfrak{iid}}}  
\def\CIdU {\ensuremath{\widehat{\mathrm{IID}}}}  
\def\CasU {\ensuremath{\widehat{\mathrm{C}}}}  
\def\CasF {\ensuremath{\mathcal{C}}} 
\def\CIdS {\ensuremath{\textsl{IIDs}}}  
\def\CIDMap {\ensuremath{\textsl{IIDMap}}} 
\def\LPrvMap {\ensuremath{\textsl{PMap}}} 
\def\Merge {\ensuremath{\bigoplus}}
\def\SegSize {\ensuremath{\textsl{seg\_size}}} 
\def\Rep {\ensuremath{\textsl{rep}}} 
\def\SegSet {\ensuremath{\textsl{SegSet}}} 
\def\PrevEvt {\ensuremath{\textsl{Prev}}} 
\def\PostEvt {\ensuremath{\textsl{Post}}} 
\def\Segm {\ensuremath{S}} 
\def\ActS {\ensuremath{\widehat{\mathrm{A}}}}  
\def\ActF {\ensuremath{\mathfrak{act}}}  
\def\TimeF {\ensuremath{\mathfrak{tms}}}  
\def\AuthPtP{\ensuremath{\textsl{al}}}
\def\EvtL{\ensuremath{\mathrm{L}}}
\newcommand{\Confine}{\textsc{Cnf}}

\def\Inst{\textbf{instance}}

\def\DoYieldCases {\ensuremath{\textsl{do\_yield\_cases}}}

\newcommand{\EqRel}[1]{\ensuremath{\overset{#1}{\sim}}}
\newcommand{\EqFun}[1]{\ensuremath{\overset{#1}{\to}}}
\newcommand{\EqCls}[1]{\ensuremath{\EvtU^{\sim}_{#1}}}

\DeclareMathOperator*{\argmax}{argmax\phantom{.}}
\DeclareMathOperator*{\argmin}{argmin\phantom{.}}

\def\PMParams{\ensuremath{\hbar_1,\ldots,\hbar_k}}
\def\PMFunc{\ensuremath{\underset{\PMParams}{\textit{prom}}}}

\newcommand{\Kpub}{\ensuremath{\mathcal{K}_\textrm{pub}}}
\newcommand{\Kpriv}{\ensuremath{\mathcal{K}_\textrm{priv}}}
\newcommand{\Ksym}{\ensuremath{\mathcal{K}_\textrm{sym}}}

\newcounter{asm}
\renewcommand{\theasm}{A\arabic{asm}}
\newcommand\Asm[0]{%
	\refstepcounter{asm}%
	\textbf{Assumption \theasm}}
\crefname{asm}{Asm.}{Asms.}
\Crefname{asm}{Asm.}{Asms.}

%% file: addon/utils.tex
\SetKw{Trigger}{trigger}
\SetKw{Yield}{yield}
\SetKw{Return}{return}
\SetKwProg{UponEv}{upon event}{ do}{}
\SetKwProg{Upon}{upon}{ do}{}
\SetKwInput{Implements}{Implements}
\SetKwInput{Uses}{Uses}
\SetKwInput{Hyperparameters}{Hyperparameters}
\SetKwProg{Fn}{Function}{ is}{}

\newcommand{\Mesg}[1]{\ensuremath{\textrm{\textsc{#1}}}}
\newcommand{\SuchThat}[1]{~\textrm{\textbf{such that}}~{#1}}
\newcommand{\To}[1]{~\textrm{\textbf{to}}~{#1}}
\newcommand{\Call}[1]{\textnormal{\texttt{#1}}}

\DeclareDocumentCommand{\crefalgln}{ o m }{%
	\IfNoValueF{#1}{\cref{#1}, }
	{\hyperref[#2]{ln.~\ref{#2}}}
}%


%% file: content/abstract.tex
Process mining techniques enable organizations to gain insights into their business processes through the analysis of execution records (event logs) stored by information systems. 
While most process mining efforts focus on intra-organizational scenarios, many real-world business processes span multiple independent organizations. 

Inter-organizational process mining, though, faces significant challenges, particularly regarding confidentiality guarantees: The analysis of data can reveal information that the participating organizations may not consent to disclose to one another, or to a third party hosting process mining services. 
To overcome this issue, this paper presents CONFINE, an approach for secrecy-preserving inter-organizational process mining. 
CONFINE leverages Trusted Execution Environments (TEEs) to deploy trusted applications that are capable of securely mining multi-party event logs while preserving data secrecy. We propose an architecture supporting a four-stage protocol to secure data exchange and processing, allowing for protected transfer and aggregation of unaltered process data across organizational boundaries. To avoid out-of-memory errors due to the limited capacity of TEEs, our protocol employs a segmentation-based strategy, whereby event logs are transmitted to TEEs in smaller batches.
We conduct a formal verification of our approach's correctness alongside a security analysis on the guarantees provided by the TEE core. 
We test our implementation using real-world and synthetic data to assess memory usage.
%
Our experiments show that an incremental approach to segment processing in discovery and conformance checking is preferable over non-incremental strategies as the former maintains memory usage trends within a narrow range at runtime, whereas the latter exhibit high peaks towards the end of the execution. Furthermore, our results confirm that our prototype can handle real-world workloads without out-of-memory failures. The scalability tests reveal that memory usage grows logarithmically as the event log size increases. Memory consumption grows linearly with the number of provisioning organizations, indicating potential scalability limitations and opportunities for further optimizations.

%% file: content/introduction.tex
In the management landscape, the notion of \emph{business process} (or process for short) conceptualizes and gives structure to the interplay of events, activities, and decision points to achieve organizations' operational objectives~\cite{Dumas.etal/2018:FundamentalsofBPM}.
With the rise of process automation, information systems' data have become a precious information source to enhance processes. 
The discovery, monitoring, and improvement of processes via the extraction of knowledge from chronological execution records is the focus of the research field of process mining~\cite{DBLP:journals/tmis/Aalst12}. Input data is typically stored in structures known as \textit{event logs}~\cite{DBLP:books/sp/22/WeerdtW22}, which collect multiple process instance execution traces (named \emph{cases}) as sequences of instantaneous records (named \emph{events}).

To date, the vast majority of process mining contributions consider \textit{intra-organizational} settings, in which business processes are executed within individual organizations. However, organizations recognize the value of collaboration and synergy in achieving operational excellence. \textit{Inter-organizational} business processes involve several independent organizations cooperating to achieve a shared objective~\cite{van2011intra}. 
Through inter-organizational process mining, organizations can gain a deeper understanding of their shared business processes, which would be missed by only considering fragmented views of distinct executions~\cite{DBLP:journals/bise/OberdorfSWSMF23}. Such a comprehensive perspective supports the identification of bottlenecks entailed by the cross-organizational interactions, providing qualitative and quantitative insights into the effectiveness of the collaboration~\cite{DBLP:journals/peerj-cs/Morales-Sandoval21}. In the context of supply-chain management, this broader view enhances the analysis of how goods, information, and decisions circulate across multiple organizations, thereby supporting the optimization of logistic workflows, reducing operational costs, and improving delivery performance~\cite{ho2020supply}.
Additionally, inter-organizational process mining can reveal behavioral dependencies and structural mismatches arising from erroneous process enactment practices, thus being instrumental to root-cause analysis of failures. In the manufacturing industry, for example, it can identify the causes of recurrent production anomalies by uncovering incorrectly modeled dependencies between activities carried out across different organizational units~\cite{DBLP:journals/isf/TanWC16}.
Inter-organizational process mining also provides a shared ground for jointly verifying conformance with standards that span multiple organizational contexts and regulate cooperation~\cite{DBLP:journals/is/FdhilaKRR22}. This becomes particularly valuable in domains such as healthcare, where stringent regulations oversee the administration of patient pathways that involve multiple autonomous institutions~\cite{DBLP:journals/tase/LiuLZCZ23,DBLP:journals/csse/JeonKSK23}.

Despite the advantages of transparency, performance optimization, and benchmarking that companies can gain, inter-organizational process mining raises challenges that hinder its application. The major issue concerns secrecy. Companies are reluctant to share private information, as required to execute process mining algorithms with their partners~\cite{Liu.etal/ISF2009:ChallengesOpportunitiesCollaborativeBPM}. Indeed, letting sensitive operational data traverse organizational boundaries introduces concerns about data privacy, security, and compliance with internal regulations~\cite{muller2021trust}. To address this issue, the majority of research endeavors have focused thus far on the alteration of input data or of intermediate analysis by-products, with the aim to impede the reconstruction of the original information sources from the counterparts ~\cite{DBLP:journals/dke/FahrenkrogPetersenAW23,DBLP:conf/icpm/Fahrenkrog-Petersen19,elkoumy2020shareprom,elkoumy2020secure}. These preemptive solutions have the merit of neutralizing information leakage by malicious parties \textit{a priori}. Nevertheless, they entail an ex-ante information loss, thus compromising downstream process mining accuracy~\cite{DBLP:journals/dke/FahrenkrogPetersenAW23,DBLP:conf/icpm/Fahrenkrog-Petersen19}, or require the use of intermediary log representations, which restricts the range of applicable algorithms~\cite{elkoumy2020shareprom,elkoumy2020secure}.

To improve over the state of the art, we propose CONFINE, a novel approach and toolset aimed at enhancing collaborative information system architectures with secrecy-preserving process mining capabilities. 
%
Inspired by the vision of decentralized data ownership in Solid~\cite{DBLP:conf/www/MansourSHZCGAB16,DBLP:conf/icdcs/BasileCGK23}, our solution decentralizes data provenance across multiple autonomous provisioning organizations, each retaining ownership of their (partitions of the) event logs. Individual logs embody the organizations' own partial view on the shared inter-organizational process.

To secure information secrecy during the exchange and elaboration of data, our solution resorts to \emph{Trusted Execution Environments} (TEEs)~\cite{DBLP:conf/trustcom/SabtAB15}. These are hardware-secured contexts, which guarantee code integrity and data confidentiality before, during, and after their use. Owing to these characteristics, CONFINE lets information be securely transferred beyond an organization's borders. Therefore, computing nodes other than the information provisioners can aggregate and elaborate the original, unaltered process data in a secure, externally inaccessible vault.


%

The architecture of CONFINE supports a four-staged protocol:
\begin{inparaenum}[\itshape(i)\upshape]
	\item The initial exchange of preliminary metadata,
	\item the attestation of the mining entities,
	\item the protected transmission and secrecy-preserving merge of encrypted information segments among multiple organizations, \label{item:phase:datatransmission}
	\item the computation of process discovery and conformance checking algorithms on joined data in a hardware-secured enclave.
\end{inparaenum}
We evaluate CONFINE proving the soundness and completeness of our protocol in its most crucial phase of information transmission and merge \itshape(iii)\upshape. Also, we discuss the security properties brought by the integration of TEEs at the core of our approach against a set of security threats.


Since TEEs operate with limited sections of the main memory shielded from access by external entities (operating system included), exceeding these limits may expose the system to potential data leakage. Therefore, we endow our experiments with an analysis of memory usage.
Given its direct impact on data secrecy, we analyze how memory usage scales by increasing the number of 
\begin{inparaenum}[\itshape(i)\upshape]
	\item processed events,
	\item processed cases, and
	\item participating organizations.
\end{inparaenum}
The results indicate that memory consumption exhibit a logarithmic increase as the volume and length of process instances grow, thereby reducing the likelihood of out-of-memory conditions even with large amounts of processed data. However, when increasing the number of participating organizations, the experiments show a linear growth of memory consumption, thereby delineating an avenue for future optimization work.

In this article, we extend and enhance our earlier work \cite{DBLP:conf/caise/GorettiBBC24,DBLP:conf/icpm/Goretti0BC24} with the following main contributions:
\begin{inparaenum}[\itshape(i)\upshape]
     \item we formally define the core concepts underpinning our approach (\cref{sec:formal});
    \item we endow the description of our CONFINE protocol with a formal specification with pseudocode, and provide an in-depth exploration of the main components' behavior (\cref{alg:secm,alg:merge,alg:lprv,alg:segment} in \cref{sec:realization});
    \item we discuss the security guarantees provided by the TEE to frame and motivate its role within our solution (\cref{sec:discussion:security});
    \item %
    we demonstrate the soundness and completeness of the data-transmission and merge phases within the 
    CONFINE protocol (\cref{sec:correctness}); 
    \item we extend the performance assessment on memory usage by introducing novel scalability experiments (\cref{sec:evaluation:subsec:MemoryUsage,sec:eval:segmentsize}); 
    \item we discuss the practical implications inherent to a real-world deployment of CONFINE (\cref{sec:discussion:practicals}).
\end{inparaenum}


The remainder of this paper is as follows. \Cref{sec:backgroundRelated} provides an overview of the background and the related work. In \cref{sec:motivating}, we introduce a motivating use-case scenario in healthcare. In \cref{sec:formal}, we propose the formal notations characterizing our work. \Cref{sec:design} presents the functional viewpoint of CONFINE and \cref{sec:realization} describes the realization of CONFINE alongside its four-stage protocol. 
In the latter sections, we highlight the assumptions under which our solution operates as the discourse unfolds.
In \cref{sec:formalvalidation}, we report on formal validation of our approach's correctness and security. \Cref{sec:evaluation} presents an empirical validation of our implementation with tests on memory usage and scalability. In \cref{sec:discussion}, we discuss the practical implications and limitations, and outline future research directions
beginning with a revisit of the assumptions mentioned in \cref{sec:design,sec:realization}.
We conclude our work in \cref{sec:conclusion}.

%% file: content/relatedwork.tex
In what follows, we provide an overview of the background literature and relevant existing research, laying the foundation for our study.
\subsection{Background}
Here we introduce the theoretical concepts pertinent to the core areas of our work, i.e., process mining and trusted execution environments. 

\subsubsection{Process Mining}
\label{sec:background:processmining}
Process mining refers to a suite of techniques designed to uncover, monitor, and refine processes by analyzing event logs to offer analytical insights into business operations \cite{van2012process}. Events within an event log typically possess attributes providing descriptive details such as the executed task, its timestamp, and the involved human resources. Event logs organize events related to the same process instances into \emph{cases}. The process mining literature delineates three primary families of techniques: \emph{discovery}, \emph{conformance checking}, and \emph{enhancement} algorithms~\cite{DBLP:journals/tmis/Aalst12}. Discovery techniques generate a process model (i.e., a depiction of the process) from an event log without relying on any initial information~\cite{weijters2006process}. 
The \emph{HeuristicsMiner} is a seminal process discovery algorithm~\cite{weijters2006process}. It constructs a dependency graph by computing the frequency of directly subsequent occurrences between activities within the same cases and deriving dependency measures capturing their causal relations. It then filters relations below a significance threshold to remove noise, and employs the remaining ones to build a weighted graph. This graph is finally transformed into a \emph{causal net}, 
a process model supporting a basic set of core workflow patterns including concurrency and (short) loops~\cite{Aalst.etal/2003:WorkflowPatterns}.

Conformance checking algorithms verify whether a given event log aligns with an input process model and vice versa~\cite{vanderAalst2016}.
In this paper, we consider the \emph{DeclareConformance} algorithm as a noticeable example~\cite{DBLP:conf/bpm/DonadelloRMS22}. It verifies the compliance of an event log with a declarative process specification, i.e., a set of behavioral constraints that capture dependencies among activities. The algorithm first projects the event log into activity sequences on a per-trace basis and then systematically checks each constraint of the model against these sequences. Detected violations are used to compute a fitness score for each trace, which the algorithm subsequently aggregates into a global score representing the overall adherence of the event log to the declarative specification.%
Enhancement techniques extend or refine an existing process model using information gathered from the actual process recorded in an event log~\cite{DBLP:journals/tmis/Aalst12, DBLP:conf/simpda/YasminBS18}.

Conventionally, companies apply process mining techniques to analyze their internal business processes within \emph{intra-organizational} settings. 
However, the occurrence of scenarios where business processes span across multiple independent organizations, with each organization maintaining its own event log, is becoming increasingly common~\cite{Liu.etal/ISF2009:ChallengesOpportunitiesCollaborativeBPM}.
We refer to this application of process mining as \emph{inter-organizational process mining}.  
Inter-organizational process mining classifies interactions into two basic settings: \emph{collaboration} and \emph{exploiting commonality}. In collaboration, business processes are distributed among collaborators, who cooperate to achieve a shared objective. 
In the commonality exploitation setting, organizations share common processes and may compete, learn from each other, or exchange experiences, knowledge, and infrastructure. Hence, the involved organizations need not collaborate as each of them carries out the same tasks independently. 
In both inter-organizational scenarios, multiple data sources are involved, posing significant challenges in data aggregation and subsequent algorithm application~\cite{van2011intra}.

In this work, we concentrate on the privacy-preserving aspects of collaborative scenarios, recognizing organizations' need to share event logs across organizational boundaries while safeguarding data confidentiality. To tackle this challenge, our solution draws on the theoretical foundations of confidential computing, whose core concepts are elucidated as follows.

\subsubsection{Confidential Computing and Trusted Execution Environments}
\label{sec:background:tee}
Confidential computing encompasses a suite of techniques aimed at safeguarding data while it is in use. These methods leverage hardware-backed protection of highly sensitive computations within attested \emph{Trusted Execution Environments} (TEEs). According to the Confidential Computing Consortium (CCC)~\footnote{\href{https://confidentialcomputing.io}{\nolinkurl{confidentialcomputing.io}}. Accessed: December 2, 2025.}, TEEs are secure execution contexts that provide assurance regarding three key properties:
\begin{inparaenum}[]
\item (\textit{data confidentiality}) unauthorized entities cannot view data while it is in use within the TEE, \item  (\textit{data integrity}) unauthorized entities cannot add, remove, or alter data while it is in use within the TEE, and \item (\textit{code integrity}) unauthorized entities cannot add, remove, or alter code executing in the TEE~\cite{DBLP:conf/trustcom/SabtAB15}.
\end{inparaenum}
%
TEE technologies are diverse, and their features may differ depending on the specific implementation of the underlying Central Processing Unit (CPU). 

TEEs differ primarily in the extension of their trusted surface, that is, the hardware-enforced boundary defining the extent of isolation. Two main classes can be distinguished based on this criterion~\cite{DBLP:journals/ieeesp/JauernigSS20}. The first includes application-level (or process-level) TEEs, which protect a restricted portion of an application’s address space. In such architectures, implemented among others by Intel SGX%
\footnote{\href{https://www.intel.com/content/www/us/en/developer/tools/software-guard-extensions/overview.html}{\nolinkurl{software-guard-extensions/overview}}. Accessed: December 2, 2025.}
and ARM TrustZone%
\footnote{\href{https://www.arm.com/technologies/trustzone-for-cortex-a}{\nolinkurl{arm.com/technologies/trustzone}}. Accessed: December 2, 2025.}, only specific regions of memory explicitly designated as trusted are protected with confidentiality and integrity guarantees, whereas the remaining system components are treated as untrusted~\cite{antonino2021guardian}. The second class comprises virtual-machine level TEEs, such as AMD SEV%
\footnote{\href{https://www.amd.com/en/developer/sev.html}{\nolinkurl{amd.com/sev}}. Accessed: December 2, 2025.}
and Intel TDX,%
\footnote{\href{https://www.intel.com/content/www/us/en/developer/tools/trust-domain-extensions/overview.html}{\nolinkurl{intel.com/trust-domain-extensions/overview}}. Accessed: December 2, 2025.} which extend the isolation boundary to an entire guest virtual machine. These TEEs protect memory and CPU state belonging to the virtual machine from inspection or tampering by (even higher-privileged) external software~\cite{mei2024cabin}. 

Another distinguishing characteristic among TEE technologies lies in whether the memory regions allocated to the isolated environment are protected through hardware-based encryption. In TEE technologies such as Intel SGX, AMD SEV, and Intel TDX, the CPU encrypts and protects the integrity of data within the trusted domain. This feature relies on dedicated sections of a CPU capable of handling encrypted data within a reserved section of the main memory~\cite{costan2016intel}. This protected memory section is encrypted by the CPU using a random key derived at every power cycle. In contrast, technologies like ARM TrustZone mainly rely on access control layers between trusted and untrusted execution modes, with no hardware-based encryption involved.

A further essential feature characterizing certain TEE technologies, including Intel SGX, Intel TDX, and AMD SEV, is \emph{attestability}. Through this feature, the CPU can produce a cryptographic proof, typically referred to as \emph{attestation evidence}, to demonstrate the trusted status of the TEE to a local or remote verifier.
Following the Remote ATtestation procedureS (RATS) standard~\cite{birkholz2023rfc}, remote attestation protocols generally consist of the following steps:
\begin{inparaenum}[(1)]
\item The \emph{verifier} establishes a communication channel with the \emph{attester} (which may correspond to a trusted application or a virtual machine) and requests the generation of the attestation evidence.
\item The attester then instructs the CPU to produce the attestation evidence, which serves as cryptographic proof of its trusted state. In most protocols, the attestation evidence takes the form of a digital signature derived from hardware-bound secrets and includes a \emph{measurement}, namely a cryptographic hash representing the state of the virtual machine or the trusted application’s code. The attestation evidence is signed using a hardware-protected attestation key embedded in the processor. Depending on the TEE technology, the verifier may rely on the CPU manufacturer to validate this signature using the corresponding public endorsement credentials and to confirm the integrity of the reported measurement..
\item Upon receiving the attestation evidence, the verifier validates its authenticity by relying on an \emph{endorser} (typically the TEE manufacturer) responsible for providing the necessary public credentials and reference values. The verifier subsequently compares the reported measurement against one or more reference values to reach a trust decision. The successful completion of this process attests to the integrity and authenticity of the entity providing the attestation evidence.
\end{inparaenum}

Attestable TEE implementations enable trusted virtual machines and applications to embed custom data into the attestation evidence at the time of its generation (see step 2 above). This feature allows verifiers to apply tailored appraisal policies by evaluating the included information. Furthermore, this custom data field can serve to authenticate the trusted origin of cryptographic material, which may subsequently be leveraged to establish secure communication channels once the attestation evidence has been successfully verified~\cite{helble2021flexible}. While CPU manufacturers such as Intel and AMD provide web services supplying the validation material required to verify attestation evidence generated by their TEE-enabled processors, they delegate to system designers the responsibility of defining appropriate trust-decision policies and establishing secure communication channels~\cite{DBLP:conf/dais/MenetreyGKPFSR22}.

In this work, we consider application-level TEEs that provide hardware-backed memory encryption and support remote attestation. Accordingly, our contribution implicitly assumes TEEs endowed with these three essential characteristics.

\subsection{Related Work}
\label{sec:relatedWork}
Despite the recency of this research branch across process mining and collaborative information systems, scientific literature already includes noticeable contributions to inter-organizational process mining.
Before discussing specific contributions, we categorize existing secrecy preservation methodologies into three main classes, depending on whether they are based on 
\begin{inparaenum}[(I)]
    \item event log sanitization, 
    \item Secure Multiparty Computation (SMPC), and 
    \item TEE.
\end{inparaenum}
\begin{table}[b]
	\centering
	\caption{Comparison of secrecy-preserving approaches for process mining}
	\label{tab:comparisonTable}
	\resizebox{\textwidth}{!}{
		\begin{tabular}{@{}lcccc@{}}
			\toprule
			\textbf{Approach} & \textbf{Zero noise} & \textbf{Algorithm compatibility} &\textbf{Hardware independence} &\textbf{Computational control}\\ 
			\midrule
			Sanitization-based & \tikzxmark & \tikzcmark &\tikzcmark &\tikzxmark\\
			SMPC-based         & \tikzxmark  & \tikzxmark &\tikzcmark &\tikzcmark\\
			TEE-based          & \tikzcmark & \tikzcmark & \tikzxmark &\tikzcmark\\
			\bottomrule
		\end{tabular}
	}
\end{table}
In \cref{tab:comparisonTable}, we summarize and compare the pros and cons of each approach.
\begin{inparaenum}[(I)]
	\item {\itshape Sanitization-based approaches} apply obfuscation mechanisms to transform event logs prior to their analysis. The main advantage of these approaches lies in preserving privacy while maintaining compatibility with existing process mining algorithms, as the sanitized output remains a full-fledged event log. However, they introduce noise that may distort analytical results, hindering case-level mining capabilities, and exert no control over the subsequent process mining computation.
	\item {\itshape SMPC-based approaches} employ cryptographic protocols that allow multiple organizations to jointly execute verifiable computations over encrypted data without revealing their original event logs to each other. These approaches ensure a high degree of control over the computation itself. Nevertheless, they typically limit the range of process mining algorithms supported and may introduce noise in the final results.
	\item {\itshape TEE-based approaches} leverage hardware-backed cryptographical methodologies to enable the secure execution of process mining algorithms within isolated environments. This paradigm achieves zero-noise processing (as data remains encrypted even during processing, avoiding distortions), control over the computation, and seamless compatibility with existing process mining algorithms, as TEEs allow the execution of Turing-complete code either as-is or with only minimal adaptations. However, unlike sanitization and SMPC-based approaches, these solutions depend on specific hardware platforms, thereby imposing constraints on the underlying technology.
\end{inparaenum}

Next, we discuss existing research contributions, grouped according to the three aforementioned categories.

\begin{asparaenum}[(I)]
\item \textbf{Sanitization-based approaches.} Fahrenkrog-Petersen et al.~\cite{DBLP:journals/dke/FahrenkrogPetersenAW23,DBLP:conf/icpm/Fahrenkrog-Petersen19} introduce the PRETSA algorithms family: a set of event log sanitization techniques that perform step-wise transformations of prefix-tree event log representation into a sanitized output ensuring \emph{k-anonimization} and \emph{t-closeness}. Batista et al.~\cite{DBLP:journals/istr/BatistaMS22} present the k-PPPM algorithm, designed to ensure the protection of identities within an input event log. This objective is achieved through the clustering of the sub-logs corresponding to each individual and the subsequent approximation by selecting representative centroids. While these two approaches effectively minimize information loss, they introduce approximations within the original event log, which may compromise the exactness of process mining results or inhibit mining tasks. In contrast with the latter two techniques, our research proposes a secure architecture wherein secure computational vaults collect event logs devoid of upstream alterations and protect them at runtime, thus generating results derived directly from the original information source. 

\item  \textbf{SMPC-based approaches.} Elkoumy et al.~\cite{elkoumy2020shareprom,elkoumy2020secure} present Shareprom, an SMPC-based approach. Akin to our work, their solution offers a means for independent entities to execute process mining algorithms in inter-organizational settings while safeguarding their proprietary input data from exposure to external organizations operating within the same context. Shareprom's functionality, though, is confined to the execution of operations involving event log abstractions~\cite{FederatedPM2021} represented as directed acyclic graphs, which the organizations employ as intermediate pre-elaboration to be fed into SMPC routines~\cite{SMPC2015}. The final outputs are then modified through the addition of Laplacian noise. This approach empowers data owners to retain a degree of control over the process mining techniques applied to their event logs. However, as the authors note, the reliance on the graph representation introduces constraints that may be limiting in various process mining scenarios. In particular, the resulting graph abstraction is compatible with only a subset of discovery and conformance checking algorithms. In contrast, our approach supports the encrypted transmission and analysis of event logs without relying on any intermediate abstraction. This enables the faithful reproduction of existing process mining techniques within computationally isolated environments, thereby providing greater flexibility.   
%
%
%
%
\item \textbf{TEE-based approaches.} The seminal work of M{\"u}ller et al.~\cite{muller2021process} is the first to envision the adoption of TEEs within the process mining domain. In their paper, the authors propose a conceptual architecture in which each participating organization encrypts its local event log, publishes it to a public repository, and delegates the mining computation to a remote trusted application executed inside a TEE. The interaction between miners and the public repository is coordinated through a blockchain-based control plane, while a Secret Management System (SMS) running in a separate TEE is responsible for storing the decryption keys and releasing them to the miner after successful attestation.
While this work represents a foundational contribution and an essential source of inspiration for our research, it leaves several challenges unaddressed that our work explicitly aims to address.

\end{asparaenum}

A key distinction from the work of M{\"u}ller et al.\ concerns the event log provisioning protocol, which constitutes a key contributions of our work. In~\cite{muller2021process}, a specific provisioning strategy was not designed, and event logs are transmitted and processed as single and monolithic datasets. This can lead to confidentiality breaches as large event logs may exceed the limited memory available within TEEs and result in crashes or untrusted memory allocations~\cite{DBLP:conf/ecrts/YuhalaGMSKF25,DBLP:journals/pvldb/SunWL021}. To overcome these issues, our research introduces a segmentation-based provisioning protocol in which event logs are partitioned into smaller batches that are individually transmitted and merged within the TEE. Building on this segmentation mechanism, we further propose an incremental processing model that allows the miner running within the TEE to update a partial result as new segments arrive from different organizations during transmission. Therefore, the process mining output converges to the same result as a full merged log computation, while avoiding the risks associated with memory peaks that may occur with monolithic processing. 
From an architectural standpoint, our approach reduces the attack surface inherent in that seminal work, which arises from the inclusion of utility components operating outside organizational boundaries. Although designed to store encrypted data, the event log repository proposed in the research of M{\"u}ller et al.\ remains vulnerable to malicious inspection or cryptanalysis attacks due to its public nature. Similarly, the SMS running within the TEE may entail vulnerabilities. Since TEEs are inherently non-persistent environments, any data residing in their volatile memory is lost once the TEE's machine is turned off. Consequently, the previous approach would require additional mechanisms to anchor the cryptographic material in persistent memory outside the TEE. However, this practice inevitably extends the trusted boundary and may introduce rollback attacks~\cite{DBLP:conf/dsn/BrandenburgerCL17}, in which an adversary with control over the host system can revert a sealed state to a previous version, thereby re-enabling outdated or revoked authorizations. To overcome these limitations, we propose a peer-to-peer model in which miners operating within a TEE retrieve event logs directly from native organizations, which, in turn, act as decentralized data sources~\cite{DBLP:conf/www/MansourSHZCGAB16} that independently handle the encrypted data transmission. Additionally, within every execution of our protocol, we generate ephemeral encryption keys on demand and use them solely for the current transmission session, without requiring any permanent external key storage.
 

%% file: content/motivating.tex
\def\Talice{\Activ{PH, COPA, OD, \color{pharmacolor}DOR, PDL, SD, \color{black}RD, AD, TP, \color{specializedcliniccolor}PAFH, PIA, PT, VRT, TPB, \color{black}RPB, DPH, PCD, DP}}
\def\Tbob{\Activ{PH, COPA, OD, \color{pharmacolor}DOR, PDL, SD, \color{black}RD, AD, PRTA, PCD, DPH, DP}}
\def\TaliceUncolored{\Activ{PH, COPA, OD, DOR, PDL, SD, RD, AD, TP, PAFH, PIA, PT, VRT, TPB, RPB, DPH, PCD, DP}}
\def\TbobUncolored{\Activ{PH, COPA, OD, DOR, PDL, SD, RD, AD, PRTA, PCD, DPH, DP}}
\begin{figure}[t]
\centering
\includegraphics[width=\linewidth]{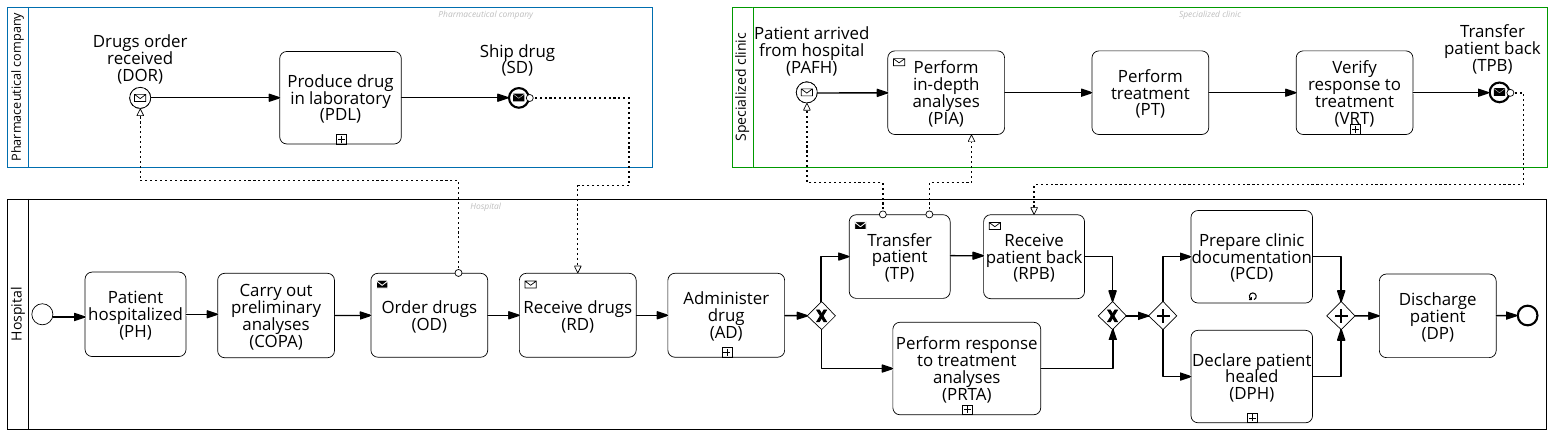}
\caption{A BPMN collaboration diagram of a simplified healthcare scenario}
\label{fig:BPMN_Healthcare}
\end{figure}
\begin{table}[b]
  \caption[Event log]{Events from cases 312 (Alice) and 711 (Bob) recorded by the \Actor{Hospital}, the \Actor{Specialized clinic}, and the \Actor{Pharmaceutical company}}
 \centering
    \begin{minipage}[c]{0.55\linewidth}
       \resizebox{0.99\textwidth}{!}{
        \begin{tabular}{ll|l|l|l|l|l|l|}
        
        & \multicolumn{7}{c}{\large{\textbf{Hospital}}\vspace{0.2em}}                                               \\ \cline{2-4}\cline{6-8}
        \multicolumn{1}{c|}{} & {Case} & {Timestamp} & Activity &  & {Case} & {Timestamp} & Activity \\ \cline{2-4} \cline{6-8}
        \multicolumn{1}{c|}{$\Evt_1$} & {312}         & {2022-07-14T10:36}           & PH    & $\Evt_{11}$ & {312}         & {2022-07-15T22:06}           & TP     \\ \cline{2-4} \cline{6-8}
        \multicolumn{1}{c|}{$\Evt_2$} & {312}         & {2022-07-14T16:36}           & COPA  & $\Evt_{12}$ &  {711}         & {2022-07-16T00:55}           & PRTA          \\ \cline{2-4} \cline{6-8} \cline{6-8}
        \multicolumn{1}{c|}{$\Evt_3$} & {711}         & {2022-07-14T17:21}           & PH    & $\Evt_{13}$ &  {711}         & {2022-07-16T01:55}           & PCD            \\ \cline{2-4} \cline{6-8}
        \multicolumn{1}{c|}{$\Evt_4$} & {312}         & {2022-07-14T17:36}           & OD    & $\Evt_{14}$ &  {711}         & {2022-07-16T02:55}           & DPH           \\ \cline{2-4} \cline{6-8}
        \multicolumn{1}{c|}{$\Evt_5$} & {711}         & {2022-07-14T23:21}           & COPA  & $\Evt_{15}$ &  {711}         & {2022-07-16T04:55}           & DP          \\ \cline{2-4} \cline{6-8}
        \multicolumn{1}{c|}{$\Evt_6$} & {711}         & {2022-07-15T00:21}           & OD    & $\Evt_{16}$ &   {312}         & {2022-07-16T07:06}           & RPB        \\ \cline{2-4} \cline{6-8}
        \multicolumn{1}{c|}{$\Evt_7$} & {711}         & {2022-07-15T18:55}           & RD    & $\Evt_{17}$ &   {312}         & {2022-07-16T09:06}           & DPH          \\ \cline{2-4} \cline{6-8}
        \multicolumn{1}{c|}{$\Evt_8$} & {312}         & {2022-07-15T19:06}           & RD    & $\Evt_{18}$ &   {312}         & {2022-07-16T10:06}           & PCD         \\ \cline{2-4} \cline{6-8}
        \multicolumn{1}{c|}{$\Evt_9$} & {711}         & {2022-07-15T20:55}           & AD    & $\Evt_{19}$ &   {312}        & {2022-07-16T11:06}           & DP            \\ \cline{2-4} \cline{6-8}
        \multicolumn{1}{c|}{$\Evt_{10}$} & {312} & {2022-07-15T21:06}           & AD \\ \cline{2-4}
        \end{tabular}%
        \label{tab:partitions}
        }
    \end{minipage}%
    \hspace{0.004\textwidth}%
    \begin{minipage}[c]{0.33\linewidth}
        \resizebox{\textwidth}{!}{
            \raisebox{2\baselineskip}{%
                \begin{tabular}{ll!{\color{pharmacolor}\vline}l!{\color{pharmacolor}\vline}l!{\color{pharmacolor}\vline}}\arrayrulecolor{pharmacolor}
                & \multicolumn{3}{c}{\large{\textbf{Pharmaceutical company}}\vspace{0.2em}}                                               \\ \cline{2-4}
                \multicolumn{1}{c|}{}&{HospitalCaseID} & {Timestamp} & Activity \\ \cline{2-4}
                \multicolumn{1}{c|}{$\Evt_{20}$} &312          & {2022-07-15T09:06}           & DOR \\ \cline{2-4}
                \multicolumn{1}{c|}{$\Evt_{21}$} &711         & {2022-07-15T09:30}           & DOR \\ \cline{2-4}
                \multicolumn{1}{c|}{$\Evt_{22}$} &312         & {2022-07-15T11:06}           & PDL \\ \cline{2-4}
                \multicolumn{1}{c|}{$\Evt_{23}$} &711         & {2022-07-15T11:30}           & PDL \\ \cline{2-4}
                \multicolumn{1}{c|}{$\Evt_{24}$} &312         & {2022-07-15T13:06}           & SD  \\ \cline{2-4}
                \multicolumn{1}{c|}{$\Evt_{25}$} &711          & {2022-07-15T13:30}           & SD  \\ \cline{2-4}
                \end{tabular}%
            }
        }
        \\[0.3em]
        \resizebox{\textwidth}{!}{
            \raisebox{2.5\baselineskip}{%
                \begin{tabular}{ll!{\color{specializedcliniccolor}\vline}l!{\color{specializedcliniccolor}\vline}l!{\color{specializedcliniccolor}\vline}}\arrayrulecolor{specializedcliniccolor}
        
                & \multicolumn{3}{c}{\large{\textbf{Specialized clinic}}\vspace{0.2em}}                                               \\\cline{2-4}
                \multicolumn{1}{c|}{} & {TreatmentID}  & {Timestamp} & Activity \\ \cline{2-4}
                \multicolumn{1}{c|}{$\Evt_{26}$}&312         & {2022-07-16T00:06}           & PAFH        \\ \cline{2-4}
                \multicolumn{1}{c|}{$\Evt_{27}$}&312          & {2022-07-16T01:06}           & PIA         \\ \cline{2-4}
                \multicolumn{1}{c|}{$\Evt_{28}$}&312          & {2022-07-16T03:06}           &  PT         \\ \cline{2-4}
                \multicolumn{1}{c|}{$\Evt_{29}$}&312          & {2022-07-16T04:06}           & VRT         \\ \cline{2-4}
                \multicolumn{1}{c|}{$\Evt_{30}$}&312          & {2022-07-16T05:06}           & TPB         \\ \cline{2-4}
                \end{tabular}%
            }
        }
    \end{minipage}
    \\[0.3em]
    \begin{minipage}[t]{\linewidth}
            \resizebox{\textwidth}{!}{%
                \begin{tabular}{p{18cm}}
                $T_{312}=${\textlangle}  \Talice \,{\textrangle}\\[0.5\baselineskip]
                $T_{711}=${\textlangle}  \Tbob \,{\textrangle}\\
                \end{tabular}%
                     \label{tab:trace}
            }
\end{minipage}
\end{table}

To provide a running example and motivating scenario for our investigation, we propose a simplified hospitalization process for the treatment of rare diseases. 
Our running example draws inspiration from two notable contributions in the domain of healthcare processes: the work of Aguilar et al.~\cite {DBLP:conf/biostec/AguilarCRPCGM08}, which illustrates modeling practices for representing healthcare workflows, and the work of Sampson et al.~\cite{sampson2015process}, which emphasizes coordination principles among independent healthcare institutions.
The process model of our running example is represented as a Business Process Model and Notation (BPMN) diagram in \cref{fig:BPMN_Healthcare}. BPMN is among the most widely adopted standards for process modeling, describing processes in terms of interacting activities, events, and control-flow constructs like decision and concurrent execution gateways.%
\footnote{\url{https://www.omg.org/spec/BPMN/2.0}. Accessed: December 2, 2025.}

Our scenario involves three cooperating organizations: a \Actor{Hospital}, a \Actor{Pharmaceutical company}, and a \Actor{Specialized clinic}. For the sake of simplicity, we describe the process through two cases, recorded by the information systems as in \cref{tab:partitions}. 
Each patient in the hospital is associated with an id which would be the identifier of the case in the hospital log.
Alice's journey 
begins when she enters the hospital for the preliminary examinations (patient hospitalized, \Activ{PH}). The \Actor{Hospital} then places an order for the drugs (\Activ{OD}) to the \Actor{Pharmaceutical company} for treating Alice's specific condition. Afterward, the \Actor{Pharmaceutical company} acknowledges that the drugs order is received (\Activ{DOR}), proceeds to produce the drugs in the laboratory (\Activ{PDL}), and ships the drugs (\Activ{SD}) back to the \Actor{Hospital}. Upon receiving the medications, the \Actor{Hospital} administers the drug (\Activ{AD}), and conducts an assessment to determine if Alice can be treated internally. If specialized care is required, the \Actor{Hospital} transfers the patient (\Activ{TP}) to the \Actor{Specialized clinic}. When the patient arrives from the hospital (\Activ{PAFH}), the \Actor{Specialized clinic} performs in-depth analyses (\Activ{PIA}) and proceeds with the treatment (\Activ{PT}). Once the \Actor{Specialized clinic} had completed the evaluations and verified the response to the treatment (\Activ{VRT}), it transfers the patient back (\Activ{TPB}). The \Actor{Hospital} receives the patient back \Activ(RPB) and prepares the clinical documentation (\Activ{PCD}). If Alice has successfully recovered, the \Actor{Hospital} declares the patient as healed (\Activ{DPH}). When Alice's treatment is complete, the \Actor{Hospital} discharges the patient (\Activ{DP}). 
Bob 
enters the \Actor{Hospital} a few hours later than Alice. His hospitalization process is similar to Alice's. However, he does not need specialized care, and his case is only treated by the \Actor{Hospital}. Therefore, the \Actor{Hospital} performs the response to treatment analyses (\Activ{PRTA}) instead of transferring him to the \Actor{Specialized clinic}. 

Both the National Institute of Statistics of the country in which the three organizations reside and the University that hosts the hospital wish to uncover information on this inter-organizational process for reporting and auditing purposes~\citep{Jans.Hosseinpour/IJAIS2019:ActiveLearningProcessMiningForAuditing} via process analytics. The involved organizations share the urge for such an analysis and wish to be able to repeat the mining task also in-house. 
The \Actor{Hospital}, the \Actor{Specialized clinic}, and the \Actor{Pharmaceutical company} have a partial view of the overall unfolding of the inter-organizational process as they record the events stemming from the parts of their pertinence. 

In \cref{tab:partitions}, we show Alice and Bob's cases (identified by the \textbf{312} and \textbf{711} codes respectively) recorded by the  by the \Actor{Hospital} (i.e., $T^H_{312}$ and $T^H_{711}$), the \Actor{Specialized clinic} (i.e., $T^S_{312}$ and $T^S_{711}$), and the \Actor{Pharmaceutical company} (i.e., $T^C_{312}$ and $T^C_{711}$). The \Actor{Hospital} stores the identifier of these cases in the \textit{case} id attribute of its event log. Differently, the \Actor{Specialized clinic} and the \Actor{Pharmaceutical company} employees a different case denomination and stores the cross-organizational identifiers in other attributes (\textit{`TreatmentID}' and \textit{`HospitalCaseID}' respectively).  
 The partial traces of the three organizations are projections of the two combined ones for the whole inter-organizational process: $T_{312}=$\textlangle{}{\TaliceUncolored}\textrangle{} and $T_{711}=$\textlangle{}{\TbobUncolored}\textrangle{}. 
Results stemming from the analysis of the local cases would not provide a full picture. Data should be merged. However, to preserve the privacy of the people involved and safeguard the confidentiality of the information, the involved organizations \textit{cannot give open access to their traces to other organizations}. The diverging interests (being able to conduct process mining on data from multiple sources without giving away the local event logs in-clear) motivate our research.
 

%% file: content/notation.tex
Thus far, we have intuitively introduced the concepts of events, cases, event logs and related aspects.
In this section, we delve deeper into these notions from a formal standpoint. The definitions provided henceforward pave the ground for properties and aspects to which we will resort in the design and realization of our solution.

\begin{definition}[Event]\label{def:evt}
	Let $\EvtU \ni \Evt$ be a finite non-empty set of symbols. We refer to $\Evt$ as \emph{event} and to $\EvtU$ as the \emph{universe of events}.
	An \emph{attribute} is a function having the universe of events as its domain.
\end{definition}
%
Referring to the example of Alice and Bob's hospitalization process (see \cref{sec:motivating}), each event denoted as $\Evt_i$ with $1 \leq i \leq 30$ in \cref{tab:partitions} is contained in the universe of events $\EvtU$.

Due to the general nature of elements that can be linked to events via attribute functions (in the example, e.g., timestamps and activity labels), we leave the range unspecified, akin to the definition of finite sequence as a function in \cite{Mendelson/2015:IntroductionMathematicalLogic}. 
We assume the following attributes as mandatory as in the classical process mining literature~\cite{Aalst/2016:ProcessMiningBook:DataScienceinAction}.
\begin{assumption}[Mandatory attributes]\label{post:mandatoryattributes}
	The following attributes are total on the universe of events $\EvtU$:
	\begin{inparadesc}
		\item[instance identifier] (or {\CId} for short), a total surjective function $\CIdF : \EvtU \to \CIdU$, where $\CIdU$ is a finite non-empty set we name \emph{instance set};
		\item[activity label\textnormal{,}] a total function $\ActF : \EvtU \to \ActS$, where $\ActS$ is a finite non-empty set we name \emph{activity set};
		\item[timestamp\textnormal{,}] a total function $\TimeF : \EvtU \to \mathbb{N}$, where $\mathbb{N}$ is the set of positive integers.
	\end{inparadesc}
\end{assumption}
%
In the context of our motivating scenario (see \cref{tab:partitions}), 
$\CIdF(\Evt_1) = 312$. This instance identifier serves the purpose of collating the events associated with Alice's hospitalization across the three organizations. Furthermore, 
{$\ActF(\Evt_1)$} and {$\TimeF(\Evt_1)$} map $\Evt_1$ to its activity label (i.e., \texttt{PH}) and timestamp (i.e., \texttt{2022-07-14T10:36}), respectively. 
For the sake of readability, we henceforth format the timestamps in our examples using the ISO 8601 string standard. Nonetheless, they stem from an integer representation (e.g., UNIX time), as per the above definition.

\begin{definition}[Event log]\label{def:evt:log}
	Given the universe of events $\EvtU$ as per \cref{def:evt}, let ${\preceq \;\subseteq\; \EvtU \times \EvtU}$ be a total-order relation defined over $\EvtU$.
	The pair $\EvtL = \left( \EvtU, \preceq \right)$ is an \emph{event log}.
\end{definition}
In our example, 
the event log $\EvtL = \left(\EvtU,\preceq\right)$ is the totally ordered set consisting of all the events $\Evt_1, \ldots, \Evt_{30}$ recorded by the \Actor{Hospital}, the \Actor{Pharmaceutical company}, and the \Actor{Specialized clinic}. We assume that the total order of events can be reconstructed based on the timestamp associated with events, which we express with the following postulate.
\begin{assumption}[Timestamp as an order isomorphism]\label{post:order:timestamp}
	Given an event log $\EvtL = \left( \EvtU, \preceq \right)$, the timestamp attribute $\TimeF: \EvtU \to \mathbb{N}$ is an \textit{order isomorphism}, i.e., given $\Evt, \Evt' \in \EvtU$ with $\Evt \neq \Evt'$, $\Evt \preceq \Evt'$ if and only if $\TimeF(\Evt) \leqslant \TimeF(\Evt')$.
\end{assumption}
Examining the events $\Evt_{30}$ and $\Evt_{16}$ in our motivating scenario, we can deduce from the order relation established based on their timestamps (i.e., \texttt{2022-07-16T05:06} and \texttt{2022-07-16T07:06}, respectively) that $\Evt_{30}$ precedes $\Evt_{16}$, i.e., ${\Evt_1}\preceq{\Evt_{30}}$. This rationale holds true for any generic pair of events selected from the event log.

\begin{definition}[Process mining function]\label{def:process:mining}
	Given an event log $\EvtL$, a \textit{process mining function} $\PMFunc$ is a function having as its arguments $\EvtL$ and a tuple of hyperparameters $\PMParams$, with $k \in \mathbb{N}$.
\end{definition}
We leave the above definition purposefully underspecified to cater for the width of process mining techniques ranging over discovery, conformance checking, and process enhancement. Those categories all have in common that the event log is the input, whereas the hyperparameters and the output vary. A process discovery technique such as the \emph{heuristic miner}, e.g., takes a set of thresholds to pruning out noise as its hyperparameters and returns a heuristic net~\cite{weijters2006process}. A consequence of covering process mining techniques under the umbrella of a function is that we assume them to be deterministic.

In our setting, events may be recorded by different entities. We name these entities \emph{provisioners} and formally define them as follows.
\begin{definition}[Provisioner]\label{def:provisioner}
	Let $\LPrvS = \{ \LPrv_1, \ldots, \LPrv_n \}$ be a non-empty finite set of entities (with $n \in \mathbb{N}$) named \emph{provisioner}s. 
    Given the universe of events $\EvtU$ as per \cref{def:evt}, $\EqFun{\LPrv}$ is a total surjective function having $\EvtU$ as the domain and $\LPrvS$ as its range. Function $\EqFun{\LPrv}$ defines an equivalence relation $\EqRel{\LPrv}$ on $\EvtU$ (i.e., $\Evt \EqRel{\LPrv} \Evt'$ iff $\Evt \EqFun{\LPrv} \LPrv_i$ and $\Evt' \EqFun{\LPrv} \LPrv_i$ for some $\LPrv_i \in \LPrvS$, $1 \leq i \leq n$) and hence partitions $\EvtU$ into equivalence classes we denote with $\EqCls{\LPrv_i}$ for any $\LPrv_i \in \LPrvS$.
\end{definition}
In the depicted scenario, we introduce three provisioners: the \Actor{Hospital}, the \Actor{Pharmaceutical company}, and the \Actor{Specialized clinic}, henceforth referred to as $\LPrv_\textrm{Hsp}$, $\LPrv_\textrm{PhC}$, and $\LPrv_\textrm{SpC}$. Events $\Evt_{1}$ and $\Evt_{19}$ in \cref{tab:trace} fall within the same equivalence class $\EqCls{\LPrv_\textrm{Hsp}}$ as they are both recorded by the \Actor{Hospital}. Consequently, the relation $\Evt_1 \EqRel{\LPrv} \Evt_{19}$ holds true. Conversely, $\Evt_{1}$ and $\Evt_{30}$ are recorded by the \Actor{Hospital} and the \Actor{Specialized clinic}, respectively, placing them in distinct equivalence classes, namely $\EqCls{\LPrv_\textrm{Hsp}}$ and $\EqCls{\LPrv_\textrm{SpC}}$. Therefore, the relation $\Evt_1 \EqRel{\LPrv} \Evt_{30}$ does not hold.

\begin{definition}[Log partition]\label{def:partition}
	Given an event log $\EvtL = \left( \EvtU, \preceq \right)$ as per \cref{def:evt:log}, provisioner $\LPrv \in \LPrvS$ and the equivalence class 
	$\EqCls{\LPrv} \subseteq \EvtU$
	as per \cref{def:provisioner}, a \emph{log partition} is a pair $\EvtL_\LPrv = \left( \EqCls{\LPrv}, \preceq_\LPrv \right)$, where $\preceq_\LPrv$ is a total order restricting $\preceq$ on $\EqCls{\LPrv}$, i.e., $\;\preceq_\LPrv$ is $(\preceq) \cap \left(\EqCls{\LPrv} \times \EqCls{\LPrv}\right)$.
	
\end{definition}

In our setting, we posit that each organization independently maintains its own log partition. The events  $\Evt_{1}, \Evt_{2}, \ldots, \Evt_{19}$ in the  $\EqCls{\LPrv_\textrm{Hsp}}$ equivalence class are contained in the \Actor{Hospital}'s log partition  $\left( \EqCls{\LPrv_\textrm{Hsp}}, \preceq_{\LPrv_\textrm{Hsp}} \right)$. Similarly, the totally ordered sets defined over $\EqCls{\LPrv_\textrm{PhC}} = \left\{\Evt_{20}, \Evt_{21}, \ldots, \Evt_{25} \right\}$ and $\EqCls{\LPrv_\textrm{SpC}} = \left\{ \Evt_{26}, \Evt_{27}, \ldots, \Evt_{30} \right\}$ are the \Actor{Pharmaceutical company} and the \Actor{Specialized clinic}'s log partitions, respectively -- i.e., $\left( \EqCls{\LPrv_\textrm{PhC}}, \preceq_{\LPrv_\textrm{PhC}} \right)$ and  $\left( \EqCls{\LPrv_\textrm{SpC}}, \preceq_{\LPrv_\textrm{SpC}} \right)$.

We recall that the restriction of a totally ordered set (like the event log) is a total order. Therefore, the log partition is a totally ordered set, too, and hence an event log. In the remainder of this paper, we shall name as \textbf{sublog} a restriction of an event log (omitting the reference to the event log whenever clear from the context). For instance, since $\Evt_1 \preceq \Evt_2$, then $\Evt_1~\preceq_{\LPrv_\textrm{Hsp}} \Evt_2$.

\begin{definition}[Segment]\label{def:segment}
	Given a log partition $\left( \EqCls{\LPrv}, \preceq_\LPrv \right)$ 
	defined over the equivalence class $\EqCls{\LPrv}$ by provisioner $\LPrv \in \LPrvS$ as per \cref{def:provisioner,def:partition}, 
	and a subset $\{\CId_1, \ldots, \CId_k\} \subseteq \CIdU$ (see~\cref{post:mandatoryattributes}) with $1 \leq k \leq |\CIdU|$, $k \in \mathbb{N}$,
	a \emph{segment} over $\{\CId_1, \ldots, \CId_k\}$
	is a pair
	$\Segm = \left( \EqCls{\Segm}, \preceq_\Segm \right)$ 
	wherein events
	$\Evt_\Segm \in \EqCls{\Segm} \subseteq \EqCls{\LPrv}$
	are such that $\CIdF(\Evt_\Segm) \in \{\CId_1, \ldots, \CId_k\}$
	and $\;\preceq_\Segm$ is a total order restricting $\preceq_\LPrv$ on $\EqCls{\Segm}$, i.e.,
	$(\preceq_\LPrv) \cap \left(\EqCls{\Segm} \times \EqCls{\Segm}\right)$.
	
\end{definition}
In light of the above, the segment is a totally ordered set, too. Therefore, a segment is an event log (or sublog, depending on the context). Also, notice that a segment can be the whole log partition if the given set of {\CId}s contains all the {\CId}s to which the events in the log partition map. Finally, notice that all events that map to a given $\CId$ in a log partition are in the same segment.
In the context of our example, e.g., consider the \Actor{Pharmaceutical company}'s log partition $\left( \EqCls{\LPrv_\textrm{PhC}}, \preceq_{\LPrv_\textrm{PhC}} \right)$ and Alice's hospitalization ({\CId} 312). As $\CIdF(\Evt_{20}) = \CIdF(\Evt_{22}) = \CIdF(\Evt_{24}) = 312$ and $ \Evt_{20}, \Evt_{22}, \Evt_{24} \in \EqCls{\LPrv_\textrm{PhC}} $, all the three aforementioned events reside within the same segment. In other words, \cref{def:segment} precludes any scenario wherein 
two events pertaining to Alice's case in the pharmaceutical company's partition are allocated to different segments (e.g., it cannot happen that $\Evt_{20}$ is in \Segm' and $\Evt_{22}$, $\Evt_{24}$ are in \Segm'' if $\Segm' \neq \Segm''$).

\begin{definition}[Case]\label{def:case}
	Let $\EvtL=\left( \EvtU, \preceq \right)$ be an event log as per \cref{def:evt:log}. 
	Let $\CIdF : \EvtU \to \CIdU$ be the total surjective function of the mandatory attribute {\CId} (\cref{post:mandatoryattributes})
	that defines an equivalence relation $\EqRel{\CIdF}$ on $\EvtU$ (i.e., $\Evt \EqRel{\CIdF} \Evt'$ iff $ \CIdF(\Evt) = \CId$ and $\CIdF(\Evt') = \CId$ for some $\CId \in \CIdU$) and hence divides $\EvtU$ into equivalence classes.
    Given an instance identifier $\CId \in \CIdU$, 
    a \emph{case} $\Case_{\CId}$ 
    is a pair $\left( \EqCls{\CId}, \preceq_\CId \right)$, wherein $\EqCls{\CId} \subseteq \EvtU$ is the equivalence class for $\CId$ by the equivalence relation $\EqRel{\CIdF}$, and $\preceq_\CId$ is a total order restricting $\preceq$ on $\EqCls{\CId}$, i.e., $\;\preceq_\CId$ is $(\preceq) \cap \left(\EqCls{\CId} \times \EqCls{\CId}\right)$.
    Given a $\CId$ and $\EvtL$, {\CasF} is the function returning $\Case_{\CId}$ from $\EvtL$ as above, i.e., $\Case_{\CId} = \CasF\left(\EvtL, \CId\right)$.
    We shall omit the $\CId$ subscript from $\Case_\CId$ when clear from the context. 
    We denote with $\CasU \ni \Case$ the set of all cases in $\EvtL$, i.e., $\CasU = \bigcup\limits_{\CId \in \CIdU} \left\{\left( \EqCls{\CId}, \preceq_\CId \right)\right\}$.
\end{definition}
Considering our motivating scenario, we can distinguish Alice and Bob's cases, referred to as $\Case_{312} = \CasF(\EvtL, 312)$ and $\Case_{711} = \CasF(\EvtL, 711)$, respectively. Events $\Evt_1$ and $\Evt_{30}$ belong to the same equivalence class  $\EqCls{312}$ (i.e., the class associated with $\Case_{312}$), given that $\CIdF(\Evt_1)=\CIdF(\Evt_{30})=312$. Therefore, the relation $\Evt_1 \EqRel{\CIdF} \Evt_{30}$ holds. In contrast, $\Evt_{1}$ and $\Evt_{21}$ fall in different equivalence classes (i.e., $\EqCls{312}$ and $\EqCls{711}$, respectively), as $\CIdF(\Evt_1) = 312$ and $\CIdF(\Evt_{21}) = 711$. Thus, $\Evt_1 \EqRel{\CIdF} \Evt_{30}$ does not hold true.

\begin{table}[bt]
	\caption{Glossary of notation and summary of formal concepts}
	\label{tab:notations}
	\resizebox{\linewidth}{!}{%
		\input{tables/glossary}%
	}%
\end{table}
Intuitively, a case is derived from an event log by taking all events that share the same {\CId} in that log. Also partitions, segments, and sublogs in general are logs, thus the composition of cases depend on the structure from which they are extrapolated. Furthermore, since a case is per se a totally ordered set of events, it is an event log itself. 
Notice that a case is a restriction of an \emph{event log} to events having the same \emph{unique instance identifier} in common. A segment, instead, is a restriction of a \emph{log partition} to events that have a non-empty \emph{set of instance identifiers} in common.
In other words, segments can contain events mapping to different {\CId's}, whilst cases cannot. On the other hand, cases can contain events stemming from different provisioners, whilst segments cannot.

\bigskip

We conclude this section with an operation defined on event logs (hence sublogs, i.e., segments, partitions, and cases as well) we name \emph{merge}, and an important property that a merge should enjoy to be declared \emph{safe}.

\begin{definition}[Merge]\label{def:merge}
	Let
	${\EvtL' = \left(\EvtU_{\EvtL'}, \preceq_{\EvtL'}\right)}$ and
	${\EvtL'' = \left(\EvtU_{\EvtL''}, \preceq_{\EvtL''}\right)}$ be two event logs as per \cref{def:evt:log}. 
	The \emph{merge} of two event logs $\EvtL' \Merge \EvtL''$ is a pair $\left(\EvtU_{\EvtL'} \cup \EvtU_{\EvtL''}, \preceq_{\Merge} \right)$ where $\preceq_{\Merge}$ is a superset of $\left( \preceq_{\EvtL'} \cup \preceq_{\EvtL''} \right)$.
	A merge is \emph{safe} if the two following conditions hold:
	\begin{compactenum}
		\item $\preceq_{\Merge}$ is a total order over $\EvtU_{\EvtL'} \cup \EvtU_{\EvtL''}$;
		\item the timestamp attribute $\TimeF: \EvtU_{\EvtL'} \cup \EvtU_{\EvtL''} \to \mathbb{N}$ is an order isomorphism as per \cref{post:order:timestamp}.
	\end{compactenum}
\end{definition}
Notice that the merge of two event logs (or sublogs) still is a totally ordered set (and hence an event log, or a sublog) if and only if the merge is safe. In other words, event logs are closed under the safe-merge operation. For instance, a non-safe merge between $\EvtL'$ and $\EvtL''$ could be computed as $\left(\EvtU_{\EvtL'} \cup \EvtU_{\EvtL''}, \preceq_{\EvtL'} \cup \preceq _{\EvtL''} \right)$. Take, e.g., the segment over {\CId} $312$ in the \Actor{Hospital}'s partition as $\EvtL'$ and the segment over {\CId} $312$ in the \Actor{Pharmaceutical company}'s partition as $\EvtL''$. With a non-safe merge like the one we just mentioned, the order would not be defined for either of $\Evt_{11}$ (in $\EvtL'$) and $\Evt_{20}$ (in $\EvtL''$), among others.
\begin{proposition}\label{prop:safemerge:assoc:commut}
    Let $\EvtL'$, $\EvtL''$, $\EvtL'''$ be three event logs. If $\Merge$ is a safe merge, it enjoys the properties of 
    \begin{inparaenum}[\itshape(i)\upshape]
    \item associativity $((\EvtL' \Merge \EvtL'') \Merge \EvtL''') = (\EvtL' \Merge (\EvtL'' \Merge \EvtL'''))$ and
    \item commutativity $(\EvtL' \Merge \EvtL'') = (\EvtL'' \Merge \EvtL')$, and 
    \item has $(\{\}, \{\})$, i.e., a log with no events, as its identity element.
    \end{inparaenum}
\end{proposition}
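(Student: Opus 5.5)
The key observation is that a safe merge is uniquely determined by its operands. If $\EvtL' \Merge \EvtL''$ is safe, its event set is $\EvtU_{\EvtL'} \cup \EvtU_{\EvtL''}$. By condition 2 of \cref{def:merge}, for distinct $\Evt, \Evt'$ in that union we have $\Evt \preceq_{\Merge} \Evt'$ iff $\TimeF(\Evt) \leqslant \TimeF(\Evt')$. Reflexive pairs are forced by condition 1, since a total order is reflexive. Hence
\[
\preceq_{\Merge} \;=\; \{(\Evt,\Evt') \in (\EvtU_{\EvtL'} \cup \EvtU_{\EvtL''})^2 : \Evt = \Evt' \text{ or } \TimeF(\Evt) \leqslant \TimeF(\Evt')\}.
\]
Since $\TimeF$ is a single global attribute on $\EvtU$ (\cref{post:mandatoryattributes}), this relation depends only on the union of the event sets. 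I will state this as a preliminary lemma: \emph{the safe merge of two logs is the pair $(X, \preceq^{\TimeF}_X)$, where $X$ is the union of their event sets and $\preceq^{\TimeF}_X$ is the timestamp-induced order on $X$ displayed above}. Because a safe merge is assumed to exist, this is also a total order; in fact it is one as soon as $\TimeF$ is injective on $X$.

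The three properties then reduce to facts about set union. For commutativity, both $\EvtL' \Merge \EvtL''$ and $\EvtL'' \Merge \EvtL'$ equal $(X, \preceq^{\TimeF}_X)$ with $X = \EvtU_{\EvtL'} \cup \EvtU_{\EvtL''} = \EvtU_{\EvtL''} \cup \EvtU_{\EvtL'}$. For associativity, apply the lemma twice on each side. Both $(\EvtL' \Merge \EvtL'') \Merge \EvtL'''$ and $\EvtL' \Merge (\EvtL'' \Merge \EvtL''')$ have event set $\EvtU_{\EvtL'} \cup \EvtU_{\EvtL''} \cup \EvtU_{\EvtL'''}$ by associativity of $\cup$, and both carry the timestamp-induced order on that set. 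For the identity, the merge of $\EvtL'$ with $(\{\},\{\})$ has event set $\EvtU_{\EvtL'}$ and order $\preceq^{\TimeF}_{\EvtU_{\EvtL'}}$. This coincides with $\preceq_{\EvtL'}$ because $\EvtL'$ is itself an event log satisfying \cref{post:order:timestamp}; the intended reading is that every log and sublog satisfies this postulate. The superset requirement $\preceq_\Merge \supseteq \preceq_{\EvtL'} \cup \{\}$ is then met with equality.

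The main obstacle is the well-definedness hidden in the associativity claim. We must make sure that the intermediate merges, such as $\EvtL' \Merge \EvtL''$, are themselves safe whenever the outer one is, and that the superset condition of \cref{def:merge} holds at each step. For this I will use the fact that any restriction of a total order is total. Restricting to a subset also preserves the order-isomorphism property of $\TimeF$. So safety of the three-fold merge implies safety of every sub-merge. The superset condition holds because each $\preceq_{\EvtL}$ coincides with the timestamp order on its own events, again by \cref{post:order:timestamp}.
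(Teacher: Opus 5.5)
Your proposal is correct and takes the same route as the paper's proof sketch. The paper also argues that safety fixes $\preceq_{\Merge}$ deterministically as the timestamp-induced order on the union of the event sets, so associativity, commutativity and the identity element carry over from $\cup$ and $\{\}$. You make this explicit with your uniqueness lemma, and you also check that the intermediate merges in the associativity claim are themselves safe, which the paper leaves implicit.
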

\begin{proof}{(Sketch)}
    It follows from the definition of $\Merge$, employing the associative and commutative operator $\cup$ having $\{\}$ as its identity element, and the safety condition imposing a deterministic assignment of the order relation based on the events' timestamps over $\preceq_{\Merge}$.
\end{proof}

For the sake of readability, we recapitulate the formal concepts introduced thus far in a glossary of notation in \cref{tab:notations}, listing them in their order of appearance within this section.
%
%
Equipped with these notions, we provide an architectural overview of our solution in the next section.

%% file: tables/glossary.tex
\begin{tabular}{lp{14cm}l}
	\toprule
	Symbol & Brief description & Reference \\
	\midrule
	$\Evt \in \EvtU$ & An event $e$ in a universe of events $\EvtU$ & \cref{def:evt} \\
	$\CIdF : \EvtU \to \CIdU$ & The {\CId} attribute mapping events to instance identifiers in $\CIdU$ & \cref{def:evt,post:mandatoryattributes} \\
	$\ActF : \EvtU \to \ActS$ & The attribute mapping events to activity labels in $\ActS$ & \cref{def:evt,post:mandatoryattributes} \\
	$\TimeF : \EvtU \to \mathbb{N}$ & The attribute mapping events to timestamps & \cref{def:evt,post:mandatoryattributes} \\
	$\EvtL = \left( \EvtU, \preceq \right)$ & The event log, a set of events totally ordered by ${\preceq \;\subseteq\; \EvtU \times \EvtU}$. The timestamp is an order isomorphism to $\preceq$. & \cref{def:evt:log,post:order:timestamp} \\
        $\PMFunc$& A process mining function with $k$ hyperparameters $\PMParams$  &\cref{def:process:mining}\\
        $\LPrv \in \LPrvS$&A provisioner {\LPrv} in a universe of provisioners {\LPrvS} &\cref{def:provisioner}\\
        $\EqFun{\LPrv} : \EvtU \to \LPrvS$& The function mapping events to provisioners in {\LPrvS}&\cref{def:provisioner}\\
        $\EqCls{\LPrv}$&The equivalence class of the events in {\EvtU} related to the same provisioner {\LPrv}&\cref{def:provisioner}\\
        $\Evt\EqRel{\LPrv}\Evt'$&The equivalence relation between two events in {\EvtU}  related to the same provisioner {\LPrv}&\cref{def:provisioner}\\
        $\EvtL_\LPrv = \left( \EqCls{\LPrv}, \preceq_\LPrv \right)$&The log partition, a set of events in \EqCls{\LPrv} ordered by $\preceq_\LPrv$. $\preceq_\LPrv$ is a total order restricting $\preceq$ to $\EqCls{\LPrv}$&\cref{def:provisioner}\\
        $\Segm = \left( \EqCls{\Segm}, \preceq_\Segm \right)$&The segment, a subset of a log partition collecting all events whose $\CId$ belongs to a given subset of {\CIdU} ordered by the restriction $\preceq_\Segm$ of $\preceq_\LPrv$&\cref{def:segment}\\
        $\Evt \EqRel{\CIdF} \Evt'$&The equivalence relation between two events related to the same {\CId}&\cref{def:case}\\
        $\EqCls{\CId}$&The equivalence class of the events in {\EvtU} related to the same {\CId} &\cref{def:case}\\
        $\Case_{\CId} = \left( \EqCls{\CId}, \preceq_\CId \right)$& The case, a set of events in {\EvtU} related to the same {\CId} ordered by $\preceq_\CId$. $\preceq_\CId$ is a total order restricting $\preceq$ on $\EqRel{\CIdF}$&\cref{def:case}\\
        $ \left(\EvtU_{\EvtL'}, \preceq_{\EvtL'}\right) \Merge \left(\EvtU_{\EvtL''}, \preceq_{\EvtL''}\right)$& The merge between two event logs, an event log containing the set of all the events in $\EvtU_{\EvtL'}$ and  $\EvtU_{\EvtL''}$ ordered by a superset of $\left( \preceq_{\EvtL'} \cup \preceq_{\EvtL''} \right)$. A merge is \emph{safe} if the output is totally ordered and the timestamp attribute is an order isomorphism. & \cref{def:merge,prop:safemerge:assoc:commut}\\
        
	\bottomrule
        
\end{tabular}

%% file: content/design.tex
Our goal is to enable the secure aggregation and elaboration of original, unaltered event logs from decentralized sources in dedicated environments that potentially lie beyond the individual organizations' information perimeter.
With this objective in mind, we devise the \Compo{Secure Miner} component, which is capable of securing data merge and processing by running certified code in an isolated execution vault. Thus, we decouple provisioning from treatment, and the two tasks can be carried out by distinct computing nodes. In this section, we outline the high-level architecture of CONFINE.

\subsection{Architectural overview}
\label{sec:design:arch}
Here, we introduce CONFINE's key components, with a special focus on the \Compo{Secure Miner} (\cref{sec:design:arch:secureminer}) after an outline of the overall architecture (\cref{sec:design:arch:overview}).
\subsubsection{The CONFINE architecture at large}
\label{sec:design:arch:overview}
Our architecture involves different information systems running on multiple machines. An organization can take at least one of the following roles: 
\begin{inparadesc}
\item[provisioning] if it delivers local event logs to be mined (as per \cref{def:provisioner});
\item[miner] if it applies process mining algorithms (see \cref{def:process:mining}) using event logs retrieved from provisioners.
\end{inparadesc}
In our adversarial model, we assume that provisioning organizations behave as \emph{honest} entities (\Asm{}\label{asm:honestprovisioners}), whereas miner organizations may act in a \emph{malicious} manner (\Asm{}\label{asm:maliciousminers}). This scenario characterizes situations in which the members of a consortium cooperate towards a common business objective and are thus motivated to gain benefits from constructive behavior. Consequently, we do not expect provisioning organizations to deviate from the prescribed protocol or manipulate the transmitted event logs. By contrast, miner organizations may attempt to exploit their role by inferring sensitive information from the received event logs. 

\begin{figure}[t]
	\begin{floatrow}
		\ffigbox[0.63\textwidth]
		{\includegraphics[width=0.63\textwidth]{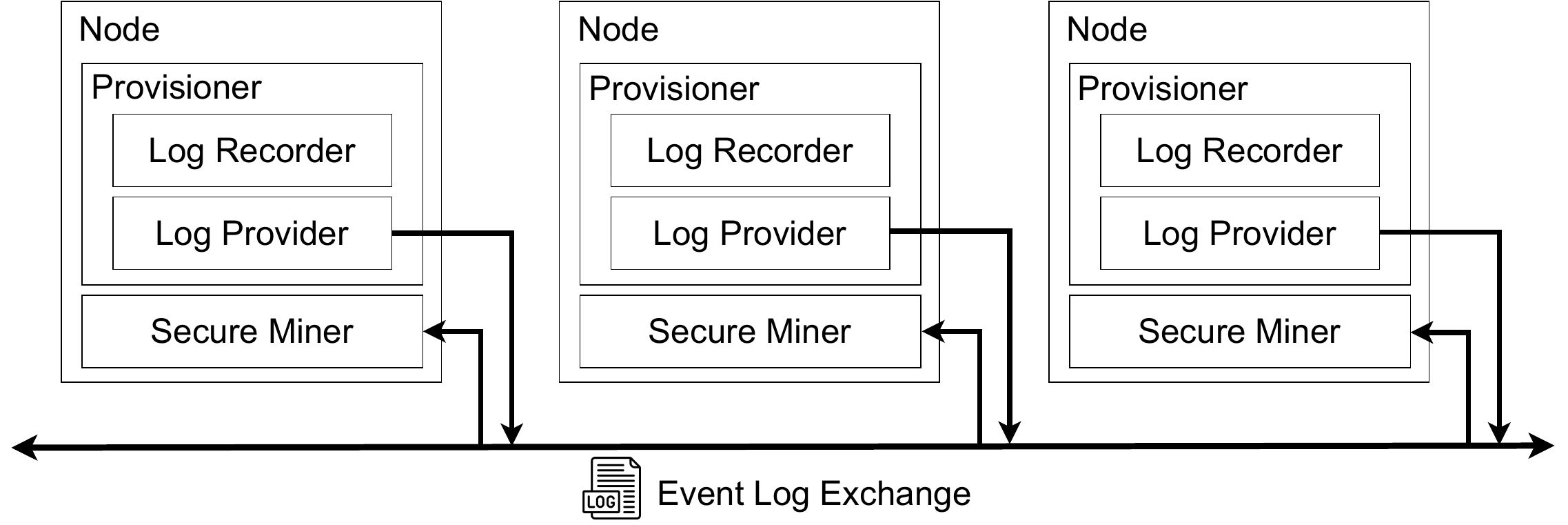}}
		{\caption{The CONFINE high-level architecture}\label{fig:architecture_diagram}
			\vspace{-0.5em}
		}%
		\ffigbox[0.29\textwidth]
		{\includegraphics[width=0.29\textwidth]{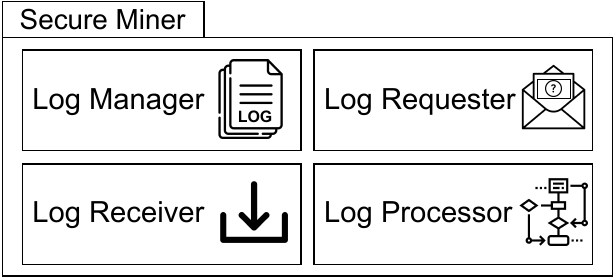}\vspace{.49cm}}
		{\caption[Secure Miner sub-components]{The Secure Miner}\label{fig:trusted_miner}
			\vspace{-0.3em}
		}
	\end{floatrow}
\end{figure}
In \cref{fig:architecture_diagram}, we propose the high-level schematization of the CONFINE framework.
Henceforth, we will refer to provisioning organizations' event logs as \emph{log partitions}, according to \cref{def:partition}.
%
%
In our solution, every organization hosts one or more nodes realizing the architectural components. Depending on the played role, nodes come endowed with a \Compo{Provisioner} or a \Compo{Secure Miner} component, or both. The \Compo{Provisioner} component consists of the following two main sub-components.
The \begin{inparadesc}
\item[\Compo{Log Recorder}] registers the \emph{events} (introduced in \cref{def:evt}) taking place in the organizations' systems, and builds the local log partitions. The \item[\Compo{Log Provider}] delivers on-demand log partitions to mining players.
\end{inparadesc}

As formalized in \cref{post:mandatoryattributes}, we assume events in log partitions are comprised of three mandatory attributes, namely, the \emph{iid}, the \emph{activity} label and the \emph{timestamp} (\Asm{}\label{asm:eventAttributes}; see~\cref{post:mandatoryattributes}). We further assume that iids are synchronized across organizations so that they consistently refer to the same process instance (\Asm{}\label{asm:synchronizedIIDs}).
The \Compo{Log Recorder} of the \Actor{Hospital} and all other organizations in our example refer to Alice and Bob's cases (see \cref{def:case}) with iids 312 and 711, respectively. The \Compo{Log Recorder} is queried by the \Compo{Log Provider} for log partitions to be made available for mining. The latter controls access to local 
log paritions by authenticating data requests by miners and rejecting those that come from unauthorized organizations.
In our motivating scenario, the \Actor{Specialized clinic}, the \Actor{Pharmaceutical company}, and the \Actor{Hospital} leverage \Compo{Log Provider}s to authenticate the miner party before sending their log partition. The \Compo{Secure Miner} component
shelters external log partitions inside a protected environment to preserve data confidentiality and integrity.
Notice that \Compo{Log Provider}s accept requests issued solely by \Compo{Secure Miner}s. 
Next, we provide an in-depth focus on the latter.

\subsubsection{The Secure Miner}
\label{sec:design:arch:secureminer}
The primary objective of the \Compo{Secure Miner} is to allow miners to securely execute process mining algorithms using decentralized log partitions retrieved from provisioners such as the \Actor{Specialized clinic}, \Actor{Pharmaceutical company}, and the \Actor{Hospital} in our example. \Compo{Secure Miner}s are isolated components that guarantee data inalterability and confidentiality. In \cref{fig:trusted_miner}, we show a schematization of the \Compo{Secure Miner}, which consists of four sub-components:
\begin{inparaenum}[\itshape(i)\upshape]
    \item \Compo{Log Requester};
    \item \Compo{Log Receiver};
    \item \Compo{Log Manager}; 
    \item \Compo{Log Processor}.
\end{inparaenum}
The \Compo{Log Requester} and the \Compo{Log Receiver} are the sub-components that we employ during the log partition retrieval. \Compo{Log Requester}s send authenticable data requests to the \Compo{Log Provider}s. The \Compo{Log Receiver} collects log partitions sent by \texttt{Log Providers} and entrusts them to the \Compo{Log Manager}, securing them from accesses that are external to the \Compo{Secure Miner}.
Miners of our motivating scenario, such as the \Actor{University} and the \Actor{National Institute of Statistics}, employ these three components to retrieve and store Alice and Bob's data. 
The \Compo{Log Manager} combines the event data locked in the \Compo{Secure Miner} to have a global view of the inter-organizational process, encompassing activities executed by each involved provisioning organization. 
We refer to this operation as \emph{merge}, according to \cref{def:merge}.
%
In the context of the merge, we assume that the events contained in the individual log partitions expose timestamp attributes that are mutually synchronized across organizations (\Asm{}\label{ams:timestamps}).
The \Compo{Log Processor} executes process mining algorithms using merged event logs in a protected environment, inaccessible from the outside computation environment.
A \Compo{Log Processor} implements a process mining function $\PMFunc$ following \cref{def:process:mining}.%
In our motivating scenario, the \Compo{Log Manager} combines the traces associated with the cases of Alice (i.e., $T^H_{312}$, $T^S_{312}$, and $T^C_{312}$) and Bob (i.e, $T^H_{711}$, $T^S_{711}$, and $T^C_{711}$) and generates the chronologically sorted traces $T_{312}$ and $T_{711}$ preserving the events' temporal sequence (as described in \cref{post:order:timestamp}). Subsequently, the \Compo{Log Processor} feeds these cases into the mining algorithms. 


%% file: content/realization.tex
In this section, we outline the technical aspects concerning the realization of our solution. Therefore, we first present the enabler technologies through which we instantiate the architectural principles presented in \cref{sec:design}. Afterwards, we discuss the CONFINE interaction protocol.
%
\begin{figure}[t]
	\centering
	\includegraphics[width=1\linewidth]{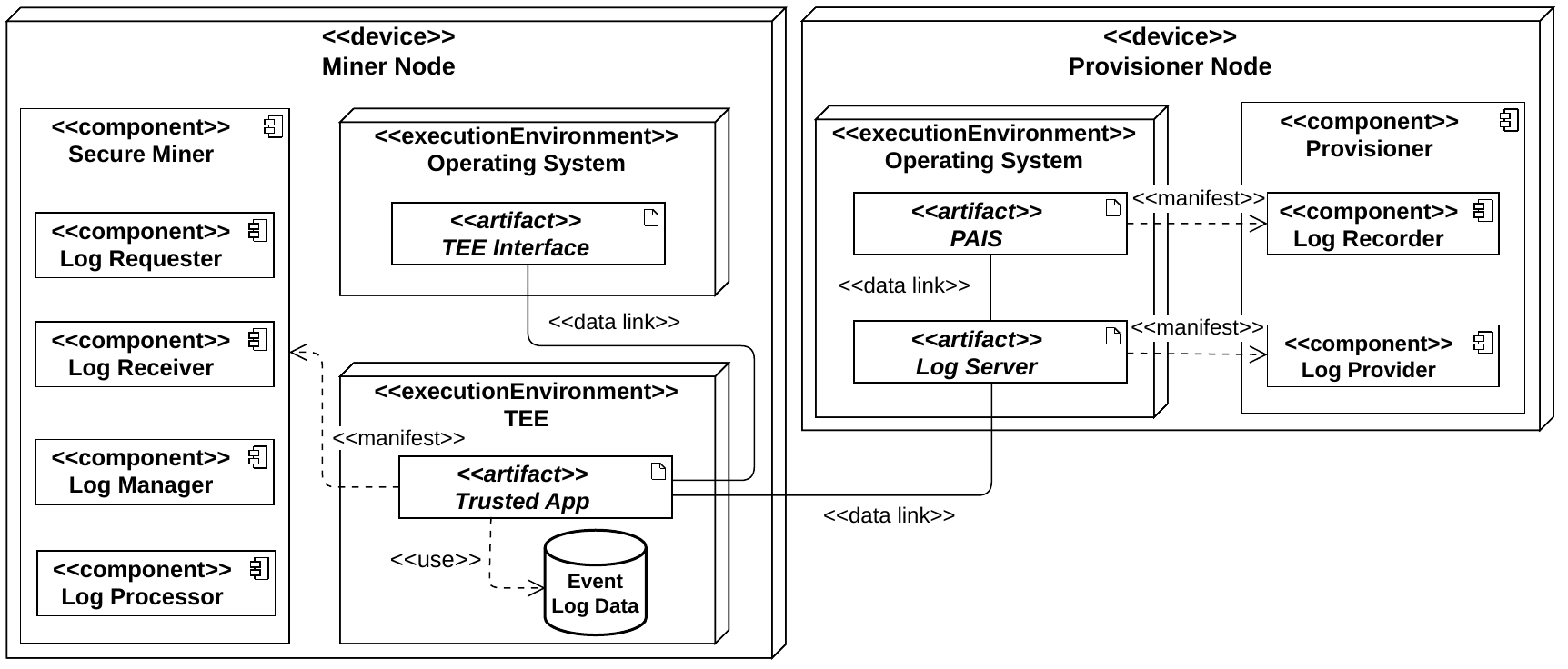}
	\caption{UML deployment diagram of the CONFINE architecture}
	\label{fig:deployment_diagram}
\end{figure}
\subsection{Deployment}
\label{sec:deployment}
%
\Cref{fig:deployment_diagram} depicts a Unified Modeling Language (UML) deployment diagram~\citep{koch2002expressive} to illustrate the employed technologies and computation environments. 
We recall that the \Compo{Miner} and \Compo{Provisioner} nodes are drawn as separated, although organizations can host both. In our motivating scenario, e.g., the \Actor{Hospital} can be a provisioner (as per \cref{def:provisioner}) and a miner at the same time.
Our deployment setting assumes the establishment of Identity and Access Management (IAM) services capable of issuing, for each node, an \emph{identity proof} that authenticates its owning organization (\Asm{}\label{asm:iam}). 
%

\Compo{Provisioner Node}s host the \Compo{Provisioner}'s components, encompassing the \Compo{Log Recorder} and the \Compo{Log Provider}. 
The Process-Aware Information System (\Compo{PAIS}) manifests the \Compo{Log Recorder}~\citep{Dumas.etal/2018:FundamentalsofBPM}. 
The \Compo{PAIS} grants access to the \Compo{Log Server}, enabling it to retrieve event log data. The \Compo{Log Server}, on the other hand, embodies the functionalities of the \Compo{Log Provider}, implementing web services aimed at handling remote data requests and providing log partitions (introduced in \cref{def:partition}) to miners. 

The \Compo{Miner Node} is characterized by two distinct \textit{execution environments}: the \Compo{Operating System} (\Compo{OS}) and the Trusted Execution Environment (\Compo{TEE})~\citep{DBLP:conf/trustcom/SabtAB15}. \Compo{TEE}s establish isolated contexts separate from the \Compo{OS}, safeguarding code and data through hardware-based encryption mechanisms.

As introduced in \cref{sec:background:tee}, we assume our \Compo{TEE} profile is characterized by the following distinctive features:
\begin{inparaenum}[\itshape(i)\upshape]
	\item application-level trusted surface,
	\item hardware-based memory encryption, and
	\item remote attestability (\Asm{}\label{asm:teefeatures}).
\end{inparaenum}
Through the application-level trusted surface, we execute a \Compo{Trusted App} responsible for implementing the core functions of the \Compo{Secure Miner} and its associated sub-components within the \Compo{TEE}'s isolated domain. The \Compo{TEE} preserves the integrity of the \Compo{Trusted App} code by shielding it from malicious tampering and preventing unauthorized access by processes running in the \Compo{Operating System}. 

Through hardware-based memory encryption, the \Compo{TEE} ensures that the log partitions processed by the \Compo{Secure Miner} remain encrypted throughout their computation, confining them to dedicated and protected regions of the main memory.
By enforcing memory access restrictions, the \Compo{TEE} prevent one trusted application from reading or altering the memory space of another. 
These dedicated areas in memory are limited, though.
Once the limits are exceeded, TEEs have to scout around in outer memory areas, thus conceding the opportunity to malicious readers to understand the saved data based on the memory reads and writes.
To avoid this risk, TEE implementations often raise errors that halt the program execution when memory demand goes beyond the available space. Within the CONFINE protocol, we mitigate this issue by directing the \Compo{Secure Miner} to elaborate log partitions in a smaller batch of cases (defined in \cref{def:case}) we refer to as \emph{segment} (see \cref{def:segment}).

Finally, we rely on the \Compo{TEE}'s feature of remote attestability, which enables \Compo{Provisioner}s to verify the trustworthiness of the \Compo{Secure Miner}. In particular, we leverage the customizable portion of the attestation evidence to embed ephemeral cryptographic means generated at runtime by our \Compo{Secure Miner}. Those cryptographic tools are then used to establish a secure communication channel for the encrypted transmission of log partitions.

Users interact with the \Compo{Trusted App} through the \Compo{Trusted App Interface}, which serves as the exclusive communication channel. The \Compo{Trusted App} offers secure methods, invoked by the \Compo{TEE Interface}, for safely receiving information from the \Compo{Operating System} and outsourcing the results of computations. 

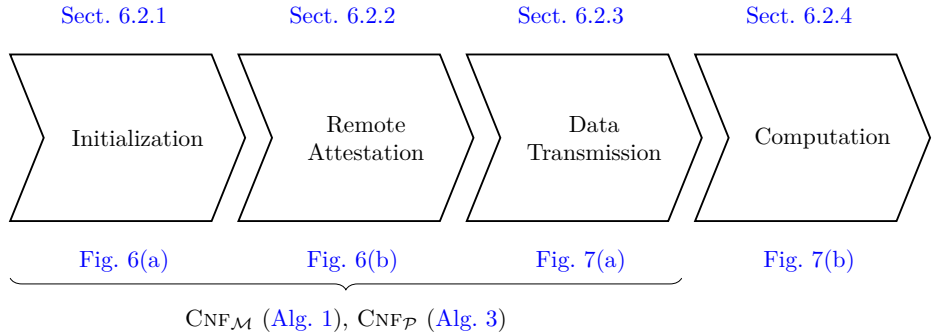
\begin{figure}[t]
\centering
\resizebox{0.75\textwidth}{!}{%
\input{content/figures/protocolphases}
}
\caption{Phases of the CONFINE protocol with references to the related sections, figures, and algorithms in this article}
\label{fig:protocolphases}
\end{figure}

\subsection{The CONFINE protocol}
\label{sec:deployment:protocol}
We orchestrate the interaction of the components in CONFINE via a protocol. 
We separate it in four subsequent stages, namely
\begin{inparaenum}[\itshape(i)\upshape]
	\item \textit{initialization}, \item \textit{remote attestation}, \item \textit{data transmission}, and \item \textit{computation}.
\end{inparaenum}
The initialization phase aims to exchange preliminary metadata, enabling the \Compo{Secure Miner} to acknowledge the distribution of cases among the participating \Compo{Provisioner}s. During the remote attestation phase, the \Compo{Provisioner}s verify the security properties of the \Compo{Secure Miner} prior to the externalization of their log partitions. In the subsequent data transmission phase, the \Compo{Provisioner}s transfer their event logs in smaller batches of cases according to a segmentation strategy. Finally, the \Compo{Secure Miner} executes the selected process mining algorithm over the merged cases received from the \Compo{Provisioner}s.

In \cref{fig:protocolphases}, we associate these phases with the sections, figures, and algorithms explicitly addressing them throughout the text.    
The protocol's phases are individually depicted in \cref{fig:init,fig:attestation,fig:transmission,fig:computation}, respectively.
Our protocol involves two primary entities: a \Compo{Secure Miner} (hereafter referred to as \SecM) and one or more \Compo{Provisioner}s ($\LPrv_1, \ldots, \LPrv_n \in \LPrvS$). 

The behavioral descriptions for {\SecM} (which we refer to as $\Confine_{\SecM}$) and any $\LPrv_i \in \LPrvS$ (i.e., $\Confine_{\LPrv}$) are outlined in \cref{alg:secm,alg:lprv}, respectively. These specifications adhere to the syntax for distributed algorithms detailed in~\cite{Cachin.etal/2011:ReliableSecureDistributedProgramming}.%
\footnote{In order to enhance clarity, we adapt the original syntax of the \Mesg{Deliver} and \Mesg{Send} expressions to emphasize message senders (preceded by the symbol `$\ll$') and receivers (preceded by `$\gg$'), respectively.}
We assume that communication between \Compo{Secure Miner}s and \Compo{Log Provisioner}s occurs through an \textit{Authenticated Point-to-Point Perfect Link} (\Asm{}\label{asm:authperfectlink})~\cite{Cachin.etal/2011:ReliableSecureDistributedProgramming}. 
This communication abstraction guarantees:
\begin{inparaenum}[\itshape(i)\upshape]
\item \textit{reliable delivery} (i.e., if a correct process sends a message \textit{m} to a correct
process \textit{q}, then \textit{q} eventually delivers \textit{m}),
\item \textit{no duplication} (i.e., no message is delivered by a correct process more than once), and
\item \textit{authenticity} (i.e., if some correct process \textit{q} delivers a message \textit{m} with sender \textit{p}
and process \textit{p} is correct, then \textit{m} was previously sent to \textit{q} by \textit{p}). 
\end{inparaenum} 
In addition, we assume that \Compo{Provisioner}s implement an access control policy that authorizes or denies requests issued by \Compo{Secure Miner}s based on the identity of the requesting organization (\Asm{}\label{asm:acpolicy})~\cite{DBLP:conf/post/CramptonM12}.


$\Confine_{\SecM}$ accepts as input the list of \Compo{Provisioner}s' references ($\LPrv_1, \ldots, \LPrv_n$), namely, descriptors of all the necessary information to locate and identify provisioners.
In the same specification \cref{alg:secm}, the \Compo{Secure Miner} is provided with the following hyperparameters:
\begin{inparadesc}
    \item the {\DoYieldCases} flag that specifies whether the computation is incrementally yielded on single complete cases or on the entire event log, and
    \item a segment size {\SegSize} employed for the generation of the log \emph{segments} (formally introduced in \cref{def:segment}) during the \textit{data transmission} phase. 
\end{inparadesc}
We posit that the segment size range spans from the largest case (see \cref{def:case}) size among all the merged cases, as the minimum value, to the maximum memory capacity of the \Compo{TEE}, as the maximum value (\Asm{}\label{asm:segsize}). 
Referring to our example, in which the merged Alice's case (comprehensive of the all the three organizations' events) constitutes the largest case with a cumulative size of \num{100}~KB, and considering the capacity of the \Compo{TEE} to be \num{128000}~KB, it follows that the segment size employed by the miner consistently falls between these two values.

Similarly, the \Compo{Provisioner}'s specification $\Confine_{\LPrv}$, documented in \cref{alg:lprv}, considers as input the list of references to miners ($\SecM_1, \ldots, \SecM_s$) for which event log access is enabled. According to the underlying syntax, {\SecM} and {\LPrv} execute code prompted by events mutually exclusively, implying that they do not manage two or more events/messages concurrently. For the sake of clarity, we omit any explicit representation of this feature in the pseudo-codes under discussion. Furthermore, to preserve the comprehensibility of the components' behavior, we leave out of the proposed pseudo-codes the validity checks and the encryption operations that are addressed in the text and in \cref{fig:init,fig:attestation,fig:transmission,fig:computation}. Also, we leave out of \cref{alg:secm} the behaviour of the \Compo{Secure Miner}  {\SecM} during the computation phase, as our approach is designed to be integrable with generic process mining algorithms (see \cref{def:process:mining}).

For the sake of readability, we provide in \cref{table:confine:keys} an overview of the 
ephemeral keys generated by the CONFINE components and referred to in the remainder of this section.
In the following, we describe each protocol phase in detail. 

\begin{figure}[t]
	\subfloat[][Initialization]{\includegraphics[height=0.38\textheight]{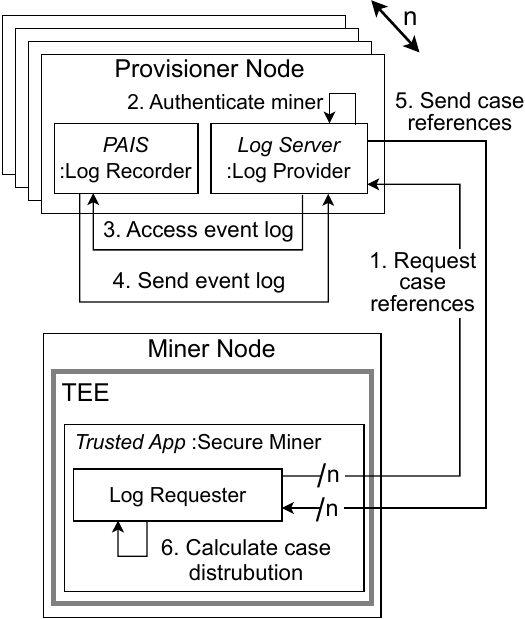}\label{fig:init}}\hfill
	\subfloat[][Remote attestation]{\includegraphics[height=0.38\textheight]{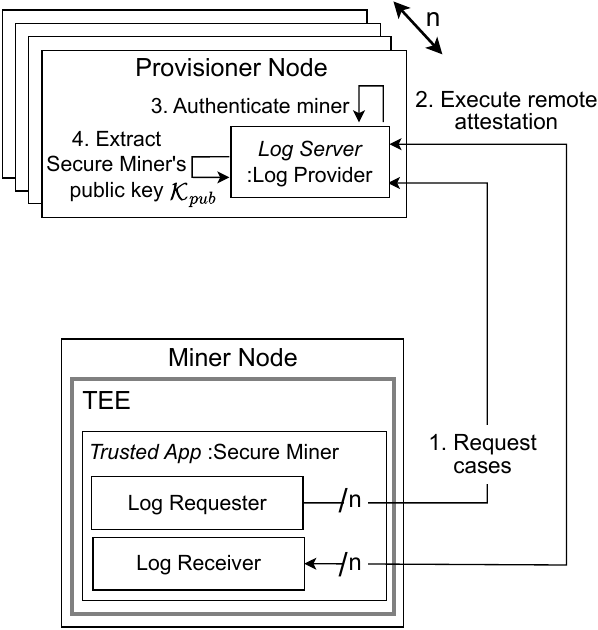}\label{fig:attestation}}\hfill
	\caption{Unfolding example for the initialization and remote attestation phases of the CONFINE protocol}
	\label{fig:workflow:example}
\end{figure}

\subsubsection{Initialization}\label{deployment:protocol:init} The protocol starts with the initialization stage whose objective is to inform the \Compo{Secure Miner} about the distribution of cases in the  
log partitions (see \cref{def:partition}) of the \Compo{Provisioner}s. At the onset of this stage, the \Compo{Log Requester} within the \Compo{Secure Miner} issues $n$ requests, one per \Compo{Log Server} component, 
to retrieve the list of case references they record (step 1 in \cref{fig:init} and (\crefalgln[alg:secm]{alg:secm:CaseRefsReq:call}). This data request contains an identity proof associated with the miner organization, which the \Compo{Provisioner}s evaluate in the context of their authorization policies over their log partitions. Following miner authentication (step 2), each \Compo{Log Server} retrieves the local log partition from the \Compo{PAIS} (steps 3 and 4) and subsequently responds to the \Compo{Log Requester} by providing the list of iids {\CIdS} within the owned log partition (step 5 and \crefalgln[alg:lprv]{alg:lprv:sendRefs}). After collecting these $n$ responses (\crefalgln[alg:secm]{alg:secm:respcollection}), the \Compo{Log Requester} delineates the distribution of {\CIdS}. In our motivating scenario, by the conclusion of the initialization, the \Compo{Secure Miner} gains knowledge that Bob's case, with {\CId} 711 and synthesized in the traces $T^H_{711}$ and $T^C_{711}$, is exclusively retained by the \Actor{Hospital} and the \Actor{Specialized Clinic}. In contrast, the trace of Alice's case having {\CId} 312, denoted as $T^H_{312}$, $T^C_{312}$, and $T^S_{312}$, is scattered across all three organizations.

%
In this phase, we solely rely on standard authentication of the \Compo{Secure Miner}'s organization since obtaining the list of case identifiers (\CIdS) typically does not reveal log partitions' sensitive data (it might be a random number or a hashcode as well, for instance).  Our design aims to perform the TEE attestation when the \Compo{Secure Miner} begins processing sensitive data, hence in the subsequent phase. 

%
%
\begin{algorithm2e}[tb]
	\input{algorithms/secureMinerJ}
	\caption{$\Confine_{\SecM}$, Secure Miner's behavior in CONFINE.}
	\label{alg:secm}
\end{algorithm2e} 
\begin{algorithm2e}[tb]
	\input{algorithms/mergeAndStoreAlg}
	\caption{The \Call{merge} function employed by the Secure Miner in $\Confine_{\SecM}$}
	\label{alg:merge}
\end{algorithm2e} 

\begin{algorithm2e}[tb]
	\input{algorithms/provisioningJ}
	\caption{$\Confine_{\LPrv}$, Provisioner's behavior in CONFINE.}
	\label{alg:lprv}
\end{algorithm2e} 
%
\begin{algorithm2e}[tb]
	\input{algorithms/segmentEventLog}

	\caption{The \Call{segmentEventLog} function of the Provisioner in $\Confine_{\LPrv}$.}
	\label{alg:segment}
\end{algorithm2e}

\begin{table}[b]
	\resizebox{\linewidth}{!}{%
		\centering
		\begin{tabular}{ccl}\toprule
			\textbf{Key name} & \textbf{Generated by} & \textbf{Objective} \\ \midrule
			  {\Kpub} & \multirow{2}{*}{\makecell[l]{\Compo{Secure Miner}}} & Used by the \Compo{Provisioner}s to encrypt the key {\Ksym} during the segment transmission\\ 
			{\Kpriv} &  & Used by the \Compo{Secure Miner} to decrypt the key {\Ksym} within the \Compo{TEE}  \\ \midrule
			{\Ksym} & \multirow{1}{*}{\makecell[l]{\Compo{Provisioner}}} & Used by the \Compo{Provisioner} to encrypt segments transmitted to the \Compo{Secure Miner}\\ 
			\bottomrule
		\end{tabular}
		\caption{Ephemeral encryption keys generated during the a session of the CONFINE protocol}
        \label{table:confine:keys}
        }%
\end{table}

\subsubsection{Remote attestation}\label{deployment:protocol:attestation} 
The remote attestation stage has a threefold objective: 
\begin{inparaenum}[\itshape(i)\upshape]
    \item to furnish provisioners with compelling evidence that the data request for a log partition originates from a \Compo{Trusted App} running within a TEE;
    \item to confirm the specific nature of the \Compo{Trusted App} as an authentic \Compo{Secure Miner} software entity;
    \item to authorize the organization of the \Compo{Secure Miner} application.
\end{inparaenum}

This phase is triggered when the \Compo{Log Requester} sends a new case request to the \Compo{Log Server} (step 1 in \cref{fig:attestation} and \crefalgln[alg:secm]{alg:secm:CasesReq:call}), specifying:
\begin{inparaenum}[\itshape(i)\upshape]
    \item the segment size (henceforth, \SegSize), and
    \item the set of the requested case \CIdS.
\end{inparaenum}
Both parameters will be used in the subsequent \textit{data transmission} phase.
Each of the $n$ \Compo{Log Server}s initiates a \Compo{TEE} remote attestation with the \Compo{Log Receiver} within the \Compo{Secure Miner} (step 2), as outlined in \cref{sec:background:tee}. During this procedure, the \Compo{Log Receiver} leverages the customizable data field of the attestation evidence (the inclusion of custom data is discussed in \cref{sec:background:tee}) to embed:
\begin{inparaenum}[\itshape(i)\upshape]
\item an identity proof of the \Compo{Secure Miner}'s organization, and
\item an ephemeral public key {\Kpub}, generated at runtime by the \Compo{Log Receiver} for the current protocol session and paired with a private key {\Kpriv}, which remains sealed within the \Compo{TEE}.
\end{inparaenum}

Upon successful completion of the remote attestation and verification of the \Compo{Secure Miner}'s workload, the $n$ \Compo{Log Server}s inspect the content of the custom data field of the validated attestation evidence. Using the embedded identity evidence, they authenticate the miner organization and, based on this assessment, render a final trust decision (step 3). If authentication succeeds, the \Compo{Log Server}s extract {\Kpub} from the validated evidence (step 4), after which the protocol proceeds to the data transmission phase.

\subsubsection{Data transmission}\label{deployment:protocol:transmission} Once the trusted nature of the \Compo{Secure Miner} is verified, the \Compo{Log Server}s proceed with the transmission of their cases. To accomplish this, each \Compo{Log Server} retrieves the log partition from the \Compo{PAIS} (steps 1 and 2 of \cref{fig:transmission}), and filters it according to the {\CIdS} specified by the miner. Given the constrained workload capacity of the \Compo{TEE}, it is imperative for \Compo{Log Server}s to split the filtered log partition into distinct segments (as per \cref{def:segment}).  
\begin{figure}[t]
	\subfloat[][Data transmission]{\includegraphics[width=0.4\linewidth]{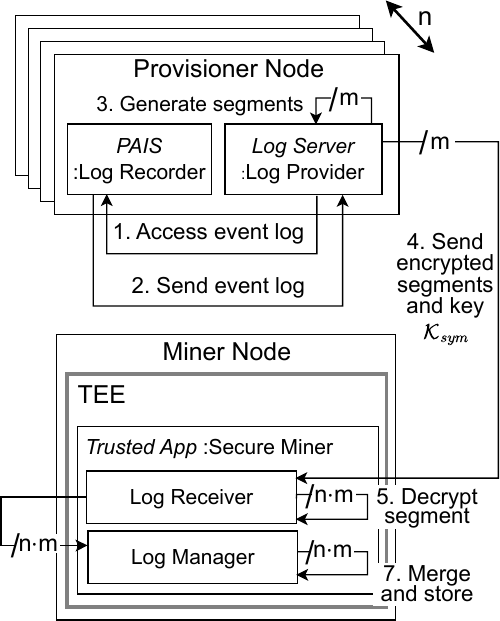}\label{fig:transmission}}\hfill
	\subfloat[][Computation]{\includegraphics[width=0.42\linewidth]{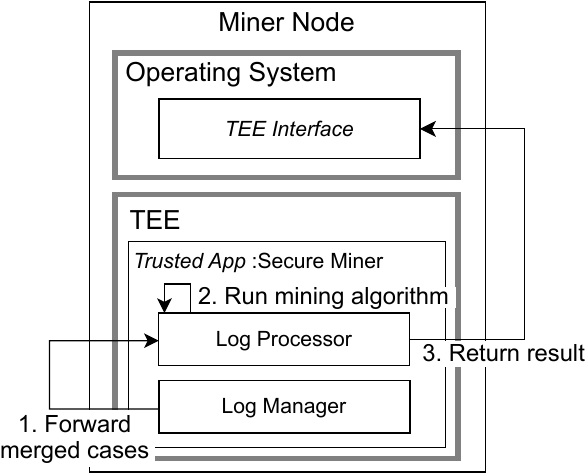}\label{fig:computation}}\hfill	
	\caption{Unfolding example for the data transmission and computation phases of the CONFINE protocol}
	\label{fig:workflow}
\end{figure}
Consequently, each \Compo{Log Server} generates $m$ log segments comprising a variable count of entire cases (step 3 and \crefalgln[alg:lprv]{alg:lprv:logaccess}). The cumulative size of these segments is governed by the {\SegSize} parameter specified by the miner in the initial request (as in step 1 of \cref{fig:attestation}). In our motivating scenario, the \Compo{Log Server} of the \Actor{Hospital} may structure the segmentation such that $T^H_{312}$ and $T^H_{711}$ reside within the same segment, whereas the \Actor{Specialized clinic} might have $T^S_{312}$ and $T^S_{711}$ in separate segments. In \cref{alg:segment}, we propose the specification of one of the possible segmentation strategies. Our pseudocode takes as input a log partition $\EvtL_{\LPrv}$, a set of instance identifiers $\CIdS$, and a segment size $\SegSize$ and it produces a set of $m$ segments $\Segm$ of $\EvtL_\LPrv$.
%
After the segmentation, the $n$ \Compo{Log Server}s generate a random symmetric encryption key, which we call {\Ksym} (see \cref{table:confine:keys}). Therefore, each \Compo{Log Server} encrypts its $m$ segments with {\Ksym} and transmits the encrypted data to the \Compo{Log Receiver} (step 4 and \crefalgln[alg:lprv]{alg:lprv:sendSegs}). Each transmission includes a cryptographic proof of the \Compo{Provisioner}'s identity, computed over the forwarded segments, along with the ciphertext obtained by encrypting {\Ksym} with the public key {\Kpub} found within the attestation evidence of the previous phase. Subsequently, the \Compo{Log Receiver} collects the $n \times m$ transmissions in a queue and processes them sequentially. It first uses {\Kpriv} to decrypt {\Ksym}, then uses the latter to decrypt the received segment (step 5).
Subsequently, the \Compo{Log Receiver} forwards the cases contained in the segment to the \Compo{Log Manager} (step 6 and \crefalgln[alg:secm]{alg:secm:mergeAndStore}). To reconstruct the process instance, cases 
with the same {\CId} must be merged by the \Compo{Log Manager} resulting in a single trace (e.g., $T_{312}$ for Alice case) comprehensive of all the events in the partial traces (e.g., $T^H_{312}$, $T^S_{312}$ and $T^C_{312}$ for Alice case) following the definition of \emph{safe merge} in \cref{def:merge}. In \cref{alg:merge}, we outline the pseudocode of the merge operation. This specification takes as input two event logs $\EvtL'$ and $\EvtL''$  and returns the merged event log $\EvtL = \EvtL' \Merge \EvtL''$. During this operation, the \Compo{Log Manager} applies a specific \textit{merging schema} (i.e., a rule specifying the attributes that link two cases during the merge \cite{claes2014merging}). In our illustrative scenario, the merging schema to combine the cases of Alice is contingent upon the linkage established through their {\CId} attributes, namely, \textit{`Case}' (in the \Actor{Hospital}'s log partition), \textit{`HospitalCaseID'} (in the \Actor{pharmaceutical company}'s log partition), and \textit{`TreatmentID'} (in the \Actor{specialized clinic}'s log partition). 
The outcomes arising from merging the cases within the processed segment are securely stored by the \Compo{Log Manager} in the \Compo{TEE}, until the subsequent computation phase.

\subsubsection{Computation}\label{deployment:protocol:computation} The \Compo{Secure Miner} requires all the \Compo{Provisioner}s to have delivered cases referring to the same {\CId} value (as per \cref{post:mandatoryattributes}). For example, when the \Actor{Hospital} and the other organizations have all delivered their information concerning case 312 to the \Compo{Trusted App}, the process instance associated with Alice becomes eligible for computation. Upon meeting this condition (\crefalgln[alg:secm]{alg:secm:computation}), the \Compo{Log Manager} forwards the case earmarked for computation to the \Compo{Log Processor} (step 1 in \cref{fig:computation} and \crefalgln[alg:secm]{alg:secm:yield1}, \crefalgln[alg:secm]{alg:secm:yield2}). This batch of cases may correspond either to the entire merged event log or to a subset thereof. In \cref{alg:secm}, we describe both the complete event log and the single case computation settings (\crefalgln[alg:secm]{alg:secm:yieldlog} and \crefalgln[alg:secm]{alg:secm:yieldcases}, respectively). The former entails a single computation routine, thus saving computation time but requiring a larger memory buffer in the \Compo{TEE}, whereas the latter necessitates multiple consecutive elaborations with a lower demand for space. Subsequently, the \Compo{Log Processor} proceeds to input the merged case/log, sorted according to their timestamp attributes, as per \cref{post:order:timestamp}, into the process mining algorithm {\PMFunc} (step 2), according to \cref{def:process:mining}. Notice that the above choice on the buffering of cases impacts the selection of {\PMFunc}.
If we elaborate subsequent batches, each containing a part of all merged cases, {\PMFunc} must support incremental processing, enriching the output as new batches come along. An example of this class of algorithms is the HeuristicsMiner~\cite{weijters2006process}.
Otherwise, incrementality is not required. Ultimately, the outcome of the computation is relayed by the \Compo{Log Processor} from the \Compo{TEE} to the \Compo{TEE Interface} running atop the \Compo{Operating System} of the \Compo{Miner Node} (step 3). The CONFINE protocol does not impose restrictions on the post-computational handling of results, once they leave the \Compo{TEE}. In our motivating scenario, the University and the National Institute of Statistics, serving as miners, disseminate the outcomes of computations, generating analyses that benefit the provisioners (though the original event log data are never revealed in clear). Furthermore, our protocol enables the potential for provisioners to have their proprietary \Compo{Secure Miner}, allowing them autonomous control over the computed results.

%% file: content/figures/protocolphases.tex
\begin{tikzpicture}[
	]
	\newcommand{\arrowbox}[4]{
		\begin{scope}[shift={(#1,0)}]
			\path[draw, thick, fill=white]
			(0,0) -- (3,0) -- (3.5,1.25) -- (3,2.5) -- (0,2.5) -- (0.5,1.25) -- cycle;
			\node[align=center] at (1.90,1.25) {#2};
			\node[align=center, text=blue] at (1.55,3.1) {\Cref{#3}};
			\node[text=blue] at (1.70,-0.6) {\Cref{#4}};
		\end{scope}
	}
	
	\arrowbox{0}    {Initialization}       {deployment:protocol:init}        {fig:init}
	\arrowbox{3.4}  {Remote\\Attestation}  {deployment:protocol:attestation} {fig:attestation}
	\arrowbox{6.8}  {Data\\Transmission}   {deployment:protocol:transmission}{fig:transmission}
	\arrowbox{10.2} {Computation}          {deployment:protocol:computation} {fig:computation}
	
	\draw [decorate,decoration={brace,amplitude=5pt,mirror,raise=4ex}]
	(0,-0.25) -- (10,-0.25) node[midway,yshift=-3.5em]{$\Confine_{\SecM}$ (\cref{alg:secm}), $\Confine_{\LPrv}$ (\cref{alg:lprv})};
\end{tikzpicture}

%% file: algorithms/secureMinerJ.tex
%
\tiny
\DontPrintSemicolon \SetAlgoVlined \SetInd{0.3em}{1em}
\setstretch{1.1}

\KwIn{%
	$\LPrvS = \{ \LPrv_1, \ldots, \LPrv_n \}$, the (references to) $n$ log provisioners.%
}
\Hyperparameters{
	$\SegSize$, the maximum size of the log segment to be transmitted by the log provisioners;
	\newline
	$\DoYieldCases$, indicating whether single cases (if assigned with \textit{true}) or only the full log at the end (if \textit{false})) should be yielded.
}
\KwData{%
	$\CIDMap : \CIdU \to 2^{\LPrvS}$, a map from case references $\CId \in \CIdU$ to the set of log provisioners in $\LPrvS$ ; \newline%
	$\LPrvMap : \LPrvS \to 2^{\CIdU}$, a map from log provisioners $\LPrv \in \LPrvS$ to the set of references to their cases in $\CIdU$ ; \newline%
	$\CStor : \CIdU \to \CasU$, a map from case references $\CId \in \CIdU$ to cases in $\CasU$.
}
\Implements{\Compo{SecureMiner}, {\Inst} {\SecM.}}
\Uses{\textit{AuthenticatedPerfectPointToPointLink}, {\Inst} {\AuthPtP.}}
\BlankLine
\UponEv(\tcp*[f]{The \Compo{Log Requester} of {\SecM} starts the CONFINE protocol -- \textit{initialization} phase in \cref{fig:init})}){$\langle \SecM, \Mesg{Init} \: |\: \LPrvS, \SegSize, \DoYieldCases\rangle$ \label{alg:secm:init}}{
	\ForEach(\tcp*[f]{For every \Compo{Provisioner} $\LPrv$}){$\LPrv \in \LPrvS$}{%
		\Trigger{$\langle \AuthPtP, \Mesg{Send} \gg \LPrv\ \:|\: \Mesg{CasesRefReq} \rangle$\label{alg:secm:CaseRefsReq:call}} \tcp*{Request \LPrv's case references (see \crefalgln[alg:lprv]{alg:lprv:CasesRefReq})}
	}
	\Upon(\tcp*[f]{Once all \Compo{Provisioner}s have answered with their case references}){$|\LPrvS| = |\textrm{dom}(\LPrvMap)|$\label{alg:secm:respcollection}}{%
		\lForEach
		{$\LPrv \in \LPrvS$}{
			\Trigger{$\langle \AuthPtP, \Mesg{Send} \gg {\LPrv}\:|\:\Mesg{CasesReq},\SegSize, \LPrvMap[\LPrv]\rangle$\label{alg:secm:CasesReq:call}}
			\tcp*[f]{Request their cases via {\AuthPtP} (see \crefalgln[alg:lprv]{alg:lprv:CasesReq})}
		}
	}
}

\UponEv(\tcp*[f]{{\SecM}'s \Compo{Log Requester} gets {\LPrv}'s case references via {\AuthPtP} (\crefalgln[alg:lprv]{alg:lprv:sendRefs})}){$\langle \AuthPtP, \Mesg{Deliver} \ll \LPrv \:|\: \Mesg{CasesRefRes}, \CIdS\rangle$ \SuchThat{$\LPrv \in \LPrvS$} \label{alg:secm:CaseRefsRes}}{
	\ForEach(\tcp*[f]{For every received case reference $\CId$ in {\CIdS}}){$\CId \in \CIdS$}{%
		$\CIDMap[\CId] \gets \CIDMap[\CId] \cup \{\LPrv\} $ \tcp*{Add $\LPrv$ to the set of provisioners for case $\CId$ in $\CIDMap$}
	}
	$\LPrvMap[\LPrv] \gets \LPrvMap[\LPrv] \cup \CIdS $ \tcp*{Register the references of the cases provided by $\LPrv$ in $\LPrvMap$}
}

\UponEv(\tcp*[f]{$\SecM$'s \Compo{Log Receiver} gets segment $\Segm = \left( \EqCls{\Segm}, \prec_\Segm \!\! \right)$ from {\LPrv} (\crefalgln[alg:lprv]{alg:lprv:sendSegs})}){$\langle \AuthPtP, \Mesg{Deliver} \ll \LPrv \:|\: \Mesg{CasesRes}, \Segm \rangle$ \SuchThat{$\LPrv \in \LPrvS$} \label{alg:secm:CasesRes} }{
	\ForEach(\tcp*[f]{For every $\Evt_{\Segm}$ in the delivered segment $\Segm$, each associated with an {\CId} -- see the \textit{data transmission} phase in \cref{fig:transmission} \label{alg:secm:visitevents}}){$\Evt_\Segm \in \EqCls{\Segm}$}{%
		\If(\tcp*[f]{If {\LPrv} has declared the ownership of {\CId} (see \crefalgln{alg:secm:CaseRefsRes} \label{alg:secm:checkiid})}){$\CIdF(\Evt_\Segm) \in \LPrvMap[\LPrv]$}  {%
			$\LPrvMap[\LPrv] \gets \LPrvMap[\LPrv] \,\setminus\, \{\CId\}$  \tcp*[r]{\raggedright Remove {\CId} from the set of case references to be provided by {\LPrv} \label{alg:secm:removecid}}
			$\CIDMap[\CId] \gets \CIDMap[\CId] \,\setminus\, \{\LPrv\}$ \tcp*[r]{Remove $\LPrv$ from the set of $\CId$ provisioners \label{alg:secm:removeprov}}
			$\CasP_\CId = \left( \EqCls{\Case_\LPrv}, \prec_{\Case_\LPrv} \right) \gets \CasF\left( \left( \EqCls{\Segm}, \prec_\Segm \right), \CId \right)$ 
			\tcp*[r]{\raggedright Collect in $\CasP_\CId$ the case extracted from the partition with case reference $\CId$ (see \cref{def:case}) \label{alg:secm:buildcase}}
			$\CStor[\CId] \gets \SecM.\Compo{LogManager}.\Call{merge}\left( \CStor[\CId], \CasP_\CId \right)$ \tcp*[r]{Merge the case and store the result in {\CStor}\label{alg:secm:mergeAndStore} -- see \cref{alg:merge}}
			$\EqCls{\Segm} \gets \EqCls{\Segm} \,\setminus\, \EqCls{\Case_\LPrv}; \quad \prec_\Segm \gets \prec_\Segm \,\setminus\, \prec_{\Case_\LPrv} $ \tcp*[r]{Remove the events and order relation's tuples of case $\CasP_\CId$ from segment $\Segm$}
		}
	}
}

\Upon(\tcp*[f]{When all the provisioners have delivered some $\CId$} to \SecM's \Compo{Log Manager}){$\CIDMap[\CId]= \emptyset$ for some $\CId \in \textrm{dom}(\CIDMap)$\label{alg:secm:computation}} {%
	$\textrm{dom}(\CIDMap) \gets \textrm{dom}(\CIDMap) \,\setminus\, \{\CId\}$\tcp*[r]{Remove $\CId$ from the domain of cases which still needs to be processed}
	\If(\tcp*[r]{If $\DoYieldCases$ flag is set to true}){\DoYieldCases\label{alg:secm:yieldcases}}{
		\Yield{\CStor[\CId] \To{\SecM.\Compo{LogProcessor}}} \tcp*[f]{Forward the case $\CId$ to the \Compo{Log Processor} of {\SecM} for mining -- \textit{computation} phase in \cref{fig:computation}\label{alg:secm:yield1}}}
	\Else(\tcp*[f]{If $\DoYieldCases$ flag is set to false}){\label{alg:secm:yieldlog}
		\If(\tcp*[f]{If all the cases have been collected}){$\textrm{dom}(\CIDMap) = \{\}$}{%
			$\EvtL \gets \left( \{\}, \{\}\right)$\tcp*[r]{Initialize an empty event log $\EvtL$}
			\ForEach(\tcp*[f]{For each complete case $\Case$ stored in $\CStor$\dots}){$\Case \in \CStor$}{
				$\EvtL \gets \SecM.\Compo{LogManager}.\Call{merge}\left(\EvtL, \Case \right)$\tcp*[r]{Add the case $\Case$ to the event log $\EvtL$ via merge}
			}
			\Yield{\EvtL \To{\SecM.\Compo{LogProcessor}}} \tcp*[f]{Forward the event log $\EvtL$ to the \Compo{Log Processor} of {\SecM} for mining -- \textit{computation} phase in \cref{fig:computation}\label{alg:secm:yield2}}
		}
	}
}

%% file: algorithms/mergeAndStoreAlg.tex
\tiny
\DontPrintSemicolon \SetAlgoVlined \SetInd{0.3em}{1em}
\setstretch{1.1}

\KwIn{%
	$\EvtL' = \left( \EvtU', \preceq' \right)$, an event log; \quad
	$\EvtL'' = \left( \EvtU'', \preceq'' \right)$, another event log.
}
\KwOut{%
	The merge of $\EvtL'$ and $\EvtL''$
}
\Fn{\Call{merge}$\left( \EvtL', \EvtL'' \right)$}{
	$\EvtU \gets \left(\EvtU' \cup \EvtU''\right);  \quad \preceq \gets \left(\preceq' \cup \preceq''\right); \quad \EvtL \gets \left(\EvtU, \preceq \right)$ \label{alg:merge:init}\\
	\ForEach(\tcp*[f]{For each event in $\EvtL''$}){$\Evt'' \in \EvtU''$\label{alg:merge:cycle}}{
		$\preceq \gets \preceq \cup \left\{ \left( \Evt',\Evt'' \right): \Evt' \in \EvtU' \textrm{ and } \TimeF\left(\Evt'\right) \leqslant \TimeF\left(\Evt''\right) \right\}$\tcp*[f]{For every $\Evt'$ in $\EvtU'$ s.t.\ $\TimeF\left(\Evt'\right) \leqslant \TimeF\left(\Evt''\right)$ add to $\preceq$ that $\Evt'$ precedes $\Evt''$\label{alg:merge:e1prece2}}\\
		$\preceq \gets \preceq \cup \left\{ \left( \Evt'',\Evt' \right): \Evt'' \in \EvtU \textrm{ and } \TimeF\left(\Evt''\right) \leqslant \TimeF\left(\Evt'\right) \right\}$\tcp*[f]{For every $\Evt'$ in $\EvtU'$ s.t.\ $\TimeF\left(\Evt''\right) \leqslant \TimeF\left(\Evt'\right)$ add to $\preceq$ that $\Evt''$ precedes $\Evt'$\label{alg:merge:e2prece1}}\\
	}
	\Return{\EvtL}
}

%% file: algorithms/provisioningJ.tex
\tiny
\DontPrintSemicolon \SetAlgoVlined \SetInd{0.3em}{1em}
\setstretch{1.1}

\KwIn{%
	$\SecMS = \{ \SecM_1, \ldots, \SecM_s \}$, the (references to) $s$ miners.%
}

\Hyperparameters{
    /
}

\Implements{\Compo{Provisioner}, {\Inst} {\LPrv}.}
\Uses{\textit{AuthenticatedPerfectPointToPointLink}, {\Inst} {\AuthPtP}.}
\UponEv(\tcp*[f]{{\LPrv} receives the request for case references from {\SecM} (see \crefalgln[alg:secm]{alg:secm:CaseRefsReq:call})}){$\langle \AuthPtP, \Mesg{Deliver} \ll \SecM \:|\: \Mesg{CasesRefsReq} \rangle$ \label{alg:lprv:CasesRefReq}\SuchThat{$\SecM \in \SecMS$}}{%
            $\left( \EqCls{\LPrv}, \prec_\LPrv \!\! \right) \gets \LPrv.{\Compo{LogRecorder}}.\Call{accessEventLog()}$\tcp*[f]{Access the log partition via \Compo{Log Recorder}}\\
            $\CIdS \gets \left\{\CIdF(\Evt'_\LPrv):\Evt'_\LPrv \in \EqCls{\LPrv} \right\}$\tcp*[f]{Get the set of case references in the log partition}\\
            \Trigger{$\langle \AuthPtP, \Mesg{Send} \gg \SecM \:|\: \Mesg{CasesRefRes},\CIdS \rangle$} \tcp*[f]{{Send the case references to {\SecM} (see \crefalgln[alg:secm]{alg:secm:CaseRefsRes})\label{alg:lprv:sendRefs}}}
}
\UponEv(\tcp*[f]{{\LPrv} gets the case request from {\SecM} (see \crefalgln[alg:secm]{alg:secm:CasesReq:call})}){$\langle \AuthPtP, \Mesg{Deliver} \ll \SecM \:|\: \Mesg{CasesReq},\SegSize, \CIdS\rangle$ \label{alg:lprv:CasesReq}\SuchThat{$\SecM \in \SecMS$}}{%
        $\Rep \gets \SecM.\Compo{LogReceiver}.\Call{getAttestationReport}(\LPrv)$ \label{alg:lprv:getReport}\tcp*[f]{Get the attestation report of {\SecM} -- \textit{remote attestation} phase in \cref{fig:attestation}}\\
	\If(\tcp*[f]{Execute the remote attestation of the report via \Compo{Log Provider}}){\LPrv.\Compo{LogProvider}.\Call{executeRemoteAttestation}(\Rep) is successful\label{alg:lprv:attestation}}{
            $\left( \EqCls{\LPrv}, \prec_\LPrv \!\! \right) \gets \LPrv.{\Compo{LogRecorder}}.\Call{accessEventLog()}$\tcp*[f]{Access the log partition via \Compo{Log Recorder}}\\
		$\{\Segm_1, \ldots, \Segm_m\} \gets \LPrv.\Compo{LogProvider}.\Call{segmentEventLog}\left(\left( \EqCls{\LPrv}, \prec_\LPrv \!\! \right),\CIdS,\SegSize \right)\label{alg:lprv:logaccess}$\tcp*[f]{Segment the event log -- see \cref{alg:segment}} \\
		\ForEach(\tcp*[f]{For every segment $\Segm_i$}){$i \in \{1, \ldots, m\}$}{%
			\Trigger{$\langle \AuthPtP, \Mesg{Send} \gg \SecM \:|\: \Mesg{CasesRes},\Segm_i \rangle$ } \tcp*[f]{Send the segment $\Segm_i$ to {\SecM} (see \crefalgln[alg:secm]{alg:secm:CasesRes}) -- \textit{data transmission} phase in \cref{fig:transmission}} \label{alg:lprv:sendSegs}
		}
	}
}

%% file: algorithms/segmentEventLog.tex
\tiny
\tiny
\DontPrintSemicolon \SetAlgoVlined \SetInd{0.3em}{1em}
\setstretch{0.9}

\KwIn{%
$ \EvtL_{\LPrv} = \left( \EqCls{\LPrv}, \prec_\LPrv \!\! \right)$, the log partition of provisioner $\LPrv$;
$\CIdS$, a set of case references;
$\SegSize$, a segment size.

}

\KwOut{
$\{\Segm_1 \dots \Segm_m \}$ a set of $m$ segments
}

\Fn{\Call{segmentEventLog}$\left( \EvtL_{\LPrv}, \CIdS, \SegSize \right)$}{
$\SegSet \gets \{\}$\tcp*[f]{Initialize the segment set to return}\\
$\EqCls{\Segm} \gets \{\};\quad \prec_\Segm \gets \{\};\quad \Segm \gets \left(\EqCls{\Segm}, \prec_\Segm \right) $\tcp*[f]{Initialize an empty segment}\\
\ForEach(\tcp*[f]{For each case in the log partition}){$\CId \in \CIdS$}{
$\CasP_\CId = \left( \EqCls{\Case}, \prec_{\Case} \right) \gets \CasF\left( \left( \EqCls{\LPrv}, \prec_\LPrv \right), \CId \right)$ 
\tcp*[r]{\raggedright Collect in $\CasP_\CId$ the case extracted from the partition with case reference $\CId$}
   \If(\tcp*[f]{If $\CasP_\CId$ do not fit the remaining space of the segment $\EqCls{\Segm}$}){\Call{SizeOf}$\left(\Segm\right)$ + \Call{SizeOf}$\left(\CasP_\CId\right) > \SegSize$}{
            $\SegSet \gets \SegSet \cup \left\{ \Segm \right\}$\tcp*[f]{Put the segment $\left( \EqCls{\Segm}, \prec_\Segm \!\! \right)$ in \SegSet}\\
            $\EqCls{\Segm} \gets \{\};\quad \prec_\Segm \gets \{\};\quad \Segm \gets \left(\EqCls{\Segm}, \prec_\Segm \right) $\tcp*[f]{Empty the segment}
		}
	$\prec_\Segm \gets \prec_\Segm \cup \prec_\Case$ \tcp*[f]{Add the case's order rel.\ to the current segment}\\
	$\prec_\Segm \gets \prec_\Segm \cup \left(\prec_\LPrv \cap \left( \EqCls{\Case} \times \EqCls{\Segm} \; \cup \; \EqCls{\Segm} \times \EqCls{\Case} \right) \right)$ \tcp*[f]{Add the order rel.\ from the case's events to the current segment's ones and vice-versa} \\
	$\EqCls{\Segm} \gets \EqCls{\Segm} \cup \EqCls{\Case}$ \tcp*[f]{Add the case's events to the current segment}\\
}

} 

%% file: content/formalvalidation.tex
Our solution provides security guarantees that are built upon an architecture leveraging TEEs as its core, compounded with a protocol we specifically devise for data provisioning and treatment.
In \cref{sec:discussion:security}, we analyze how data secrecy threats are mitigated by the integration of TEE's features in our architecture.
For the latter, we conduct a formal analysis of our protocol for segmentation and merge of event data stemming from heterogeneous sources in \cref{sec:correctness}, Specifically, we focus on the safety and liveness of CONFINE in its data transmission phase by formal verification of the underpinning algorithms' soundness and completeness.

\subsection{TEE-based Security Risk Mitigation}
\label{sec:discussion:security}
\begin{figure}
    \centering
    \includegraphics[width=1\linewidth]{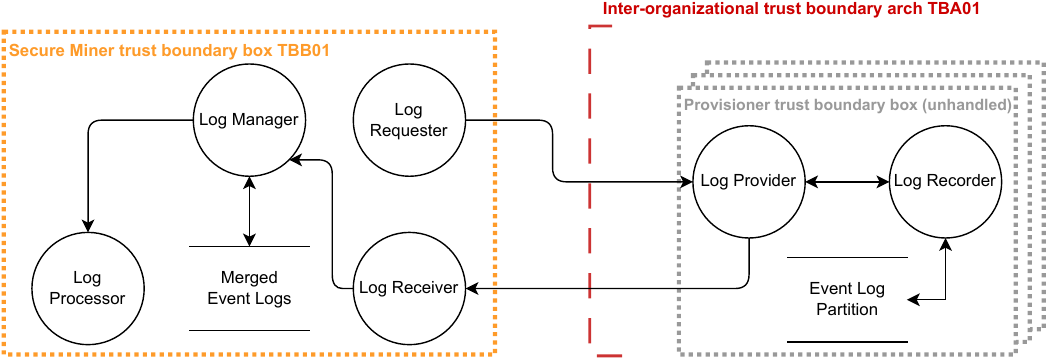}
    \caption{Data flow diagram and trust boundaries of our architecture}
    \label{fig:threatmodel:dataflow}
\end{figure}
\begin{table}[b]
\resizebox{\textwidth}{!}{%
    \centering
    \begin{tabular}{llll}\toprule
        \textbf{ID} & \textbf{Trust boundary}& \textbf{Type}  & \textbf{Threat description}\\ \midrule
        T01& \multirow{3}{*}{\makecell[l]{TBB01}}& Information disclosure  & The attacker accesses the \Compo{Secure Miner}'s memory location to leak the event logs\\ 
        T02& & Tampering  & The attacker meddles with the source code of the \Compo{Secure Miner} or its event log data\\ 
        T03& & Elevation of privileges  & The attacker gains the rights to run in the same environment of the \Compo{Secure Miner}\\\midrule 
        T04& \multirow{2}{*}{\makecell[l]{TBA01}}& Spoofing  & The attacker impersonates a \Compo{Secure Miner} to gain access to the \Compo{Provisioner}'s log\\ 
        T05& & Information disclosure & The attacker sniffs the data exchanged between the \Compo{Provisioner} and the \Compo{Secure Miner}\\
        \bottomrule
    \end{tabular}
    \caption[TEE-related threats in CONFINE]{TEE-related threats in CONFINE}
    \label{table:threatmodel:threats}
}
\end{table}
Here we pursue a twofold objective:
\begin{inparaenum}[\itshape(i)\upshape]
\item to provide a clear rationale for the integration of the \Compo{TEE} within our architecture, and
\item to examine the security guarantees that the \Compo{TEE}-related mechanisms offer in the context of our solution.
\end{inparaenum}
To this end, we build upon the STRIDE framework for the definition of a set of data-secrecy threats. Thereafter, we clarify how the \Compo{TEE}-based measures counteract these adversarial scenarios.

In \cref{fig:threatmodel:dataflow}, we present the Data Flow Diagram (DFD) which captures the components of the CONFINE architecture and the information exchange between them. Through this representation, we demarcate the \textit{trust boundaries} among the CONFINE components. In our DFD, a trust boundary box (the orange dotted square) groups entities with mutual trust. Differently, trust boundary arches (red dashed lines) mark the transitions of trust level on the occasion of data exchanges (the black arrows) between untrusted parties. We identify three trust boundaries: the \Compo{Secure Miner}'s trust boundary box (\textbf{TBB01}), the inter-organizational trust boundary arch (\textbf{TBA01}), and the \Compo{Provisioner}'s trust boundary box.

In \cref{table:threatmodel:threats}, we summarize the identified threats, indicating for each the trust boundary it targets and its corresponding STRIDE category: \textit{spoofing}, \textit{tampering}, \textit{repudiation}, \textit{information disclosure}, \textit{denial of service}, and \textit{elevation of privileges}. In our examination, we exclude the security considerations pertaining to \Compo{Provisioner}'s trust boundary boxes (the gray dotted square in \cref{fig:threatmodel:dataflow}), as menaces targeting event logs in their native organizational domain fall beyond the scope of this study.
In what follows, we examine how the \Compo{TEE} mechanisms embedded in our architecture mitigate each of these threats.

\noindent{\textbf{(T01) The attacker accesses the Secure Miner’s memory location to leak the event logs.}} 
As introduced in \cref{sec:deployment}, the \Compo{TEE} profile we consider in CONFINE provides hardware-based memory encryption. Therefore,  data and code belonging to the \Compo{Secure Miner} trusted app are managed by the CPU in a hardware-encrypted region of the main memory~\cite{costan2016intel}. The cryptographic key for decrypting this data randomly changes each time the machine undergoes a power cycle, and it is ingrained within the hardware of the CPU. Through a hardware-based access control mechanism, the processor interacts with reserved event log data solely through the execution of designated instructions delineated in the source code of the \Compo{Secure Miner}. In the event of a physical leakage attack, where attackers attempt to compromise sensitive information by gaining direct access to this private storage, they would only acquire ciphered data.

\noindent{\textbf{(T02) The attacker meddles with the source code of the Secure Miner or its event log data.}} Log partitions (defined in \cref{def:partition}) residing within the \Compo{TEE} can solely be manipulated by the source code of our \Compo{Secure Miner} trusted app. Any unauthorized modifications to the \Compo{Secure Miner}'s code after its deployment in memory would result in an altered measurement (i.e., the hash identifier for a trusted app source code). The processor possesses the capability to detect such alterations by comparing the measurement calculated on the tampered \Compo{Trusted App} with the one computed during the deployment phase. Through this mechanism, the \Compo{TEE} can forestall any malevolent modifications to event logs.

\noindent{\textbf{(T03) The attacker gains the rights to run in the same environment of the Secure Miner.}} In our approach, we consider application-level \Compo{TEE}s. Therefore, each \Compo{Secure Miner} application operates within its own dedicated \Compo{TEE} instance. This implies that reserved data can uniquely be accessed by a specific source code (in our case, the one of the \Compo{Secure Miner}). Any other trusted application software running on the same device is allocated a separate \Compo{TEE} instance, effectively rendering privilege escalation attacks infeasible.


\noindent{\textbf{(T04) The attacker impersonates a Secure Miner to gain access to the Provisioner’s log.}}
In CONFINE, we consider attestable \Compo{TEE}s. Therefore, during the attestation phase of our protocol, \Compo{Provisioner}s engage in the assessment of a hardware-signed attestation evidence to determine the authenticity of the sender, which purportedly represents a \Compo{Secure Miner} software entity operating within a \Compo{TEE} provided by an authorized organization. Depending on the specific TEE technology, \Compo{Provisioner}s may enlist an endorser~\cite{rfc9334}, serving as the fundamental authority (i.e., the CPU manufacturer in the majority of TEE instances). This authority is tasked with validating the report by confirming its signature with a valid attestation key. Attestation keys are securely embedded within the hardware and cannot be accessed by any entity other than the processor itself.
Once the \Compo{Provisioner} validates the report, it scrutinizes the measurement, which must correspond to the reference value associated with the \Compo{Secure Miner}'s source code. In case an unknown trusted app triggers the report generation, the CPU computes a software measurement that differs from the \Compo{Secure Miner}'s reference value, thus yielding a failure in the attestation process.

\noindent{\textbf{(T05) The attacker sniffs the Provisioner’s log sent to the Secure Miner.}} 
By leveraging attestation, the \Compo{TEE} enables the creation of a secure communication channel in which transmitted log partitions can only be decrypted inside the \Compo{TEE} itself. Following the successful attestation of the \Compo{Secure Miner}'s workload, the \Compo{Provisioner}s extract from the customizable section of the attestation evidence a public key that is randomly generated by the \Compo{Secure Miner} and whose corresponding private key is securely sealed within the \Compo{TEE}. This private key is inaccessible to any external actors, including users with administrative privileges over the mining machine. Using this key material, the \Compo{Provisioner} and the \Compo{Secure Miner} can establish the encrypted communication channel. Within this channel, segments are encrypted using an ephemeral symmetric key selected by the \Compo{Provisioner}. The symmetric key is transmitted in encrypted form using the aforementioned public key, together with the encrypted segments. Decryption of the symmetric key is only possible using the sealed private key within the \Compo{TEE}. Consequently, any adversary intercepting the transmission can access only ciphertext, thereby preserving the confidentiality of the transmitted information.

\subsection{Soundness and Completeness}\label{sec:correctness}

Hereafter, we formally prove the safety and liveness of our system~\cite{DBLP:journals/tse/Lamport77} in the execution of its data transmission phase by verifying soundness and completeness of the underlying algorithms. 
With the former, we ensure that if the protocol terminates, it returns a correct result. With the latter, we guarantee that the a result is always eventually returned.
Specifically, our objective is to prove that the ensemble of $\Confine_{\SecM}$ and $\Confine_{\LPrv}$ (\cref{alg:secm,alg:lprv}, respectively) always produces a safe merge as per \cref{def:merge} under the assumption of authenticated perfect point-to-point link used as a communication channel and honest \Compo{Provisioner}s. First, we prove soundness and completeness for our realization of the pivotal \Call{merge} function (\cref{alg:merge}). Thereupon, we prove the general statements pertaining to the protocol. Finally, we build upon the previous statements to prove the result's convergence with respect to a generic process mining algorithm (as per \cref{def:process:mining}). 

To begin with, we prove that our realization of the \Call{merge} function (\cref{alg:merge}) is \emph{safe}.
\begin{lemma}[Safety of \Call{merge}]\label{lma:merge:safety}

	Let $\EvtL' = \left( \EvtU', \preceq' \right)$ and
	$\EvtL'' = \left( \EvtU'', \preceq'' \right)$ be two event logs as per \cref{def:evt:log} whose total orders $\preceq'$ and $\preceq''$ are induced by the timestamp attribute as per \cref{post:order:timestamp}. The output of $\Call{merge}\left(\EvtL',\EvtL''\right)$ is always $\EvtL = (\EvtU, \preceq) = \EvtL' \Merge \EvtL''$ wherein $\Merge$ is safe as per \cref{def:merge}.

\end{lemma}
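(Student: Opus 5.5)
The plan is to compute the relation produced by \Call{merge} in closed form and then check the two safety conditions of \cref{def:merge} against that description. The output has event set $\EvtU = \EvtU' \cup \EvtU''$, as fixed at \crefalgln[alg:merge]{alg:merge:init}. For the relation, I will show that after the loop at \crefalgln[alg:merge]{alg:merge:cycle}
\[
\preceq \;=\; \preceq' \cup \preceq'' \cup \{(\Evt',\Evt'') : \TimeF(\Evt') \leqslant \TimeF(\Evt'')\} \cup \{(\Evt'',\Evt') : \TimeF(\Evt'') \leqslant \TimeF(\Evt')\},
\]
with $\Evt' \in \EvtU'$ and $\Evt'' \in \EvtU''$ ranging over the two logs. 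Each iteration only adds pairs, and every cross pair is added exactly when the iteration for its $\EvtU''$-component is reached. So a simple induction on the iterations gives this closed form, and in particular $\preceq' \cup \preceq'' \subseteq \preceq$. This already makes the output a merge in the sense of \cref{def:merge}.

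The key intermediate claim is that $\preceq$ coincides with $R = \{(a,b) \in \EvtU \times \EvtU : a = b \text{ or } \TimeF(a) \leqslant \TimeF(b)\}$. Inside $\EvtU'$, \cref{post:order:timestamp} says that $\preceq'$ is exactly $R$ restricted to $\EvtU'$. Reflexive pairs are covered because $\preceq'$ is a total order, hence reflexive. The same argument applies inside $\EvtU''$. For mixed pairs, the two cross sets above are literally $R$ restricted to $\EvtU' \times \EvtU''$ and to $\EvtU'' \times \EvtU'$. Once $\preceq = R$ is established, condition 2 of safety (timestamp as an order isomorphism) holds by construction.

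For condition 1, $R$ is the pullback of the total order $\leqslant$ on $\mathbb{N}$ along $\TimeF$. It is therefore reflexive, transitive and total. Antisymmetry needs $\TimeF$ to be injective on $\EvtU$, and I expect this to be the main obstacle. Within a single log, injectivity follows from \cref{post:order:timestamp}: if $\Evt \neq \Evt'$ had equal timestamps, both $\Evt \preceq \Evt'$ and $\Evt' \preceq \Evt$ would hold, contradicting antisymmetry of the total order. Across the two logs, I would use that both are sublogs of one global event log $\EvtL = (\EvtU, \preceq)$, which is the setting of \cref{def:partition,def:segment,def:case}. Under \cref{post:order:timestamp} applied to that global log, $\TimeF$ is injective on the whole universe of events, by the same argument.

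A further subtlety is an event lying in $\EvtU' \cap \EvtU''$. Such an event causes no conflict, because $\TimeF$ is a single global attribute, so both logs assign it the same timestamp. Finally, I note the typo at \crefalgln[alg:merge]{alg:merge:e2prece1}, which reads ``$\Evt'' \in \EvtU$''. I will read it as ranging over $\Evt' \in \EvtU'$, as its comment states, so that the closed form above is the one the algorithm actually computes.
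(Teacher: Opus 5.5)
Your proof is correct, but it takes a different route from the paper's. The paper argues by induction over the iterations of the loop in \cref{alg:merge}. Its base case takes two singleton logs, where the two cross-order lines decide whether $\Evt' \preceq \Evt''$ or vice versa via \cref{post:order:timestamp}. Its inductive step asserts that those same lines re-establish the isomorphism, and that totality could only fail if it had already failed at iteration $k-1$. You instead compute the output relation in closed form and identify it with the pullback of $\leqslant$ along $\TimeF$. Condition~2 of \cref{def:merge} is then immediate, and condition~1 reduces to checking the order axioms of that pullback. Your route gains in several ways. It verifies all order axioms at once, whereas the paper's inductive step mentions only the isomorphism and totality and never addresses transitivity or antisymmetry. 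It covers every cardinality, including empty logs and logs of unequal size. The paper's cases cover only two singletons and $k',k''>1$, even though \cref{alg:secm} calls \Call{merge} on an initially empty log. Most importantly, it shows that antisymmetry needs $\TimeF$ to be injective on $\EvtU' \cup \EvtU''$. This does not follow from the hypotheses on the two logs taken separately. The paper's base case tacitly relies on it: if $\Evt' \neq \Evt''$ but $\TimeF(\Evt') = \TimeF(\Evt'')$, both $(\Evt',\Evt'')$ and $(\Evt'',\Evt')$ get added. Your appeal to the global log over the universe of events, to which \cref{post:order:timestamp} applies, is the right way to discharge this within the paper's framework. Your characterization also shows the output depends only on $\EvtU' \cup \EvtU''$ and $\TimeF$, which yields \cref{prop:safemerge:assoc:commut} essentially for free. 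The paper's iteration-based induction stays closer to the algorithm's control flow, but as written it is a sketch rather than a complete argument.
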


\begin{proof}
	\emph{(Base case)} Let $\EvtL'$ and $\EvtL''$ consist of one event each: $\EvtU'= \{\Evt'\}$ (hence $\Evt' \preceq' \Evt'$), $\EvtU''= \{\Evt''\}$ (hence $\Evt'' \preceq'' \Evt''$).
	\crefalgln[alg:merge]{alg:merge:init} initializes $\EvtU$ as $\{\Evt',\Evt''\}$ and $\preceq$ as $\{(\Evt',\Evt'),(\Evt'',\Evt'')\}$.
	It remains to establish whether $\Evt' \preceq \Evt''$ or vice-versa.
	The execution of \crefalgln[alg:merge]{alg:merge:e1prece2} and \crefalgln[alg:merge]{alg:merge:e2prece1} verify the statement by direct application of the isomorphism postulated in \cref{post:order:timestamp} since it employs the events' timestamps as the criterion for establishing the order.
	\emph{(Inductive case)} Let $\EvtU'$ and $\EvtU''$ be of cardinality $k'>1$ and $k''>1$, respectively. Let $e^{(k)} \in \EvtU''$ be the event fetched by the cycle on \crefalgln[alg:merge]{alg:merge:cycle} at the $k$-th iteration, with $1 < k \leq k''$. As per \cref{post:order:timestamp}, the isomorphism is guaranteed by \crefalgln[alg:merge]{alg:merge:e1prece2} and \crefalgln[alg:merge]{alg:merge:e2prece1}. Totality can only be violated if it was not preserved at the $(k-1)$-th iteration, which contradicts the induction over the base case. 
\end{proof}
Equipped with this result underpinning the basic operation backing the merging of event log partitions stemming from diverse \Compo{Provisioner}s, we can proceed to prove the general statement confirming the soundness and completeness of our protocol.

\begin{theorem}[Soundness and completeness]\label{theo:soundnessAndCompleteness}
	Let {\SecM} be a \Compo{Secure Miner} following $\Confine_{\SecM}$ (\cref{alg:secm}) with input references to \Compo{Provisioner}s $\LPrv_1, \ldots, \LPrv_n$, whereby the latter are honest, follow $\Confine_{\LPrv}$ (described in \cref{alg:lprv}), and provide log partitions $\EvtL_{\LPrv,1}, \ldots, \EvtL_{\LPrv,n}$ with $n \in \mathbb{N}$. Upon the completion of the data transmission phase, {\SecM} always (completeness) yields an event log $\EvtL = \EvtL_{\LPrv,1} \Merge \cdots \Merge \EvtL_{\LPrv,n}$ wherein $\Merge$ is safe (soundness).
\end{theorem}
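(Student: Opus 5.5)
We split the argument into completeness (every event of every partition eventually reaches {\CStor} and a log is yielded) and soundness (what is yielded is a safe merge of the partitions). Both rest on \cref{lma:merge:safety} and \cref{prop:safemerge:assoc:commut}, together with the authenticated perfect point-to-point link (\cref{asm:authperfectlink}) and honest provisioners (\cref{asm:honestprovisioners}).

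\emph{Completeness.} First we show that the initialization phase terminates. By reliable delivery, each $\Mesg{CasesRefReq}$ sent at \crefalgln[alg:secm]{alg:secm:CaseRefsReq:call} is delivered to an honest $\LPrv_i$. That provisioner answers at \crefalgln[alg:lprv]{alg:lprv:sendRefs}, and its answer is in turn delivered. Hence $\textrm{dom}(\LPrvMap)$ eventually reaches $|\LPrvS|$, and $\Mesg{CasesReq}$ is sent to each provisioner. Under \cref{asm:teefeatures}, attestation succeeds for a genuine {\SecM}. \Call{segmentEventLog} (\cref{alg:segment}) then iterates over all ${\CId}\in\LPrvMap[\LPrv_i]$ and places every event of each case $\CasF(\EvtL_{\LPrv,i},\CId)$ into some segment. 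Two points need care here. The last open segment must also be emitted; we read the algorithm (or a one-line fix) as adding the final $\Segm$ to {\SegSet}. By \cref{asm:segsize}, every case fits in a segment. Each of the $m$ segments is delivered exactly once (reliable delivery, no duplication). At \crefalgln[alg:secm]{alg:secm:checkiid}, every event's {\CId} lies in $\LPrvMap[\LPrv_i]$, because the honest provisioner sends only the cases it declared. So each case is extracted whole, merged, and removed from $\LPrvMap[\LPrv_i]$ and from $\CIDMap[\CId]$. Since a segment contains all of a provisioner's events for a given {\CId} (\cref{def:segment}), a single visit suffices. After all deliveries, $\CIDMap[\CId]=\emptyset$ for every {\CId}, so the trigger at \crefalgln[alg:secm]{alg:secm:computation} fires for every case and eventually $\textrm{dom}(\CIDMap)=\{\}$. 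A log is therefore yielded at \crefalgln[alg:secm]{alg:secm:yield2}, or each case is yielded at \crefalgln[alg:secm]{alg:secm:yield1}.

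\emph{Soundness.} We argue by induction on the number of \Call{merge} calls that every $\CStor[\CId]$ is a safe merge of the sub-cases received so far. The base is the empty log, which is the identity by \cref{prop:safemerge:assoc:commut}. At each step, $\CasP_\CId$ is a restriction of a partition, and hence an event log whose order is induced by timestamps (\cref{post:order:timestamp}, \cref{ams:timestamps}). \Cref{lma:merge:safety} then keeps the result safe. At completion, $\CStor[\CId]=\Merge_{i:\,\CId\in\LPrvMap_0[\LPrv_i]}\CasF(\EvtL_{\LPrv,i},\CId)$, where $\LPrvMap_0$ denotes the map as filled during initialization. The final loop merges all stored cases, again safely by \cref{lma:merge:safety}. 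It remains to identify this result with $\EvtL_{\LPrv,1}\Merge\cdots\Merge\EvtL_{\LPrv,n}$. We use associativity and commutativity (\cref{prop:safemerge:assoc:commut}) to regroup the merge of all (provisioner, case) pieces into per-provisioner blocks. Each block equals $\EvtL_{\LPrv,i}$, because the cases of a partition cover $\EqCls{\LPrv_i}$ (surjectivity of $\CIdF$) and a safe merge reconstructs the timestamp-induced order, which coincides with $\preceq_{\LPrv_i}$. The event sets agree because the classes $\EqCls{\LPrv_i}$ partition $\EvtU$ and no event is dropped or duplicated.

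The main obstacle is this last regrouping step: showing that the order relation of the result is independent of merge grouping. The sketch of \cref{prop:safemerge:assoc:commut} covers it, but I would make it explicit. A safe merge of logs over a given event set has its order \emph{uniquely} determined, namely $\{(\Evt,\Evt'):\TimeF(\Evt)\le\TimeF(\Evt')\}$, so any two safe merges over the same events are equal. A second delicate point is the premise that the $\Compo{Deliver}$ handler never skips events. This requires honesty, since an honest provisioner never sends an undeclared {\CId}, together with the fact that each {\CId}'s events arrive in exactly one segment per provisioner.
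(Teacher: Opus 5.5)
Your proof is correct and follows the same route as the paper's. Completeness comes from the authenticated perfect link plus honest provisioners, and soundness from \cref{lma:merge:safety} together with the associativity and commutativity of the safe merge (\cref{prop:safemerge:assoc:commut}). You are more thorough than the paper's sketch. It skips the initialization and attestation phases as verifiable by inspection. It does not address that \cref{alg:segment} as printed never adds the last open segment to {\SegSet}. It also leaves implicit the uniqueness of the timestamp-induced order that your regrouping step makes explicit, which is exactly the determinism its sketch of \cref{prop:safemerge:assoc:commut} relies on.
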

\begin{proof}[Proof sketch.]
	The communication channel (henceforth {\AuthPtP}) is an Authenticated Point-to-point Perfect Link~\cite{Cachin.etal/2011:ReliableSecureDistributedProgramming}. 
	For the sake of brevity, we begin this proof sketch with the \Compo{Log Manager}'s assemblage of cases (\crefalgln[alg:secm]{alg:secm:mergeAndStore}) following \Compo{Log Receiver}'s receipt of segments (see \crefalgln[alg:secm]{alg:secm:CasesRes} and \crefalgln[alg:lprv]{alg:lprv:sendSegs}).
	We leave out the analysis of the preceding interactions (i.e., the initialization and remote attestation phases, the subsequent gathering of case references and polishing of internal data structures), since their soundness and completeness can be verified by inspection of the algorithms (see the code blocks starting at \crefalgln[alg:secm]{alg:secm:init}, \crefalgln[alg:lprv]{alg:lprv:CasesReq}, \crefalgln[alg:lprv]{alg:lprv:CasesRefReq}, and \crefalgln[alg:secm]{alg:secm:CaseRefsRes}) given the channel's guarantees, the \Compo{Provisioner}s' honesty, and the secure miner's authenticity.

	\noindent \emph{(Soundness).} Assume that every log partition $\LPrv_i$ consists of exactly one segment, $\Segm_i$, with $1 \leq i \leq n$. Let $\Segm_i$ contain (part of) a case identified by $\CId$, $\CasP_\CId$, and $\CStor[\CId]$ be the current copy of the case identified by $\CId$ in $\SecM$. Notice that the building of $\CStor[\CId]$ is based on \crefalgln[alg:secm]{alg:secm:buildcase}, wherein all events associated to $\CId$ in $\SecM$ are extracted and put in a total order relation that restricts the segment's total order to those events.
	The assemblage of cases is carried out by invoking \Call{merge} (\cref{alg:merge}) on $\CStor[\CId]$ and  $\CasP_\CId$ (\crefalgln[alg:secm]{alg:secm:mergeAndStore}). 
	By definition, a case is an event log~(see \cref{def:case}). Thus, $\CStor[\CId]$ and $\CasP_\CId$ can be considered as such, too. Therefore, \cref{lma:merge:safety} applies, and the merge of $\CStor[\CId]$ and  $\CasP_\CId$ is safe.
	Since a safe merge is associative and commutative (\cref{prop:safemerge:assoc:commut}), the number of events in the segment and the processing order thereof do not affect the correctness of the above statement. Also, this operation is repeated for every case in the segment as all events in it are visited (\crefalgln[alg:secm]{alg:secm:visitevents}) and every case is checked once (as guaranteed by \crefalgln[alg:secm]{alg:secm:checkiid}, \crefalgln[alg:secm]{alg:secm:removecid}, and \crefalgln[alg:secm]{alg:secm:removeprov}). Therefore, the statement holds regardless of the number of cases in the segment. 
	If a log partition consists of more than one segment, the \Compo{Secure Miner} reiterates the same above procedure for every such segment (\crefalgln[alg:secm]{alg:secm:CasesRes}), and does so in a mutual exclusive fashion, thus impeding faulty over-writings on $\CStor$ (\crefalgln[alg:secm]{alg:secm:mergeAndStore}). The same conclusion can be drawn extending the discourse to a setting with multiple \Compo{Provisioner}s delivering their own segments. Again, the processing order for segments does not influence the outcome due to the associativity and commutativity of the safe merge (see \cref{prop:safemerge:assoc:commut}) that the \Call{merge} function realizes (as per \cref{lma:merge:safety}).
	\\
	\emph{(Completeness).} Now, we show that, operating under our assumptions, no segment gets lost during the transmission phase (thus, blocking the protocol execution and undermining the completeness). Assuming every {\LPrv} is honest, they transmit all and only the segments of which their partitions consist following $\Confine_{\LPrv}$ (\cref{alg:lprv}). Since the communication channel resorts to an Authenticated Perfect Point-to-Point Perfect Link, all broadcast segments are delivered once, and no package is lost. Therefore, every event in a received segment (\crefalgln[alg:secm]{alg:secm:CasesRes}) is visited by the routine starting at \crefalgln[alg:secm]{alg:secm:visitevents}, hence all the cases therein. All merged cases are thus finally yielded with the procedure starting at \crefalgln[alg:secm]{alg:secm:computation}, thus completing the data transmission phase.
\end{proof}

In light of the above proof, we can finally prove the convergence of CONFINE. 
%

\begin{theorem}[Convergence]    \label{evaluation:correctness:convergence}
	Let $\EvtL$ be an event log such that $\EvtL = \EvtL_{\LPrv,1} \Merge \cdots \Merge \EvtL_{\LPrv,n}$ wherein $\Merge$ is safe as per \cref{def:merge}.
	Let $\EvtL' = \Confine_{\SecM}(\LPrv_1, \ldots, \LPrv_n)$ be the event log yielded as a result of the data transmission phase ($\Confine_{\SecM}$) by a \Compo{Secure Miner} {\SecM} following the protocol in \cref{alg:secm} with references to \Compo{Provisioner}s $\LPrv_1, \ldots, \LPrv_n$, whereby the latter follow the protocol described in \cref{alg:lprv} and provide logs $\EvtL_{\LPrv,1}, \ldots, \EvtL_{\LPrv,n}$ with $n \in \mathbb{N}$.
	Let $\PMFunc$ be a process mining function as per \cref{def:process:mining}. 
	The application of $\PMFunc$ on the event log $\EvtL$ and on $\EvtL'$ leads to the same result, i.e.,
	$\PMFunc(\EvtL) = \PMFunc(\EvtL')$.
\end{theorem}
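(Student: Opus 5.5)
The plan is to reduce the claim to an equality of event logs and then use that a process mining function is a function. By \cref{def:process:mining}, $\PMFunc$ is deterministic: equal arguments (the same log and the same hyperparameters $\PMParams$) yield equal outputs. So it suffices to show $\EvtL' = \EvtL$, i.e., that the two pairs have the same event set and the same order relation.

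First, I would apply \cref{theo:soundnessAndCompleteness}. Under the stated hypotheses (provisioners following \cref{alg:lprv}, authenticated perfect link), the secure miner eventually yields $\EvtL' = \EvtL_{\LPrv,1} \Merge \cdots \Merge \EvtL_{\LPrv,n}$ with every $\Merge$ safe. By \cref{prop:safemerge:assoc:commut}, the safe merge is associative and commutative with identity $(\{\},\{\})$. Hence neither the bracketing of merges nor the order in which segments and cases reach $\CStor$ or the final accumulation loop affects the result. Both $\EvtL$ and $\EvtL'$ are thus a safe merge of the same family of log partitions.

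The main step, and the expected obstacle, is showing that a safe merge of fixed operands is uniquely determined. \cref{def:merge} only requires $\preceq_\Merge$ to be some superset of the operand orders. I would first establish the event sets: both equal $\bigcup_i \EqCls{\LPrv_i}$, which is all of $\EvtU$ because the $\EqCls{\LPrv_i}$ partition $\EvtU$ (\cref{def:provisioner}). For the orders, safety requires $\preceq_\Merge$ to be total and $\TimeF$ to be an order isomorphism, so for distinct $\Evt,\Evt'$ we have $\Evt \preceq_\Merge \Evt'$ iff $\TimeF(\Evt) \leqslant \TimeF(\Evt')$. Reflexive pairs are forced by totality. Therefore both relations coincide with the timestamp-induced order on $\EvtU$, and the same argument gives the original $\preceq$ of the global log by \cref{post:order:timestamp}.

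Two further points need attention. For the per-case yielding mode ($\DoYieldCases$ true), the argument needs $\PMFunc$ to be incremental so that feeding cases one by one equals feeding their merge. I would state this as the compatibility requirement from \cref{deployment:protocol:computation} and treat it as $\PMFunc$ applied to the accumulated safe merge, which the uniqueness argument reduces to $\EvtL$. The other subtlety is that the isomorphism argument for distinct events assumes timestamps are injective on distinct events, or at least that ties are resolved identically. I would rely on \cref{post:order:timestamp}, applied on the whole universe together with antisymmetry of a total order, to exclude equal timestamps. With $\EvtL' = \EvtL$ established, $\PMFunc(\EvtL) = \PMFunc(\EvtL')$ follows.
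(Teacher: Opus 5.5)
Your proposal is correct and follows the paper's route: determinism of $\PMFunc$ combined with \cref{theo:soundnessAndCompleteness}, argued directly where the paper argues by contradiction. Your explicit proof that a safe merge of fixed operands is unique (totality plus the timestamp isomorphism force $\preceq_{\Merge}$) supplies the step the paper's sketch leaves implicit when it asserts that $\EvtL' \neq \EvtL$ would contradict merge safety.
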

\begin{proof}[Proof sketch.]
	By contradiction, suppose that  $\PMFunc(\EvtL) \neq \PMFunc(\EvtL')$. Given that the process mining algorithm \PMFunc is deterministic (as per \cref{def:process:mining}) and all its parameters are given, this would imply that $\EvtL$ is not correctly reconstructed during the execution of the protocol, i.e., $\Confine_{\SecM}(\LPrv_1, \ldots, \LPrv_n) \neq \EvtL$. In other words, {\SecM} computes $\EvtL_{\LPrv,1} \Merge \cdots \Merge \EvtL_{\LPrv,n} = \EvtL'$, with $\EvtL' \neq \EvtL$. We assume that each \Compo{Provisioner} $\LPrv_{i}$ behaves honestly and that the communication channel is an Authenticated Perfect Point-to-Point channel. Therefore, the incorrect reconstruction of $\EvtL'$ would contradict the merge safety property demonstrated in \cref{lma:merge:safety}, and more generally, it would violate the soundness of the protocol as established in \cref{theo:soundnessAndCompleteness}. 
\end{proof}

%% file: content/evaluation.tex
\begin{table}[bt]
	\caption{Event logs used for our experiments}
	\label{tab:testedlogs}
	\centering
	\input{content/tables/eventLogs}
\end{table}

In this section, we report on our empirical evaluation of CONFINE. First, we introduce our implementation in \cref{sec:implementation:details}. Then, we propose an experimental verification of \cref{evaluation:correctness:convergence}, empirically demonstrating the output convergence in \cref{sec:discussion:subsec:convergence}. Finally, we evaluate our implementation with runtime tests on memory usage to show that our approach determines low memory demand, compatible with the constrained capacity of TEEs (\cref{sec:discussion:subsec:convergence}).

\subsection{Implementation}
\label{sec:implementation:details}
We implemented the \texttt{Secure Miner} component as an Intel SGX~\cite{DBLP:journals/istr/BagherL23} trusted application, encoded in Go through the EGo framework.\footnote{\href{https://docs.edgeless.systems/ego}{\nolinkurl{https://docs.edgeless.systems/ego}.} Accessed: December 2, 2025.}
We opted for an Intel SGX TEE for our prototype because its features align with the TEE profile outlined in \cref{sec:background:tee} and offers a well-documented developer ecosystem. These characteristics made Intel SGX a practical choice for implementing our research prototype aimed at demonstrating the feasibility of CONFINE. However, we highlight that recent studies have reported vulnerabilities~\cite{DBLP:conf/sp/MurdockOGBGP20,DBLP:conf/uss/BulckMWGKPSWYS18}, and real-world deployments may therefore prefer more robust alternatives or incorporate tailored mitigation strategies to address these concerns~\cite{DBLP:journals/csur/FeiYDX21}.

We encoded the provisioner's \Compo{Log Server} in GO. 
The implementation of our \Compo{Secure Miner} component facilitates input reception from users via a console application (i.e., the \Compo{TEE Interface} in \cref{sec:deployment}). We store the descriptive information of the \Compo{Trusted App} (i.e., heap size, embedded files, and environment variables) in a JavaScript Object Notation (JSON) file. In our implementation, the \Compo{Secure Miner} app maintains references to provisioners' \Compo{Log Server}s in a dedicated JSON file, alongside the label of the {\CId} attribute in their respective log partitions. We embed in the \Compo{TEE} during deployment. We resort to a Transport Layer Security (TLS)~\citep{Thomas/2000:SSL-TLS} communication channel between miners and provisioners over the HyperText Transfer Protocol (HTTP) web protocol to secure the information exchange and authenticate the miner's organization. The control data structures {\CIDMap} and {\LPrvMap} (refer to \cref{alg:secm}) generated by the \Compo{Secure Miner} during the CONFINE protocol are formatted as JSON files.

We integrate two established process mining algorithms within our \Compo{Secure Miner} implementation: \textit{HeuristicsMiner}~\cite{weijters2006process} and \textit{DeclareConformance}~\cite{DBLP:conf/bpm/DonadelloRMS22} from the \textit{pm4py} library.%
\footnote{\href{github.com/process-intelligence-solutions/pm4py}{\nolinkurl{github.com/process-intelligence-solutions/pm4py}}. Accessed: December 2, 2025.}
In \cref{sec:background:processmining}, we provided an overview of their core mechanisms.
The selected algorithms are representative of process mining techniques along two complementary dimensions. First, \textit{HeuristicsMiner} and \textit{DeclareConformance} perform the core process mining tasks of process discovery and conformance checking, respectively. Second, they reflect the two predominant modeling paradigms in process mining, namely \textit{imperative} (as in the case of \textit{HeuristicsMiner}) and \textit{declarative} approaches (as in the case of \textit{DeclareConformance})~\cite{DBLP:conf/bpm/PichlerWZPMR11,DiCiccio.Montali/PMH2022:DeclarativeProcessMining}. Through the integration of these algorithms, we demonstrate that our implementation is applicable to distinct process mining tasks and accommodates different methodological paradigms.

Upon receiving each complete case (see \cref{def:case}), our \textit{HeuristicsMiner} implementation produces a causal net, which is subsequently transformed into the corresponding workflow net of the analyzed process. The \Compo{TEE} outputs the resulting models in the Petri Net Markup Language (PNML) format, a standard for representing workflow nets.%
\footnote{\href{https://www.pnml.org}{\nolinkurl{www.pnml.org}}. Accessed: December 2, 2025.}
Similarly, our \textit{DeclareConformance} implementation computes fitness scores, which are returned from the \Compo{TEE} in the form of JSON files.
%
For both algorithms, we provide the following two versions.
\begin{inparadesc}
	\item[Incremental:] In this version, the algorithm takes batches of the complete cases as an input and updates a partial result while data transmission is still ongoing;
	\item[Non-incremental:] The algorithm is triggered at the end of the data transmission, once all the cases are collected.
\end{inparadesc}
Our implementation of CONFINE, including the \textit{HeuristicsMiner} and the \textit{DeclareConformance} in Go, is openly accessible at the following URL: \href{https://github.com/Process-in-Chains/CONFINE/}{\nolinkurl{github.com/Process-in-Chains/CONFINE/}}.

\begin{figure}[t]
	\includegraphics[width=1\linewidth]{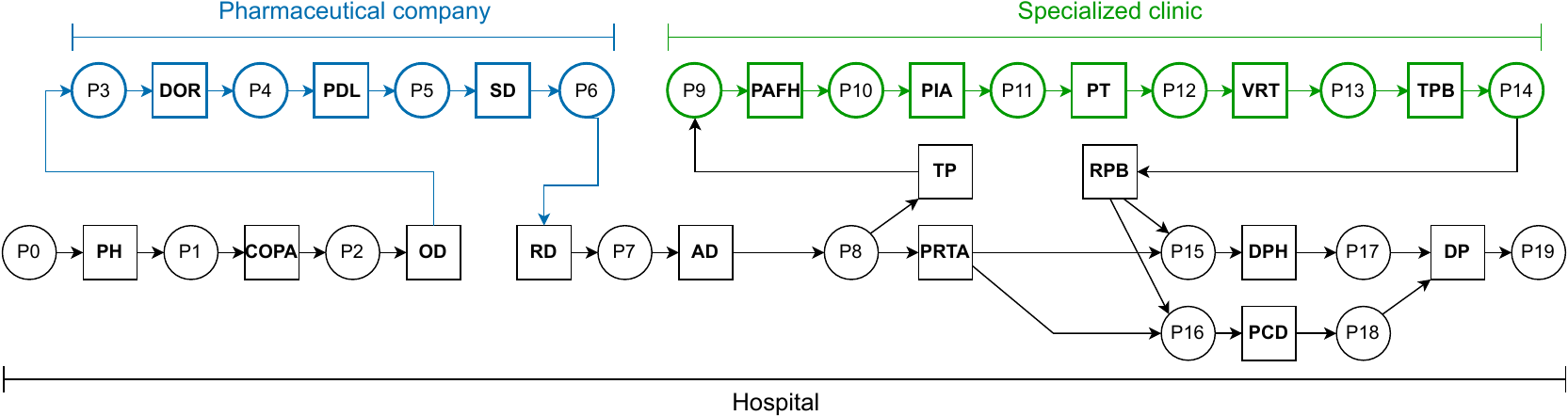}\label{fig:wfnet:d}
	\caption[HeuristicsMiner output]{%
    \emph{HeuristicsMiner} output in CONFINE%
    }
	\label{fig:wfnet}
\end{figure}
\subsection{Convergence}
\label{sec:discussion:subsec:convergence}
To experimentally validate \cref{evaluation:correctness:convergence}, we run a \emph{convergence} test. We created a synthetic event log consisting of \num{1000} cases (as per \cref{def:case}) of \num{14} events on average (see \cref{tab:testedlogs}) by simulating the inter-organizational process of  \cref{fig:BPMN_Healthcare}.%
\footnote{We generated the event log through BIMP (\href{https://bimp.cs.ut.ee/}{\nolinkurl{bimp.cs.ut.ee}}; accessed: December 2, 2025). We filtered the log by retaining the sole events that report on the completion of activities, and removed the start and end events of the \Actor{Pharmaceutical company} and \Actor{Specialized clinic}'s sub-processes.}
and we partitioned it into three sub-logs (one per involved organization), an excerpt of which is listed in~\cref{tab:trace}.
We run the stand-alone \textit{HeuristicsMiner} on the former and process the latter through our CONFINE toolchain.
As expected, the results converge and are depicted in \cref{fig:wfnet} in the form of a workflow net~\citep{Aalst/ICATPN97:VerificationofWfNs}. 
For the sake of clarity, we color activities recorded by the organizations following the scheme of~\cref{tab:testedlogs} (black for the \Actor{Hospital}, blue for the \Actor{Pharmaceutical company}, and green for the \Actor{Specialized clinic}). Additionally, we add labels to specify the organization involved in separate fragments of the workflow net.

\subsection{Performance Assessment}
\label{sec:evaluation:performanceAssessment}
In what follows, we 
report on our experiments conducted to put our tool implementation to the test. 
As discussed in~\cref{sec:deployment}, the availability of space in the dedicated TEE areas is subject to hardware limitations. Therefore, we focus on memory consumption since exceeding those limits could diminish the level of security guaranteed by TEEs.
We 
gauge the memory usage with synthetic and real-life event logs, to observe the trend during the enactment of our protocol.
In particular, we focus on runtime consumption (\cref{sec:evaluation:subsec:MemoryUsage}), the effect that the segment size hyperparameter discussed in \cref{sec:deployment:protocol} has on efficiency (\cref{sec:eval:segmentsize}), and scalability with respect to the event log size and the number of provisioners (\cref{sec:evaluation:scalability}).
The raw execution data alongside the scripts to plot the obtained results are openly accessible at \href{https://github.com/Process-in-Chains/CONFINE/tree/main/evaluation/}{\nolinkurl{github.com/Process-in-Chains/CONFINE/tree/main/evaluation/}}.

\begin{figure}[tbp]
\centering
\begin{subfigure}{0.49\linewidth}
  \centering
  \includegraphics[width=\linewidth]{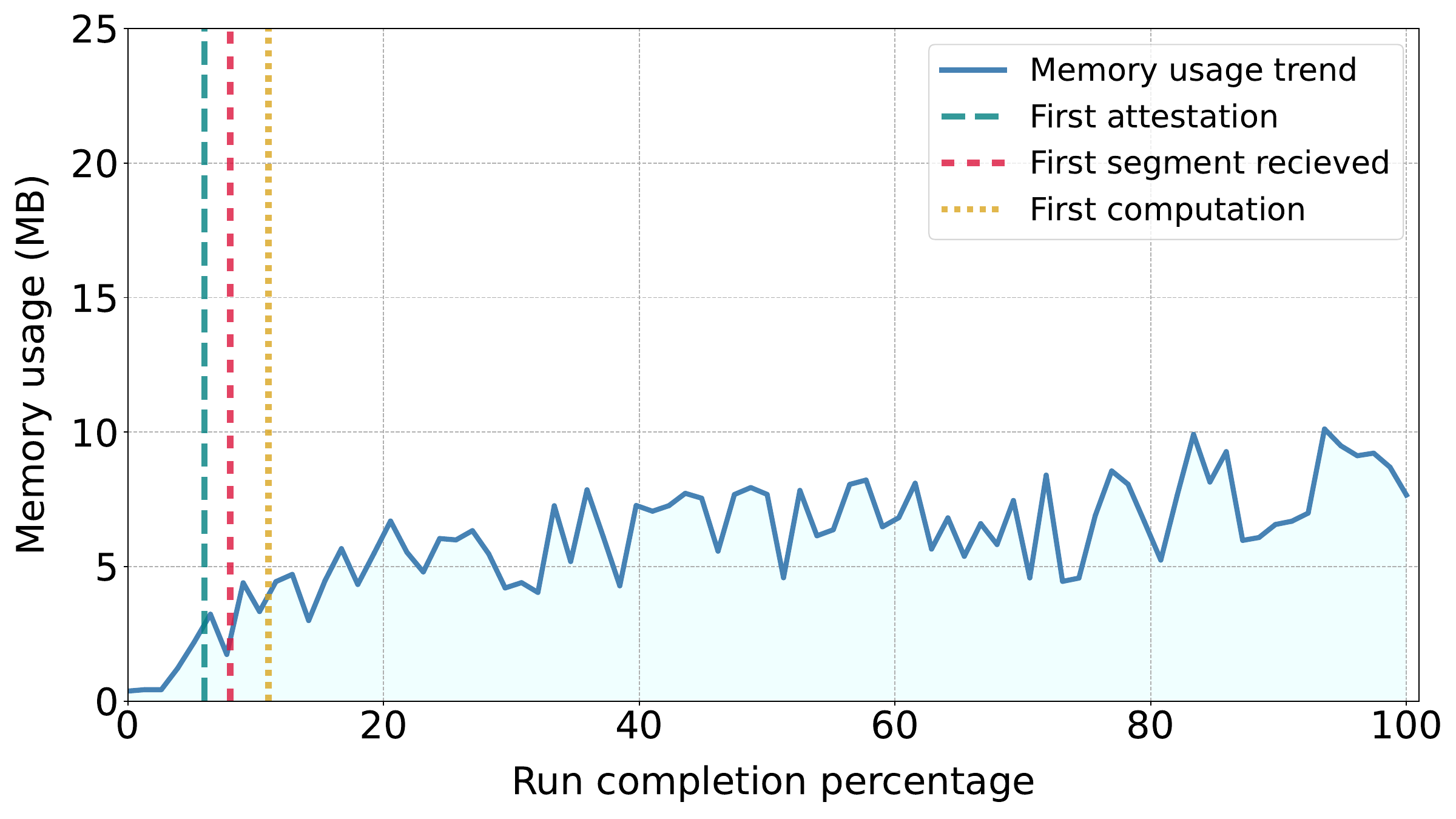}
  \caption{Memory usage with incremental HeuristicsMiner}
  \label{snr_a}
\end{subfigure}\hfill
\begin{subfigure}{0.49\linewidth}
  \centering
  \includegraphics[width=\linewidth]{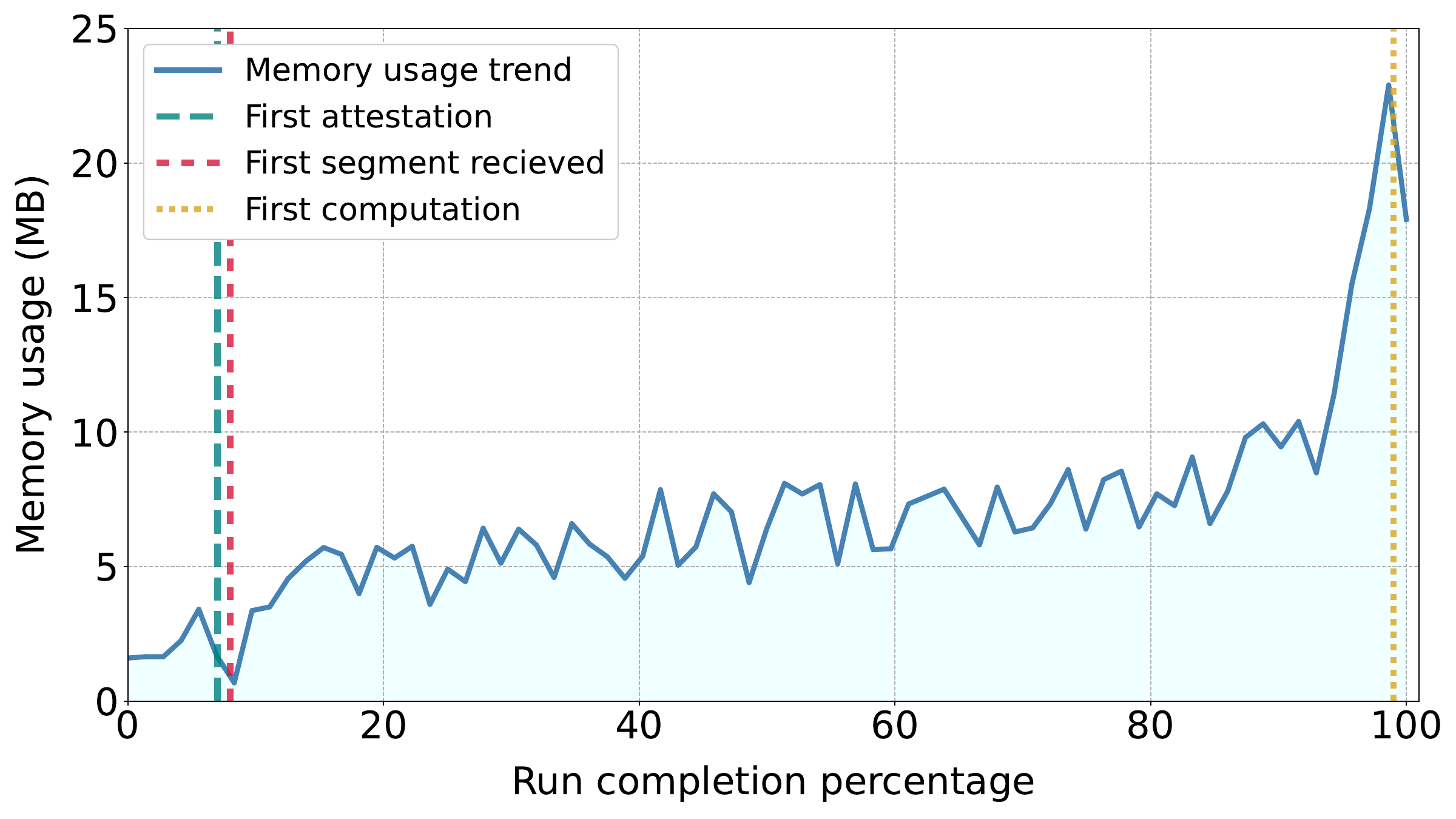}
  \caption{Memory usage with the non-incremental HeuristicsMiner}
  \label{snr_b}   
\end{subfigure}

\begin{subfigure}{0.49\linewidth}
  \centering
  \includegraphics[width=\linewidth]{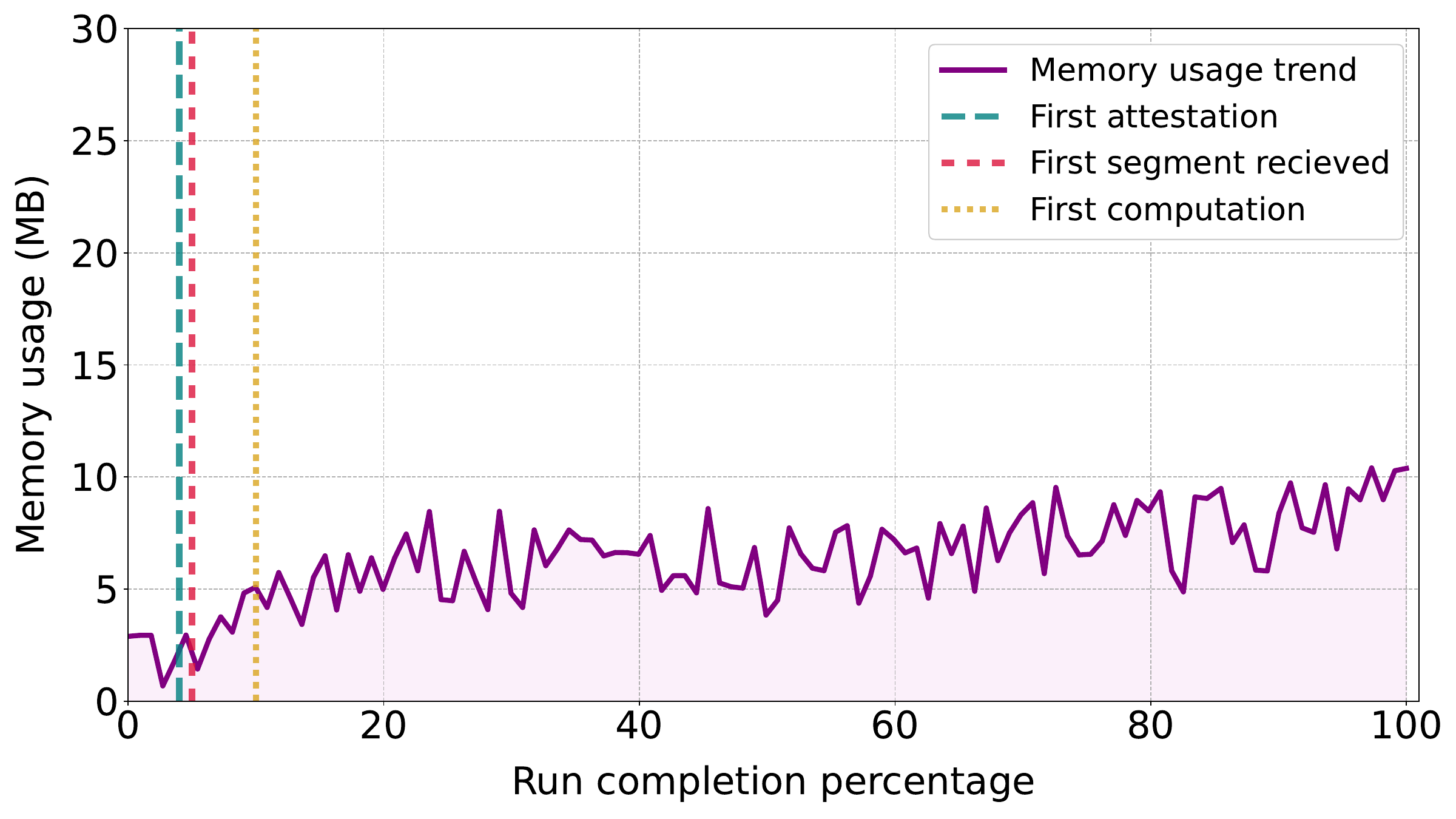}
  \caption{Memory usage with incremental DeclareConformance}
  \label{snr_c}
\end{subfigure}\hfill
\begin{subfigure}{0.49\linewidth}
  \centering
  \includegraphics[width=\linewidth]{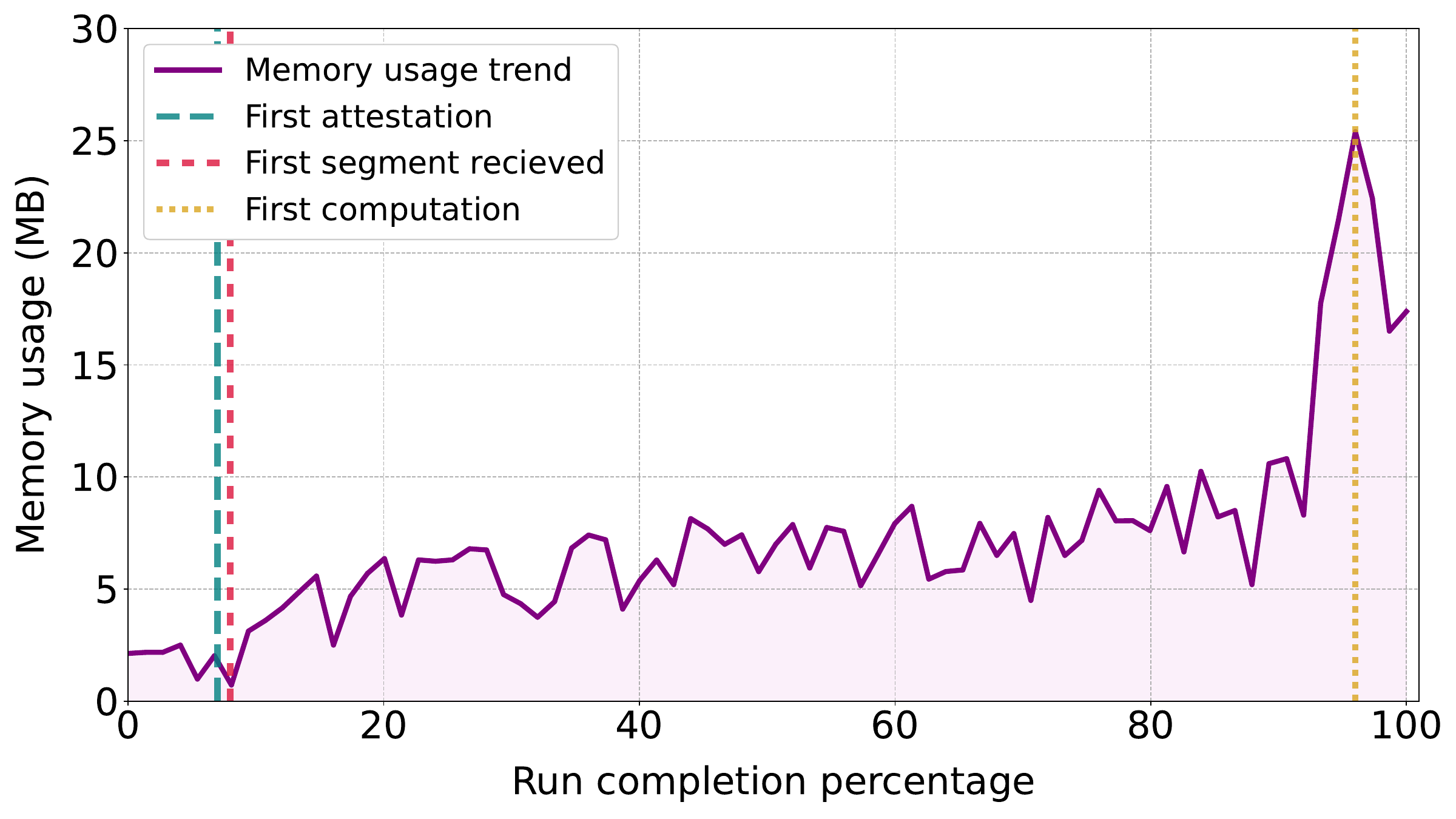}
  \caption{Memory usage with non-incremental DeclareConformance}
  \label{snr_d}   
\end{subfigure}

\begin{subfigure}{0.49\linewidth}   
  \centering      
  \includegraphics[width=\linewidth]{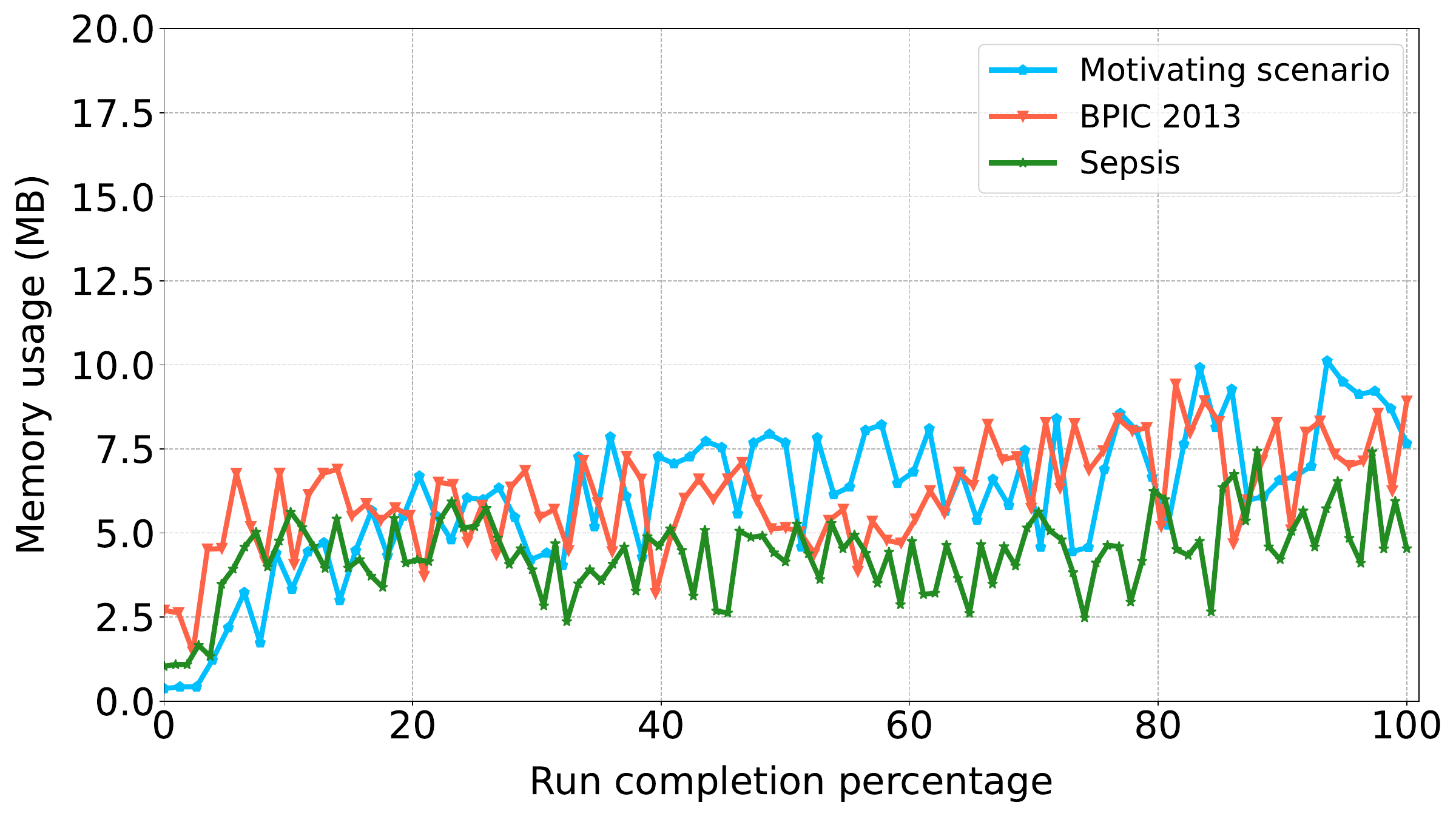}
  \caption{Memory usage with three event logs and the incremental HeuristicsMiner}
  \label{snr_e}
\end{subfigure}\hfill
\begin{subfigure}{0.49\linewidth}   
  \centering      
  \includegraphics[width=\linewidth]{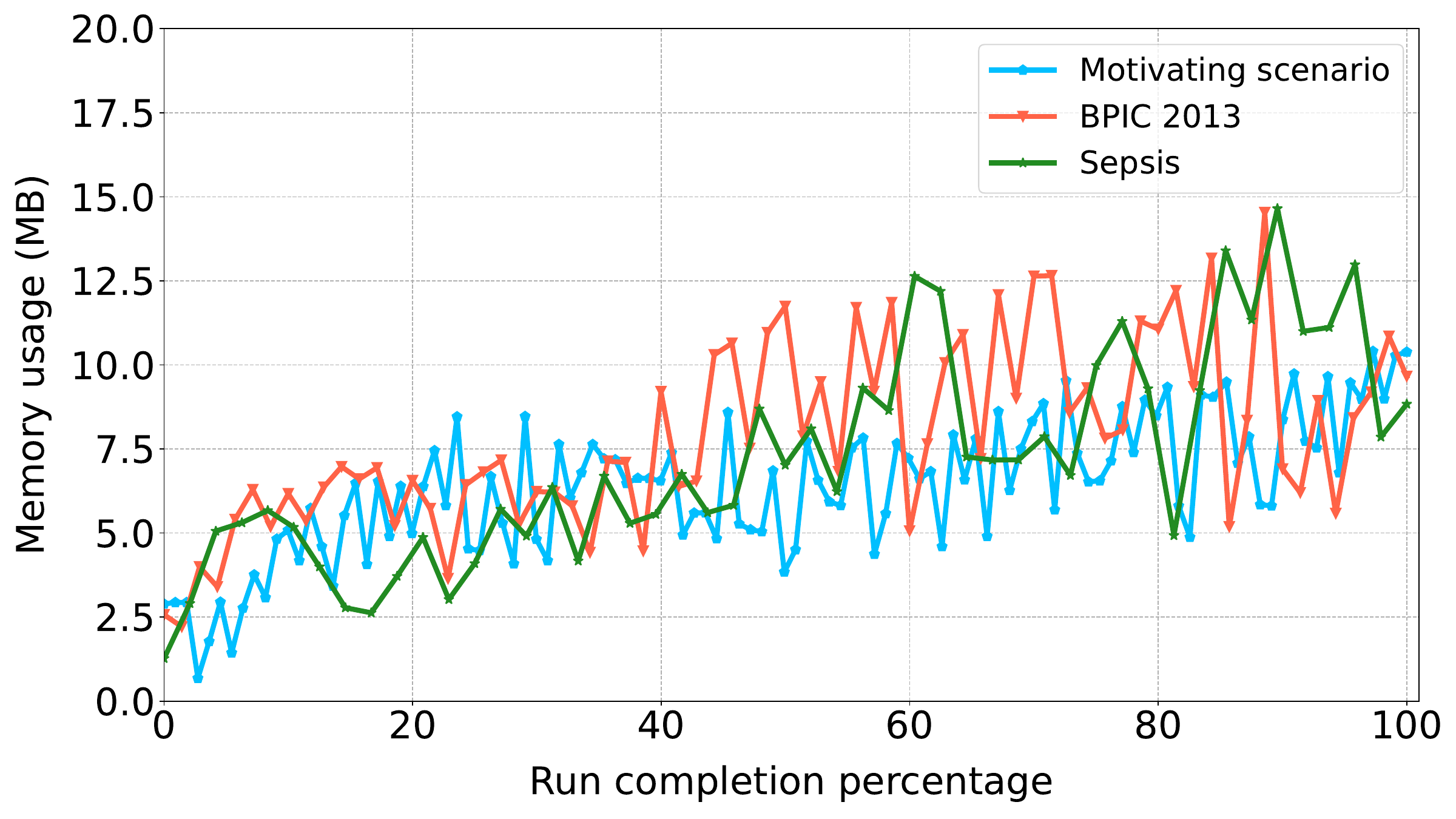}
  \caption{Memory usage with three event logs and the incremental DeclareConformance}
  \label{snr_f}
\end{subfigure}
\caption{Memory usage test results}
\label{fig:memtest}
\end{figure}
\subsubsection{Runtime analysis}\label{sec:evaluation:subsec:MemoryUsage}
\Cref{snr_a,snr_b,snr_c,snr_d} display the runtime memory utilization of our \Compo{Secure Miner} implementation (in MegaBytes) using the synthetic log derived from our motivating scenario as input.  \Cref{snr_a} illustrates the memory utilization trend with the integration of the incremental variant of the HeuristicsMiner, as described in \cref{deployment:protocol:computation}. Differently, \cref{snr_b} pictures the test outcome resulting from the execution of the protocol with the non-incremental variant of the algorithm. The dashed lines mark the starting points for the remote attestation, the data transmission, and the computation stages. We held the {\SegSize} constant at \num{100} KiloBytes. In both figures, we observe that after the initiation of the data transmission stage (the dashed red line), the memory utilization falls approximately between 5 and 10 MB. The major difference between the two trends concerns the computation phase. In \cref{snr_a}, the HeuristicsMiner starts computing cases when the data transmission is still ongoing. This ensures the memory utilization trend remains within a constant range until the end of the run. In \cref{snr_b}, the HeuristicsMiner is triggered after all the \Compo{Provisioners} have delivered their respective log partitions (as per \cref{def:partition}). Consequently, the final peak at the yellow dashed line signifies the computation of the entire event log, making the memory utilization diverge to a value of between 20 and 25 MB. This behavior is confirmed by \cref{snr_c,snr_d}, in which we consider the memory usage trend of the DeclareConformance algorithm variants. The tests indicate that the DeclareConformance demands a higher memory utilization than the HeuristicsMiner due to the involvement of the declarative process model in the conformance checking task.

To verify whether these behaviors are due to the synthetic nature of our simulation-based event log, we also gauge the runtime memory usage of two public real-world event logs (Sepsis~\citep{seps} and BPIC 2013~\citep{bpic2013}). 
The characteristics of the event logs are summarized in \cref{tab:testedlogs}.
Since those are \textit{intra-organizational} event logs, we 
split the contents to mimic an \textit{inter-organizational} context.
In particular, we separated the Sepsis event log based on the distinction between normal-care and intensive-care paths, as if they were conducted by two distinct organizations. Similarly, we processed the BPIC~2013 event log to sort it out into the three departments of the Volvo IT incident management system. 
\Cref{snr_e,snr_f} depict the results with the integration of the incremental variants of HeuristicsMiner and DeclareConformance. We observe that the BPIC 2013 event log processing demands more memory probably owing to its larger size. Conversely, the Sepsis event log turns out to entail the least expensive run.

\begin{figure}[t]
        \begin{subfigure}{0.49\linewidth}   
          \centering
          \includegraphics[width=\linewidth]{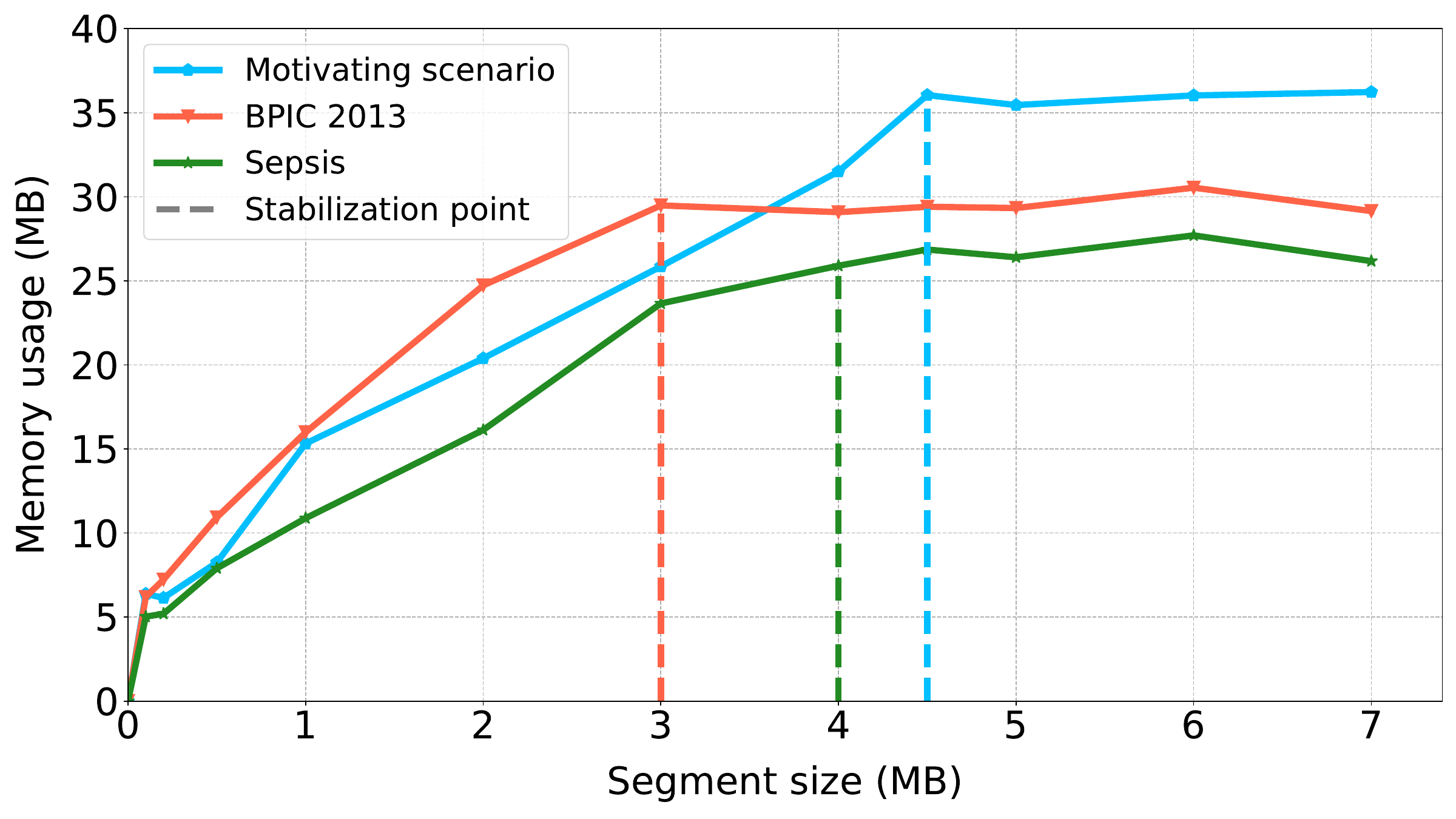}
          \caption{Segment size impact on memory usage}
          \label{fig:segmentsize}
        \end{subfigure}
	\begin{subfigure}{0.49\linewidth}   
		\centering
		\includegraphics[width=\linewidth]{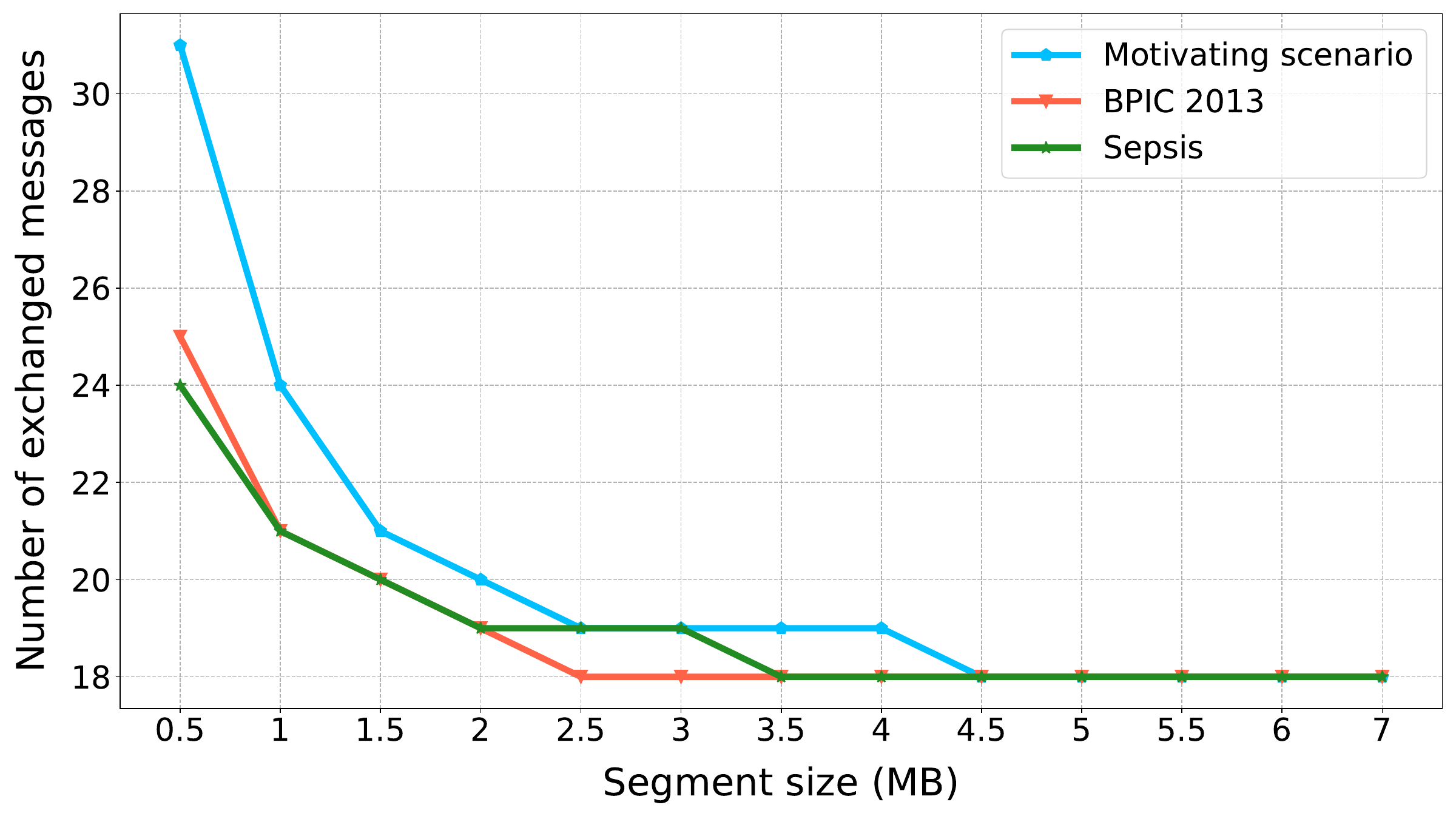}
		\caption{Segment size impact on exchanged messages overhead}
		\label{fig:messages_result}
	\end{subfigure}
	\caption{Segment size test result}
	\label{fig:segsizeresult}
\end{figure}

\subsubsection{Segment Size}\label{sec:eval:segmentsize}
To verify whether these trends are affected by the dimension of the exchanged data segments, we conducted additional tests to examine the trend of memory usage and the overhead of the exchanged message as the {\SegSize} varies with all the aforementioned event logs. Notably, the polylines displayed in \cref{fig:segmentsize} indicate a linear increment of memory occupation until a breakpoint is reached. After that, the memory in use is steady. These points, marked by vertical dashed lines, correspond 
to the {\SegSize} value that allows the provider's segments to be contained in a single data segment.

\Cref{fig:messages_result} depicts the relationship between segment size and the number of message exchanges through CONFINE. As the segment size increases, there is a decrease in the number of messages exchanged, indicating a reduction in communication overhead. This trend is consistent across different logs and segment sizes, with a stabilizing effect observed after the segment size of 4.5 MB suggesting an optimal range to minimize message exchanges.

\begin{figure}[t]
	\begin{subfigure}{0.49\linewidth}   
		\centering
		\includegraphics[width=\linewidth]{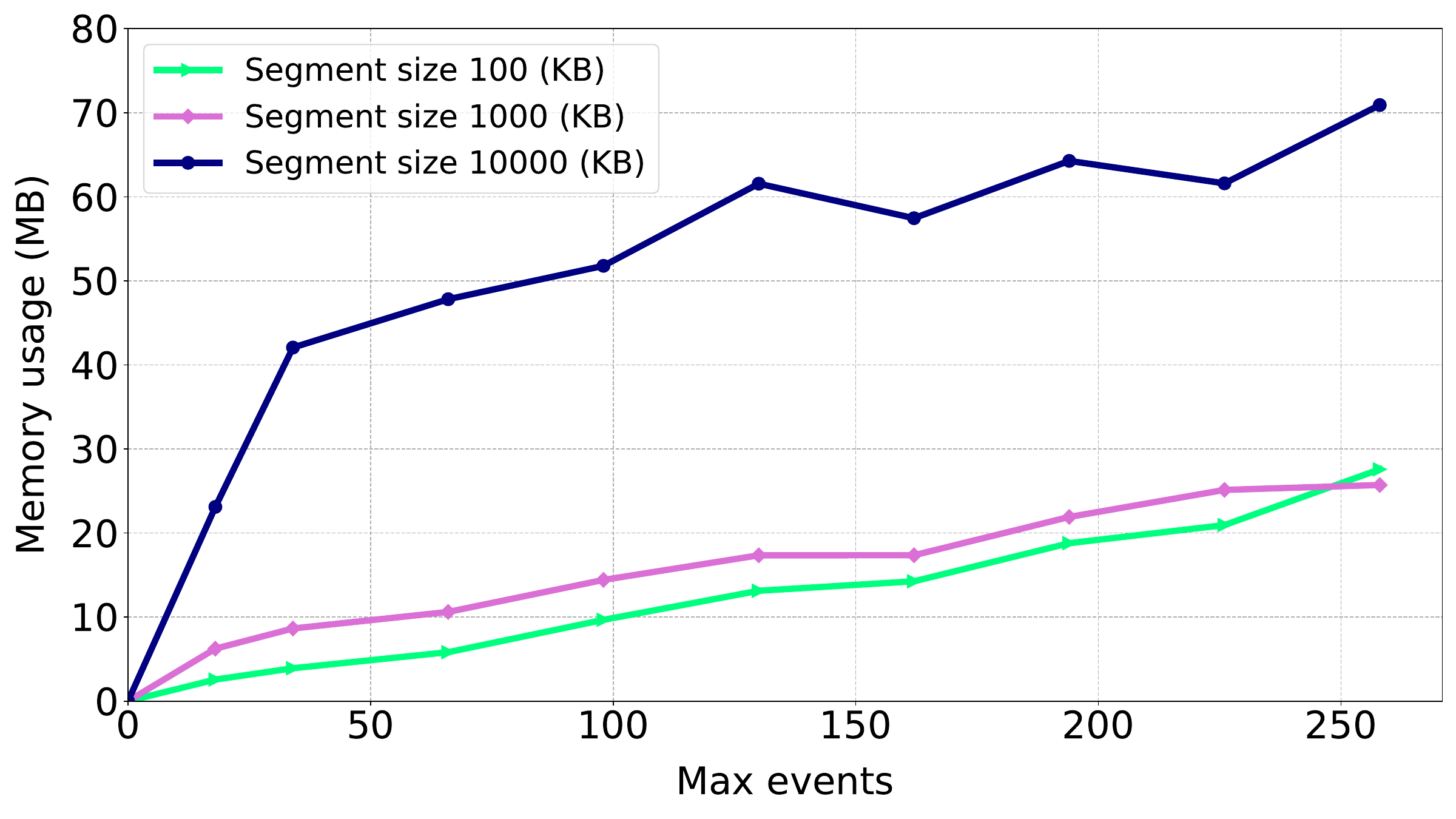}
		\caption{Results of scalability test \ref{test:events}}
		\label{fig:event_results}
	\end{subfigure}
	\begin{subfigure}{0.49\linewidth}   
		\centering
		\includegraphics[width=\linewidth]{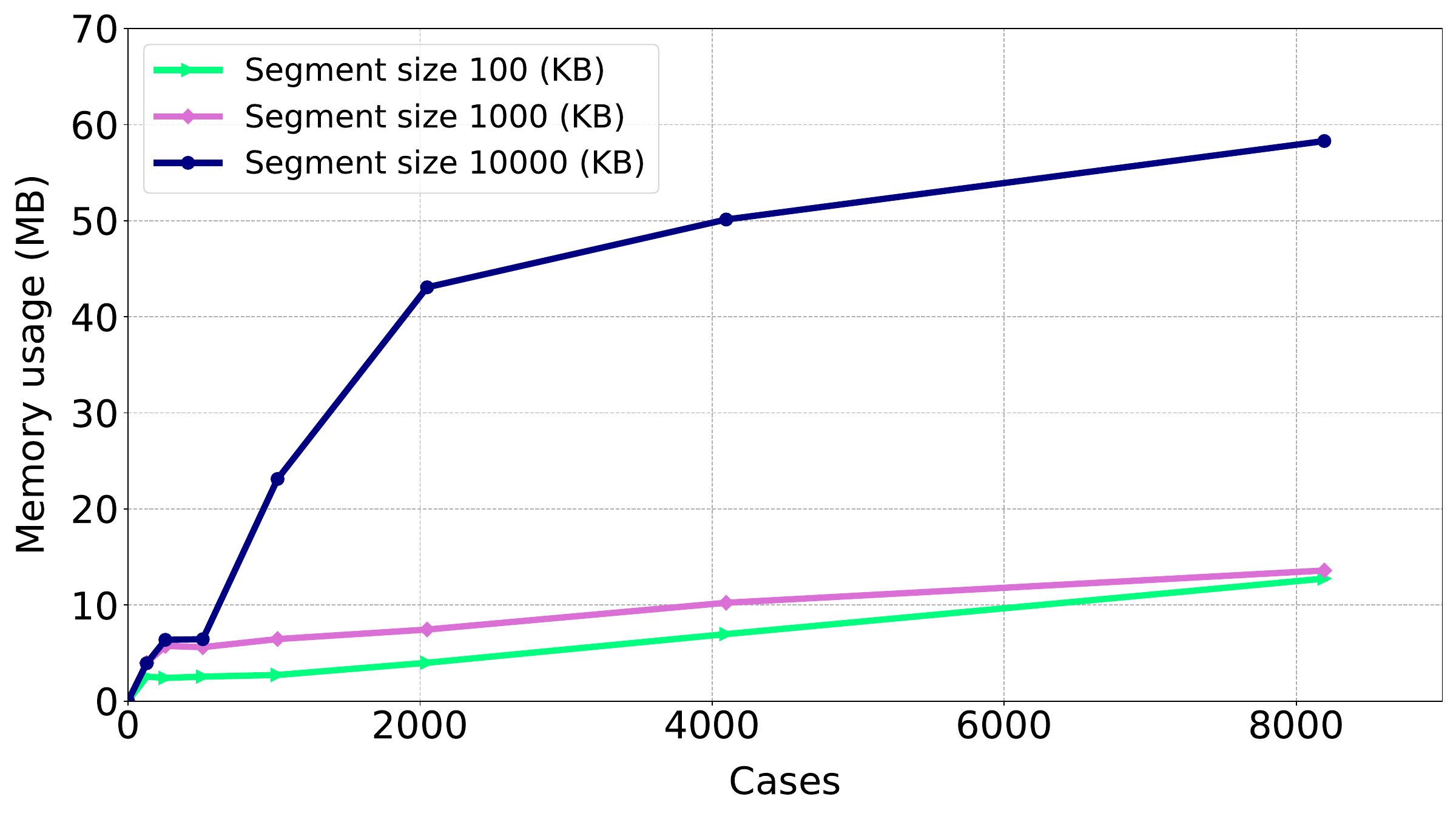}
		\caption{Results of scalability test \ref{test:cases}}
		\label{fig:cases_results}
	\end{subfigure}
	
	\begin{subfigure}{0.49\linewidth}   
		\centering
		\includegraphics[width=\linewidth]{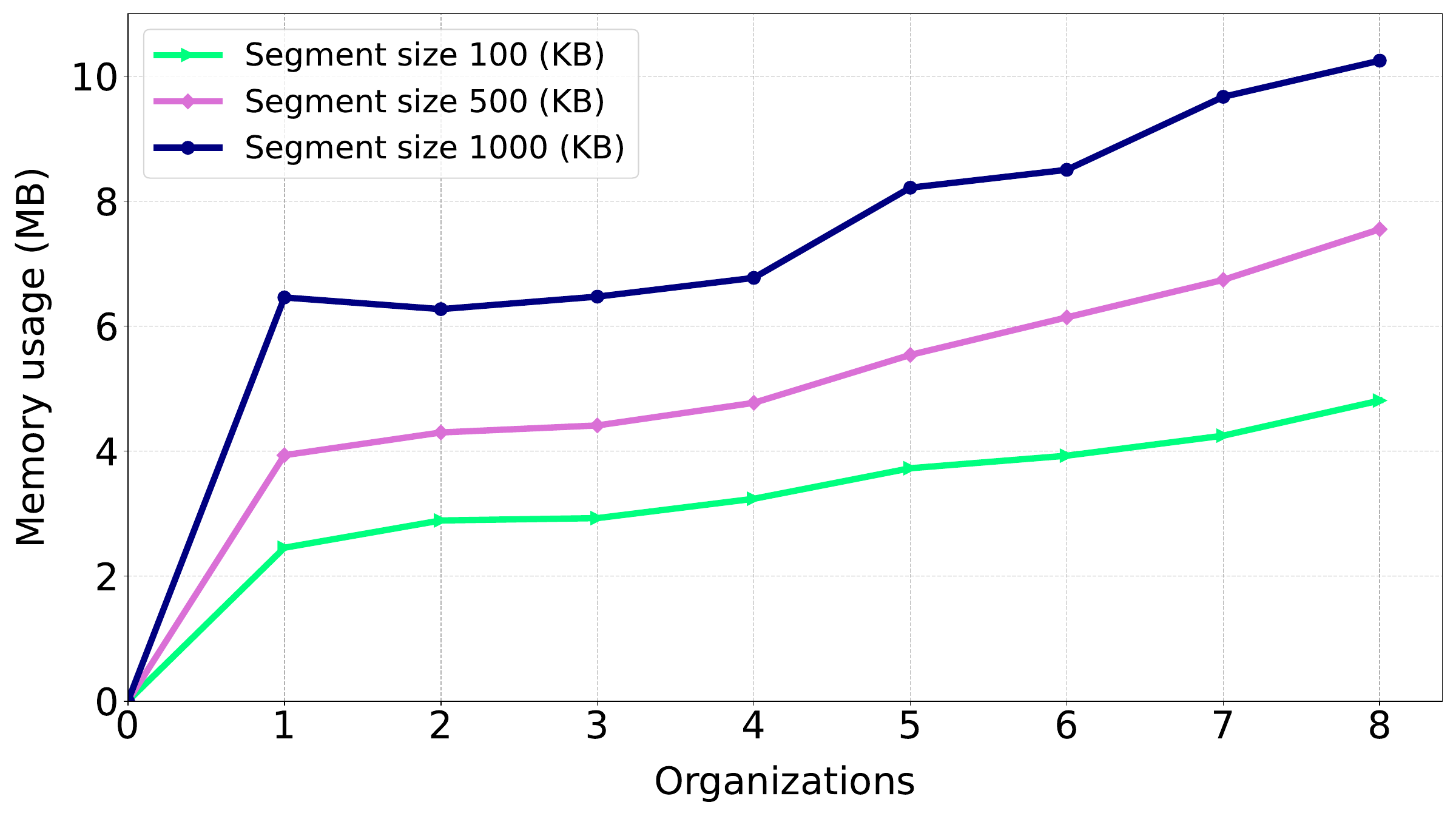}
		\caption{Results of scalability test \ref{test:organizations}}
		\label{fig:org_results}
	\end{subfigure}
	\begin{subtable}{0.49\linewidth}
		\centering
		\resizebox{0.97\linewidth}{!}{%
			\input{tables/scalability-metrics}
		}
		\label{table:TestCoefficentTable}
		\caption{Scalability measurements}
	\end{subtable}
	\caption{Scalability test results}
	\label{fig:scalabtest}
\end{figure}

\subsubsection{Scalability}\label{sec:evaluation:scalability}
In the remainder of this section, we examine the scalability of the \Compo{Secure Miner}, focusing on its capacity to efficiently manage an increasing workload in the presence of limited memory resources. We implemented three distinct test configurations gauging the average runtime memory usage as variations of our motivating scenario's log. In particular, we considered
\begin{inparaenum}[(I)]
	\item\label{test:events} the maximum number of events per case,
	\item\label{test:cases} the number of cases $|{\CIdU}|$, and 
	\item\label{test:organizations} the number of provisioning organizations $|{\OrgU}|$
\end{inparaenum}
as independent integer variables. To conduct the test on the maximum number of events, we added a loop back from the final to the initial activity of the process model, progressively increasing the number of iterations $2 \leqslant \,x_\circlearrowleft\, \leqslant 16$ at a step of \num{2}, resulting in ${18+16\cdot(x_\circlearrowleft-1)}$ events. Concerning the test on the number of cases, we simulated additional process instances so that $|{\CIdU}| = 2^{x_{\CId}}$ having $x_{\CId} \in \{7,8,\ldots,13\}$. Finally, for the assessment of the number of organizations, the test necessitated the distribution of the process model activities' into a variable number of pools, each representing a different organization ($|{\OrgU}| \in \{1,2,\ldots,8\}$).
We parameterized the above configurations with three segment sizes (in KiloBytes): $\SegSize \in \{100,1000,10000\}$ for tests \ref{test:events}~and~\ref{test:cases}, and $\SegSize \in \{100,500,1000\}$ for test~\ref{test:organizations} (the range is reduced without loss of generality to compensate the partitioning of activities into multiple organizations). To facilitate a more rigorous interpretation of the output trends across varying {\SegSize} configurations, we employ 
two well-known statistical measures. As a primary measure of goodness-of-fit, we employ the coefficient of determination {\RCoefficent}~\citep{barrett1974coefficient}, which assesses the degree to which the observed data adheres to the linear ({\Rlin}) and logarithmic ({\Rlog}) regressions derived from curve fitting approximations. To further delve into the analysis of trends with a high {\Rlin}, we consider the slope {\Slope} of the approximated linear regression~\citep{altman2015simplelinearregression}. 

\hyperref[table:TestCoefficentTable]{Tab.\ 9(d)} lists the measurements we obtained. We describe them to elucidate the observed patterns. \Cref{fig:event_results} depicts the results of test \ref{test:events}, focusing on the increase of memory utilization when the number of events in the event logs grows. We observe that the memory usage for {\SegSize} \num{100} and \num{1000} (depicted by green and lilac lines, respectively) are quite similar, whereas the setting with {\SegSize} \num{10000} (blue line) exhibits significantly higher memory usage. For the settings with {\SegSize} \num{100} and \num{1000}, {\Rlin} approaches \num{1}, 
signifying an almost perfect approximation of the linear relation, against lower {\Rlog} values. 
In these test settings, {\Slope} is very low 
yet higher than \num{0}, thus indicating that memory usage is likely to continue increasing as the number of max events grows. The configuration with {\SegSize} \num{10000} yields a higher {\Rlog} value, 
thus suggesting a logarithmic trend, hence 
a greater likelihood of stabilizing memory usage growth rate as the number of maximum events increases. 
In  \cref{fig:cases_results}, we present the results of test \ref{test:cases}, assessing the impact of the number of cases on the memory consumption. As expected, the configurations with {\SegSize} set to \num{100} and \num{1000} exhibit a trend of lower memory usage than settings with {\SegSize} \num{10000}. The {\Rlin} score of the trends with {\SegSize} \num{100} and \num{1000} 
indicate a strong linear relationship between the dependent and independent variables compared to the trend with {\SegSize} \num{10000}, which is better described by a logarithmic regression (${\Rlog} = \num{0.9303}$). For the latter, the {\Rlog} value is higher than the corresponding {\Rlin} 
thus suggesting that the logarithmic approximation is better suited to describe the trend. 
Differently from test \ref{test:events}, the {\Slope} score associated with the linear approximations of the trends with {\SegSize} \num{100} and \num{1000} 
approaches \num{0}, indicating that the growth rate of memory usage as the number of cases increases is negligible.
In \cref{fig:org_results}, we present the results of test~\ref{test:organizations}, on the relation between the number of organizations and the memory usage. The chart shows that memory usage trends increase as provisioning organizations increase for all three segment sizes. The {\Rlin} values for the three {\SegSize}s are very high, 
indicating a strong positive linear correlation. 
The test with {\SegSize} \num{100} exhibits the slowest growth rate, as corroborated by the lowest {\Slope} result (\num{0.3184}). 
For the configuration with {\SegSize} \num{500}, the memory usage increases slightly faster (${\Slope} = {0.5174}$). With {\SegSize} \num{1000}, the overall memory usage increases significantly faster than the previous configurations (${\Slope} = 0.6102$). We derive from these findings that the \Compo{Secure Miner} may encounter scalability issues when handling settings with a large number of provisioning organizations. Further investigation is warranted to determine the precise cause of this behavior and identify mitigation strategies. \

%% file: content/tables/eventLogs.tex
       \resizebox{\textwidth}{!}{%
            \begin{tabular}{l l S[table-format = 2.0] S[table-format = 4.0] S[table-format = 3.0] S[table-format = 1.0] S[table-format = 2.0] l} \toprule
            	\textbf{Name}               & \textbf{Type} & \textbf{Activities} & \textbf{Cases} & \textrm{\textbf{Max} events} & \textrm{\textbf{Min} events} & \textrm{\textbf{Avg}.\ events} & \textbf{Organization $\mapsto$ Activities}                        \\ \midrule
            	Motivating scenario  & Synthetic     & 19                  & 1000           &18 &9 &14 & ${\Org}^P \mapsto 3$, ${\Org}^C \mapsto 5$, ${\Org}^H \mapsto 14$ \\
            	Sepsis~\citep{seps}                       & Real          & 16                & 1050          & 185 & 3 & 15  & ${\Org}^1 \mapsto 1$, ${\Org}^2 \mapsto 1$, ${\Org}^3 \mapsto 14$ \\                                                                 
            	BPIC2013~\citep{bpic2013}                   & Real          & 7                  & 1487           &123 &1 &9 & ${\Org}^1 \mapsto 6$, ${\Org}^2 \mapsto 7$, ${\Org}^3 \mapsto 6$  \\ \bottomrule& & 
            \end{tabular}
        }

%% file: tables/scalability-metrics.tex
	\begin{tabular}{ll>{\hspace{1em}}cc>{\hspace{1em}}c}
	\toprule
	\textbf{Test} & \textbf{Seg.size} & 
	\textbf{\Rlin} & 
		\textbf{$\Slope$}& 
		$\Rlog$\\
		\midrule
		\multirow{3}{*}{\makecell[l]{Max\\events}} &
		100&0.9847& 0.0980&0.8291\\
		& 1000&0.9544&0.0821&0.9043\\
		& 10000&0.7357&0.1518&0.9386\\
		\midrule
		\multirow{3}{*}{\makecell[l]{Number\\of\\cases}} &
		100&0.9896&0.0013&0.6822\\
		& 1000&0.9629&0.0010&0.8682\\
		& 10000&0.7729&0.0068&0.9303\\
		\midrule
		\multirow{3}{*}{\makecell[l]{Provisioning\\organizations}} &
		100 &0.9770&0.3184&0.8577\\
		& 500 &0.9602&0.5174&0.7902\\
		& 1000 &0.9066&0.6102&0.6977\\
		\bottomrule
	\end{tabular}

%% file: content/discussion.tex
Following the above empirical examination of our approach implementation, we discuss CONFINE by addressing the implications associated with its practical instantiation (\cref{sec:discussion:practicals}) and outlining the limitations that pave the path for future work (\cref{sec:discussion:limitations}).

\subsection{Practical Implications}
\begin{table}[tb]
	\caption{Summary of the assumptions considered in this article}
	\label{tab:assumptions}
	\resizebox{\linewidth}{!}{%
		\input{tables/assumptions}%
	}%
\end{table}
\label{sec:discussion:practicals}
The applications of CONFINE span across an array of domains, including and beyond healthcare, in which our running example in~\cref{sec:motivating} is rooted. It particularly applies to scenarios in which one or more organizations are interested in process analytics outcomes based on data they hold, but cannot be disclosed to others or to the miners. In healthcare, CONFINE can aid in analyzing patient pathways on multiple healthcare providers while maintaining the secrecy of patient information, thereby enabling the identification of bottlenecks and the optimization of treatment procedures \cite{DBLP:journals/tase/LiuLZCZ23,DBLP:journals/csse/JeonKSK23}. Supply chain management companies can benefit from CONFINE by securely analyzing the flow of goods and information between multiple partners, thereby streamlining operations, reducing costs, and improving delivery timelines. In the manufacturing domain, CONFINE aids in examining branching workflows across independent organizations, optimizing resource allocation, and maintaining operational data secrecy. However, the acquisition of competitive advantage out of knowledge leakage must be prevented~\cite{DBLP:journals/isf/TanWC16}. 

Our solution enables organizations to differentiate the sources from which the \Compo{Secure Miner} retrieves the input data. This feature allows organizations to share event logs with others while locally enriching their own dataset with additional information for their internal CONFINE protocol executions ~\cite{DBLP:journals/dke/FahrenkrogPetersenAW23}. The advantage is twofold. On the one hand, this scheme lets an organization cross-link information stemming from the overarching multi-party process with internal procedure records and thus enriches the output of the \Compo{Secure Miner}.  On the other hand, information that is not relevant to the external organizations is not disclosed.

A crucial aspect for facilitating the adoption of our solution in real-world settings is the availability of accessible repositories hosting and distributing variants of the \Compo{Secure Miner}, each tailored to a specific TEE technology and equipped with different process mining algorithms (as in the the work of Soriente et al. \cite{DBLP:conf/eurosp/SorienteKLF19}). The development of trusted software for application-level TEEs requires specific Software Development Kits (SDKs), which may pose a significant integration burden for organizations lacking specialized expertise. By relying on such repositories, organizations could select the \Compo{Secure Miner} trusted application that best matches their intended process mining tasks and infrastructural constraints. At the same time, we acknowledge that this approach introduces potential risks, including malicious modifications to the \Compo{Secure Miner} source code before its deployment in a TEE. 
Relying on a trusted third-party hosting services for distributing verified versions of the \Compo{Secure Miner}, and counterchecking the associated measurements before startup through code-review practices can mitigate these risks.

    \Cref{tab:assumptions} collects all the assumptions underpinning this work, which pertains to the adversarial roles of the involved organizations (\cref{asm:honestprovisioners,asm:maliciousminers}), the event log format (\cref{asm:eventAttributes,asm:synchronizedIIDs,ams:timestamps}), the mechanisms for organizational trust and authorization (\cref{asm:iam,asm:acpolicy}), the security guarantees provided by the TEE (\cref{asm:teefeatures}), and 
    the protocol execution (\cref{asm:authperfectlink,asm:segsize}). A real-world deployment of CONFINE should ensure that those assumptions are properly addressed.

IT engineers responsible for implementing the theoretical principles of CONFINE can instantiate the framework by following the steps outlined below.  
\begin{inparaenum}[(1)]
\item They analyze the privacy requirements of the involved business entities and define their role within the framework (i.e., provisioner, miner, or both).  
\item Organizations intending to participate in the CONFINE framework as miners select a TEE technology and equip themselves with a TEE-enabled processor. Cloud-based solutions  may serve as an alternative for organizations lacking the necessary hardware.
Different organizations may adopt distinct TEE implementations, provided that each conform to the TEE profile defined in \cref{sec:background:tee}. With such a heterogeneous setting, the specific TEE technology chosen by each organization should be known to the others to ensure interoperability. This is especially relevant for attestations, since they require the correct handling of different attestation formats and verification services.
\item Process mining vendors select the algorithms to be integrated into their \Compo{Secure Miner}s and either obtain a verified distribution of the corresponding trusted application (as discussed above) or build their own implementation.
\item The organizations exchange the credentials that are necessary for executing the CONFINE protocol, such as the \Compo{Provisioner}s' web references, organizational credentials, and the \Compo{Secure Miner}s' measurements (which depend on the specific process mining algorithms integrated into each instance).
\item Finally, the involved organizations can start up their \Compo{Secure Miner} and \Compo{Provisioner} components to execute the CONFINE protocol.  
\end{inparaenum}

\subsection{Limitations and Future Work}
\label{sec:discussion:limitations}
In \cref{sec:discussion:security}, we delineated the security guarantees that the \Compo{TEE} core provides within CONFINE. However, recent studies have demonstrated that \Compo{TEE} technologies remain susceptible to three principal threat vectors:
\begin{inparaenum}[\itshape(i)\upshape]
\item side-channel attacks,
\item physical attacks,
\item root-of-trust attacks.
\end{inparaenum}
Side-channel attacks exploit secondary information leakage such as memory access patterns~\cite{DBLP:conf/eurosys/Morbitzer0HW18}, or timing variations~\cite{DBLP:conf/uss/BulckMWGKPSWYS18}, to undermine hardware-backed memory encryption and infer confidential data.
Physical attacks arise when adversaries obtain direct access to the TEE's machine and manipulate hardware-level parameters, such as inducing voltage faults, to cause the processor to skip critical security instructions~\cite{DBLP:conf/sp/MurdockOGBGP20,DBLP:conf/ccs/QiuWLQ19}.
Root-of-trust attacks stem from vendor-specific vulnerabilities that enable adversaries to compromise attestation keys at the production stage to forge attestation evidence or recover chip-endorsement keys~\cite{DBLP:conf/ccs/BuhrenWS19}.
%

%
In future work, we plan to revisit the assumptions listed in \cref{tab:assumptions}. First, we aim to incorporate mitigation strategies to address scenarios in which provisioners behave maliciously and inject manipulated event logs. We also intend to extend our protocol to operate over unreliable communication channels.

Our merge operations currently count on the availability of consistent instance identifier attributes to be used as merge keys. This presupposes prior knowledge of how each participant encodes its case identifiers. Moreover, we hypothesize the existence of a universal clock to ensure timestamp synchronization across log partitions. To overcome these limitations, future work will explore techniques for establishing or inferring a shared schema for case identifiers as well as synchronization-agnostic strategies for merging temporally misaligned log partitions.

In addition, we plan to provide a more explicit specification of the mechanisms governing organizational trust and authorization. We will investigate the integration of federated IAM solutions for cross-organization identity management and the adoption of attribute-based access control models to formalize authorization policies grounded in contextual process data~\cite{DBLP:conf/post/CramptonM12}.
Also, we plan to integrate usage control policies that regulate how event logs are processed by the \Compo{Secure Miner}. Specifically, building on prior blockchain-based approaches~\cite{DBLP:journals/fbloc/BasileCGK23,DBLP:conf/icdcs/BasileCGK23}, we plan to extend CONFINE with enforcement and monitoring mechanisms that represent usage policies as smart contracts, thus providing verifiable governance and auditable lifecycle management. In parallel, recent work on the dynamic generation of trusted applications~\cite{Goretti.etal/ICSOC2025:ProMiSe} offers a complementary avenue for embedding policy-specific enforcement logic within TEEs at runtime. Integrating these two directions within CONFINE would enable organizations to dynamically generate \Compo{Secure Miner} services tailored to their usage policies while continuously monitoring policy compliance through a blockchain-based infrastructure.
A threat to secrecy lies in the reconstruction of original data back from the mining outcome. Keeping this aspect in mind is crucial to determine the mining algorithm to be embedded in the \Compo{Secure Miner}. Studies in this regard have been conducted, among others, in~\cite{DBLP:conf/caise/VoigtFJKTMLW20,DBLP:journals/is/FahrenkrogPetersenKAW23}. Integrating the proposed recommendations with CONFINE paves the path for future investigations.

%% file: tables/assumptions.tex
\footnotesize
\begin{tabular}{lp{14cm}}
	\toprule
	\textbf{Reference} & \textbf{Description} \\
	\midrule
	\cref{asm:honestprovisioners} & Provisioning organizations play the role of the \emph{honest} party\\
        \cref{asm:maliciousminers} & Miner organizations play the role of the \emph{malicious} party\\
    \cref{asm:eventAttributes} & Events are comprised of an \emph{iid}, an \emph{activity} label and a \emph{timestamp}\\
       \cref{asm:synchronizedIIDs} & The iids are synchronized across the involved organizations\\
       \cref{ams:timestamps} & The timestamps across organizations are temporally consistent\\
       \cref{asm:iam} & Organizations can verify one another’s identities through IAM services\\
       \cref{asm:teefeatures} & The TEE provides application-level protection, hardware-based memory encryption, and remote attestability\\
       \cref{asm:authperfectlink} & The communication channel underlying CONFINE ensures reliable delivery, no-duplication, and authenticity\\
       \cref{asm:acpolicy} & \Compo{Provisioner}s retain authorization policies to decide whether requests from \Compo{Secure Miner}s are permitted\\
       \cref{asm:segsize} & The segment size range spans from the largest case size among the merged cases to the TEE's memory capacity\\
	\bottomrule
       
\end{tabular}

%% file: content/conclusion.tex
This article presented CONFINE, a framework to safeguard data secrecy in process mining. Unlike traditional approaches that either modify the input data or impose intermediate log representations to protect sensitive information, CONFINE leverages TEEs to securely process unaltered event logs beyond organizational boundaries. 
Compared to our previous work~\cite{DBLP:conf/caise/GorettiBBC24,DBLP:conf/icpm/Goretti0BC24}, the research work that led to this article contributes several key advancements. We conduct a formal analysis of the CONFINE framework, including the pseudocode for the CONFINE protocol.
We analyze the security guarantees that TEEs offer in the context of our solution. We extend the framework’s applicability by incorporating a conformance checking algorithm among the supported process mining techniques. Finally, we enhance our evaluation with both qualitative insights and quantitative assessments. The new results demonstrate the formal correctness of our approach and provide empirical evidence for its memory feasibility with different process mining tasks. 

Future research will focus on addressing many of the assumptions outlined in \cref{tab:assumptions}, and beyond. Among other endeavors, we plan to investigate strategies for managing temporal misalignment in distributed settings, handling heterogeneity in the identification of cases, mitigating the effects of maliciously altered event logs, and designing dedicated authentication policies for integration into collaborative environments. Furthermore, future work will focus on the governance of information exchange, usage policy enforcement for data treatment, and cross-organization identity management.

%% file: content/ack.tex
\bigskip
\noindent\textbf{Data availability.} The source code of our implementation and the experimental results are available at \href{https://github.com/Process-in-Chains/CONFINE}{\nolinkurl{github.com/Process-in-Chains/CONFINE}}.

\bigskip
\noindent\textbf{Acknowledgments.} This work was partially funded by the Italian Ministry of University and Research under PRIN grant B87G22000450001 (PINPOINT), by Sapienza University of Rome under grant RG123188B3F7414A (ASGARD), and by the EU-NGEU under the Cyber~4.0 NRRP MIMIT programme (Health-e-Data).

%% file: main.bbl
\begin{thebibliography}{74}
\expandafter\ifx\csname natexlab\endcsname\relax\def\natexlab#1{#1}\fi
\providecommand{\url}[1]{\texttt{#1}}
\providecommand{\href}[2]{#2}
\providecommand{\path}[1]{#1}
\providecommand{\DOIprefix}{doi:}
\providecommand{\ArXivprefix}{arXiv:}
\providecommand{\URLprefix}{URL: }
\providecommand{\Pubmedprefix}{pmid:}
\providecommand{\doi}[1]{\href{http://dx.doi.org/#1}{\path{#1}}}
\providecommand{\Pubmed}[1]{\href{pmid:#1}{\path{#1}}}
\providecommand{\bibinfo}[2]{#2}
\ifx\xfnm\relax \def\xfnm[#1]{\unskip,\space#1}\fi
\bibitem[{Dumas et~al.(2018)Dumas, La~Rosa, Mendling, and Reijers}]{Dumas.etal/2018:FundamentalsofBPM}
\bibinfo{author}{M.~Dumas}, \bibinfo{author}{M.~La~Rosa}, \bibinfo{author}{J.~Mendling}, \bibinfo{author}{H.~A. Reijers}, \bibinfo{title}{Fundamentals of Business Process Management, Second Edition}, \bibinfo{publisher}{Springer}, \bibinfo{year}{2018}. \DOIprefix\doi{10.1007/978-3-662-56509-4}.
\bibitem[{van~der Aalst(2012)}]{DBLP:journals/tmis/Aalst12}
\bibinfo{author}{W.~M.~P. van~der Aalst},
\newblock \bibinfo{title}{Process mining: Overview and opportunities},
\newblock \bibinfo{journal}{{ACM} Transactions on Management Information Systems} \bibinfo{volume}{3} (\bibinfo{year}{2012}) \bibinfo{pages}{7:1--7:17}. \DOIprefix\doi{10.1145/2229156.2229157}.
\bibitem[{{De Weerdt} and Wynn(2022)}]{DBLP:books/sp/22/WeerdtW22}
\bibinfo{author}{J.~{De Weerdt}}, \bibinfo{author}{M.~T. Wynn},
\newblock \bibinfo{title}{Foundations of process event data},
\newblock in:  \cite{PMH2022}, \bibinfo{year}{2022}, pp. \bibinfo{pages}{193--211}. \DOIprefix\doi{10.1007/978-3-031-08848-3_6}.
\bibitem[{van~der Aalst(2011)}]{van2011intra}
\bibinfo{author}{W.~M.~P. van~der Aalst},
\newblock \bibinfo{title}{Intra- and inter-organizational process mining: Discovering processes within and between organizations},
\newblock in: \bibinfo{booktitle}{The Practice of Enterprise Modeling ({IFIP}) 2011}, volume~\bibinfo{volume}{92} of \textit{\bibinfo{series}{Lecture Notes in Business Information Processing}}, \bibinfo{publisher}{Springer}, \bibinfo{year}{2011}, pp. \bibinfo{pages}{1--11}. \DOIprefix\doi{10.1007/978-3-642-24849-8_1}.
\bibitem[{Oberdorf et~al.(2023)Oberdorf, Schaschek, Weinzierl, Stein, Matzner, and Flath}]{DBLP:journals/bise/OberdorfSWSMF23}
\bibinfo{author}{F.~Oberdorf}, \bibinfo{author}{M.~Schaschek}, \bibinfo{author}{S.~Weinzierl}, \bibinfo{author}{N.~Stein}, \bibinfo{author}{M.~Matzner}, \bibinfo{author}{C.~M. Flath},
\newblock \bibinfo{title}{Predictive end-to-end enterprise process network monitoring},
\newblock \bibinfo{journal}{Business Information Systems Engineering} \bibinfo{volume}{65} (\bibinfo{year}{2023}) \bibinfo{pages}{49--64}. \DOIprefix\doi{10.1007/S12599-022-00778-4}.
\bibitem[{Morales{-}Sandoval et~al.(2021)Morales{-}Sandoval, Molina, Mar{\'{\i}}n{-}Castro, and Compe{\'{a}}n}]{DBLP:journals/peerj-cs/Morales-Sandoval21}
\bibinfo{author}{M.~Morales{-}Sandoval}, \bibinfo{author}{J.~A. Molina}, \bibinfo{author}{H.~M. Mar{\'{\i}}n{-}Castro}, \bibinfo{author}{J.~L.~G. Compe{\'{a}}n},
\newblock \bibinfo{title}{Blockchain support for execution, monitoring and discovery of inter-organizational business processes},
\newblock \bibinfo{journal}{PeerJ Computer Science} \bibinfo{volume}{7} (\bibinfo{year}{2021}) \bibinfo{pages}{e731}. \DOIprefix\doi{10.7717/PEERJ-CS.731}.
\bibitem[{Ho et~al.(2020)Ho, Kumar, and Shiwakoti}]{ho2020supply}
\bibinfo{author}{T.~Ho}, \bibinfo{author}{A.~Kumar}, \bibinfo{author}{N.~Shiwakoti},
\newblock \bibinfo{title}{Supply chain collaboration and performance: an empirical study of maturity model},
\newblock \bibinfo{journal}{SN Applied Sciences} \bibinfo{volume}{2} (\bibinfo{year}{2020}) \bibinfo{pages}{726}. \DOIprefix\doi{10.1007/s42452-020-2468-y}.
\bibitem[{Tan et~al.(2016)Tan, Wong, and Chung}]{DBLP:journals/isf/TanWC16}
\bibinfo{author}{K.~H. Tan}, \bibinfo{author}{W.~P. Wong}, \bibinfo{author}{L.~Chung},
\newblock \bibinfo{title}{Information and knowledge leakage in supply chain},
\newblock \bibinfo{journal}{Information Systems Frontiers} \bibinfo{volume}{18} (\bibinfo{year}{2016}) \bibinfo{pages}{621--638}. \DOIprefix\doi{10.1007/s10796-015-9553-6}.
\bibitem[{Fdhila et~al.(2022)Fdhila, Knuplesch, Rinderle{-}Ma, and Reichert}]{DBLP:journals/is/FdhilaKRR22}
\bibinfo{author}{W.~Fdhila}, \bibinfo{author}{D.~Knuplesch}, \bibinfo{author}{S.~Rinderle{-}Ma}, \bibinfo{author}{M.~Reichert},
\newblock \bibinfo{title}{Verifying compliance in process choreographies: Foundations, algorithms, and implementation},
\newblock \bibinfo{journal}{Information Systems} \bibinfo{volume}{108} (\bibinfo{year}{2022}) \bibinfo{pages}{101983}. \DOIprefix\doi{10.1016/J.IS.2022.101983}.
\bibitem[{Liu et~al.(2023)Liu, Li, Zhang, Cheng, and Zeng}]{DBLP:journals/tase/LiuLZCZ23}
\bibinfo{author}{C.~Liu}, \bibinfo{author}{H.~Li}, \bibinfo{author}{S.~Zhang}, \bibinfo{author}{L.~Cheng}, \bibinfo{author}{Q.~Zeng},
\newblock \bibinfo{title}{Cross-department collaborative healthcare process model discovery from event logs},
\newblock \bibinfo{journal}{{IEEE} Transactions on Automation Science and Engineering} \bibinfo{volume}{20} (\bibinfo{year}{2023}) \bibinfo{pages}{2115--2125}. \DOIprefix\doi{10.1109/TASE.2022.3194312}.
\bibitem[{Jeon et~al.(2023)Jeon, Kim, Shin, and Kim}]{DBLP:journals/csse/JeonKSK23}
\bibinfo{author}{J.~Jeon}, \bibinfo{author}{J.~Kim}, \bibinfo{author}{M.~Shin}, \bibinfo{author}{M.~Kim},
\newblock \bibinfo{title}{A blockchain-based trust model for supporting collaborative healthcare data management},
\newblock \bibinfo{journal}{Computer Systems Science and Engineering} \bibinfo{volume}{46} (\bibinfo{year}{2023}) \bibinfo{pages}{3403--3421}. \DOIprefix\doi{10.32604/CSSE.2023.036658}.
\bibitem[{Liu et~al.(2009)Liu, Li, and Zhao}]{Liu.etal/ISF2009:ChallengesOpportunitiesCollaborativeBPM}
\bibinfo{author}{C.~Liu}, \bibinfo{author}{Q.~Li}, \bibinfo{author}{X.~Zhao},
\newblock \bibinfo{title}{Challenges and opportunities in collaborative business process management: Overview of recent advances and introduction to the special issue},
\newblock \bibinfo{journal}{Information Systems Frontiers} \bibinfo{volume}{11} (\bibinfo{year}{2009}) \bibinfo{pages}{201--209}. \DOIprefix\doi{10.1007/S10796-008-9089-0}.
\bibitem[{M{\"{u}}ller et~al.(2021)M{\"{u}}ller, Ostern, Koljada, Grunert, Rosemann, and K{\"{u}}pper}]{muller2021trust}
\bibinfo{author}{M.~M{\"{u}}ller}, \bibinfo{author}{N.~Ostern}, \bibinfo{author}{D.~Koljada}, \bibinfo{author}{K.~Grunert}, \bibinfo{author}{M.~Rosemann}, \bibinfo{author}{A.~K{\"{u}}pper},
\newblock \bibinfo{title}{Trust mining: analyzing trust in collaborative business processes},
\newblock \bibinfo{journal}{IEEE Access} \bibinfo{volume}{9} (\bibinfo{year}{2021}) \bibinfo{pages}{65044--65065}. \DOIprefix\doi{10.1109/ACCESS.2021.3075568}.
\bibitem[{Fahrenkrog{-}Petersen et~al.(2023)Fahrenkrog{-}Petersen, van~der Aa, and Weidlich}]{DBLP:journals/dke/FahrenkrogPetersenAW23}
\bibinfo{author}{S.~A. Fahrenkrog{-}Petersen}, \bibinfo{author}{H.~van~der Aa}, \bibinfo{author}{M.~Weidlich},
\newblock \bibinfo{title}{Optimal event log sanitization for privacy-preserving process mining},
\newblock \bibinfo{journal}{Data Knowledge Engineering} \bibinfo{volume}{145} (\bibinfo{year}{2023}) \bibinfo{pages}{102175}. \DOIprefix\doi{10.1016/J.DATAK.2023.102175}.
\bibitem[{Fahrenkrog{-}Petersen et~al.(2019)Fahrenkrog{-}Petersen, van~der Aa, and Weidlich}]{DBLP:conf/icpm/Fahrenkrog-Petersen19}
\bibinfo{author}{S.~A. Fahrenkrog{-}Petersen}, \bibinfo{author}{H.~van~der Aa}, \bibinfo{author}{M.~Weidlich},
\newblock \bibinfo{title}{{PRETSA:} event log sanitization for privacy-aware process discovery},
\newblock in: \bibinfo{booktitle}{International Conference on Process Mining ({ICPM}) 2019}, \bibinfo{publisher}{{IEEE}}, \bibinfo{year}{2019}, pp. \bibinfo{pages}{1--8}. \DOIprefix\doi{10.1109/ICPM.2019.00012}.
\bibitem[{Elkoumy et~al.(2020{\natexlab{a}})Elkoumy, Fahrenkrog{-}Petersen, Dumas, Laud, Pankova, and Weidlich}]{elkoumy2020shareprom}
\bibinfo{author}{G.~Elkoumy}, \bibinfo{author}{S.~A. Fahrenkrog{-}Petersen}, \bibinfo{author}{M.~Dumas}, \bibinfo{author}{P.~Laud}, \bibinfo{author}{A.~Pankova}, \bibinfo{author}{M.~Weidlich},
\newblock \bibinfo{title}{Shareprom: {A} tool for privacy-preserving inter-organizational process mining},
\newblock in: \bibinfo{booktitle}{Proceedings of the Best Dissertation Award, Doctoral Consortium, and Demonstration {\&} Resources Track at {BPM} 2020}, \bibinfo{publisher}{CEUR-WS.org}, \bibinfo{year}{2020}{\natexlab{a}}, pp. \bibinfo{pages}{72--76}. \URLprefix \url{https://ceur-ws.org/Vol-2673/paperDR02.pdf}.
\bibitem[{Elkoumy et~al.(2020{\natexlab{b}})Elkoumy, Fahrenkrog{-}Petersen, Dumas, Laud, Pankova, and Weidlich}]{elkoumy2020secure}
\bibinfo{author}{G.~Elkoumy}, \bibinfo{author}{S.~A. Fahrenkrog{-}Petersen}, \bibinfo{author}{M.~Dumas}, \bibinfo{author}{P.~Laud}, \bibinfo{author}{A.~Pankova}, \bibinfo{author}{M.~Weidlich},
\newblock \bibinfo{title}{Secure multi-party computation for inter-organizational process mining},
\newblock in: \bibinfo{booktitle}{Enterprise, Business-Process and Information Systems Modeling {EMMSAD} 2020}, volume \bibinfo{volume}{387} of \textit{\bibinfo{series}{Lecture Notes in Business Information Processing}}, \bibinfo{publisher}{Springer}, \bibinfo{year}{2020}{\natexlab{b}}, pp. \bibinfo{pages}{166--181}. \DOIprefix\doi{10.1007/978-3-030-49418-6_11}.
\bibitem[{Mansour et~al.(2016)Mansour, Sambra, Hawke, Zereba, Capadisli, Ghanem, Aboulnaga, and Berners{-}Lee}]{DBLP:conf/www/MansourSHZCGAB16}
\bibinfo{author}{E.~Mansour}, \bibinfo{author}{A.~V. Sambra}, \bibinfo{author}{S.~Hawke}, \bibinfo{author}{M.~Zereba}, \bibinfo{author}{S.~Capadisli}, \bibinfo{author}{A.~Ghanem}, \bibinfo{author}{A.~Aboulnaga}, \bibinfo{author}{T.~Berners{-}Lee},
\newblock \bibinfo{title}{A demonstration of the solid platform for social web applications},
\newblock in: \bibinfo{booktitle}{International Conference on World Wide Web ({WWW}) 2016}, \bibinfo{publisher}{{ACM}}, \bibinfo{year}{2016}, pp. \bibinfo{pages}{223--226}. \DOIprefix\doi{10.1145/2872518.2890529}.
\bibitem[{Basile et~al.(2023)Basile, {Di Ciccio}, Goretti, and Kirrane}]{DBLP:conf/icdcs/BasileCGK23}
\bibinfo{author}{D.~Basile}, \bibinfo{author}{C.~{Di Ciccio}}, \bibinfo{author}{V.~Goretti}, \bibinfo{author}{S.~Kirrane},
\newblock \bibinfo{title}{A blockchain-driven architecture for usage control in solid},
\newblock in: \bibinfo{booktitle}{International Conference on Distributed Systems ({ICDCS}) 2023 - Workshops}, \bibinfo{publisher}{{IEEE}}, \bibinfo{year}{2023}, pp. \bibinfo{pages}{19--24}. \DOIprefix\doi{10.1109/ICDCSW60045.2023.00009}.
\bibitem[{Sabt et~al.(2015)Sabt, Achemlal, and Bouabdallah}]{DBLP:conf/trustcom/SabtAB15}
\bibinfo{author}{M.~Sabt}, \bibinfo{author}{M.~Achemlal}, \bibinfo{author}{A.~Bouabdallah},
\newblock \bibinfo{title}{Trusted execution environment: What it is, and what it is not},
\newblock in: \bibinfo{booktitle}{International Conference on Trust, Security and Privacy in Computing and Communications {TrustCom} 2015}, \bibinfo{publisher}{{IEEE}}, \bibinfo{year}{2015}, pp. \bibinfo{pages}{57--64}. \DOIprefix\doi{10.1109/TRUSTCOM.2015.357}.
\bibitem[{Goretti et~al.(2024{\natexlab{a}})Goretti, Basile, Barbaro, and {Di Ciccio}}]{DBLP:conf/caise/GorettiBBC24}
\bibinfo{author}{V.~Goretti}, \bibinfo{author}{D.~Basile}, \bibinfo{author}{L.~Barbaro}, \bibinfo{author}{C.~{Di Ciccio}},
\newblock \bibinfo{title}{Trusted execution environment for decentralized process mining},
\newblock in: \bibinfo{booktitle}{CAiSE 2024}, \bibinfo{publisher}{Springer}, \bibinfo{year}{2024}{\natexlab{a}}, pp. \bibinfo{pages}{509--527}. \DOIprefix\doi{10.1007/978-3-031-61057-8_30}.
\bibitem[{Goretti et~al.(2024{\natexlab{b}})Goretti, Basile, Barbaro, and {Di Ciccio}}]{DBLP:conf/icpm/Goretti0BC24}
\bibinfo{author}{V.~Goretti}, \bibinfo{author}{D.~Basile}, \bibinfo{author}{L.~Barbaro}, \bibinfo{author}{C.~{Di Ciccio}},
\newblock \bibinfo{title}{{CONFINE:} preserving data secrecy in decentralized process mining},
\newblock in: \bibinfo{booktitle}{Doctoral Consortium and Demo Track {ICPM} 2024}, volume \bibinfo{volume}{3783}, \bibinfo{year}{2024}{\natexlab{b}}. \URLprefix \url{https://ceur-ws.org/Vol-3783/paper_324.pdf}.
\bibitem[{van~der Aalst et~al.(2011)van~der Aalst, Adriansyah, de~Medeiros, Arcieri, Baier, Blickle, Bose, van~den Brand, Brandtjen, Buijs et~al.}]{van2012process}
\bibinfo{author}{W.~M.~P. van~der Aalst}, \bibinfo{author}{A.~Adriansyah}, \bibinfo{author}{A.~K.~A. de~Medeiros}, \bibinfo{author}{F.~Arcieri}, \bibinfo{author}{T.~Baier}, \bibinfo{author}{T.~Blickle}, \bibinfo{author}{R.~P. J.~C. Bose}, \bibinfo{author}{P.~van~den Brand}, \bibinfo{author}{R.~Brandtjen}, \bibinfo{author}{J.~C. A.~M. Buijs}, et~al.,
\newblock \bibinfo{title}{Process mining manifesto},
\newblock in: \bibinfo{editor}{F.~Daniel}, \bibinfo{editor}{K.~Barkaoui}, \bibinfo{editor}{S.~Dustdar} (Eds.), \bibinfo{booktitle}{International Conference on Business Process Management Workshops ({BPM}) 2011}, volume~\bibinfo{volume}{99} of \textit{\bibinfo{series}{Lecture Notes in Business Information Processing}}, \bibinfo{publisher}{Springer}, \bibinfo{year}{2011}, pp. \bibinfo{pages}{169--194}. \DOIprefix\doi{10.1007/978-3-642-28108-2_19}.
\bibitem[{Weijters et~al.(2006)Weijters, van~der Aalst, and Alves De~Medeiros}]{weijters2006process}
\bibinfo{author}{A.~J. M.~M. Weijters}, \bibinfo{author}{W.~M.~P. van~der Aalst}, \bibinfo{author}{A.~K. Alves De~Medeiros}, \bibinfo{title}{Process mining with the {H}euristics{M}iner algorithm}, \bibinfo{year}{2006}. \URLprefix \url{https://pure.tue.nl/ws/portalfiles/portal/2388011/615595.pdf}.
\bibitem[{van~der Aalst et~al.(2003)van~der Aalst, ter Hofstede, Kiepuszewski, and Barros}]{Aalst.etal/2003:WorkflowPatterns}
\bibinfo{author}{W.~M.~P. van~der Aalst}, \bibinfo{author}{A.~H.~M. ter Hofstede}, \bibinfo{author}{B.~Kiepuszewski}, \bibinfo{author}{A.~P. Barros},
\newblock \bibinfo{title}{Workflow patterns},
\newblock \bibinfo{journal}{Distributed and Parallel Databases} \bibinfo{volume}{14} (\bibinfo{year}{2003}) \bibinfo{pages}{5--51}. \URLprefix \url{http://dx.doi.org/10.1023/A:1022883727209}.
\bibitem[{van~der Aalst(2016)}]{vanderAalst2016}
\bibinfo{author}{W.~M.~P. van~der Aalst}, \bibinfo{title}{Conformance Checking}, \bibinfo{publisher}{Springer Berlin Heidelberg}, \bibinfo{address}{Berlin, Heidelberg}, \bibinfo{year}{2016}, pp. \bibinfo{pages}{243--274}. \DOIprefix\doi{10.1007/978-3-662-49851-4_8}.
\bibitem[{Donadello et~al.(2022)Donadello, Riva, Maggi, and Shikhizada}]{DBLP:conf/bpm/DonadelloRMS22}
\bibinfo{author}{I.~Donadello}, \bibinfo{author}{F.~Riva}, \bibinfo{author}{F.~M. Maggi}, \bibinfo{author}{A.~Shikhizada},
\newblock \bibinfo{title}{Declare4py: {A} python library for declarative process mining},
\newblock in: \bibinfo{booktitle}{Best Dissertation Award, Doctoral Consortium, and Demonstration {\&} Resources Trackco-located with International Conference on Business Process Management ({BPM}) 2022}, volume \bibinfo{volume}{3216} of \textit{\bibinfo{series}{{CEUR} Workshop Proceedings}}, \bibinfo{publisher}{CEUR-WS.org}, \bibinfo{year}{2022}, pp. \bibinfo{pages}{117--121}.
\bibitem[{Yasmin et~al.(2018)Yasmin, Bukhsh, and de~Alencar~Silva}]{DBLP:conf/simpda/YasminBS18}
\bibinfo{author}{F.~A. Yasmin}, \bibinfo{author}{F.~A. Bukhsh}, \bibinfo{author}{P.~de~Alencar~Silva},
\newblock \bibinfo{title}{Process enhancement in process mining: {A} literature review},
\newblock in: \bibinfo{editor}{P.~Ceravolo}, \bibinfo{editor}{M.~T. G{\'{o}}mez{-}L{\'{o}}pez}, \bibinfo{editor}{M.~van Keulen} (Eds.), \bibinfo{booktitle}{International Symposium on Data-driven Process Discovery and Analysis ({SIMPDA}) 2018}, volume \bibinfo{volume}{2270} of \textit{\bibinfo{series}{{CEUR} Workshop Proceedings}}, \bibinfo{publisher}{CEUR-WS.org}, \bibinfo{year}{2018}, pp. \bibinfo{pages}{65--72}. \URLprefix \url{https://ceur-ws.org/Vol-2270/short4.pdf}.
\bibitem[{Jauernig et~al.(2020)Jauernig, Sadeghi, and Stapf}]{DBLP:journals/ieeesp/JauernigSS20}
\bibinfo{author}{P.~Jauernig}, \bibinfo{author}{A.-R. Sadeghi}, \bibinfo{author}{E.~Stapf},
\newblock \bibinfo{title}{Trusted execution environments: Properties, applications, and challenges},
\newblock \bibinfo{journal}{{IEEE} Security and Privacy} \bibinfo{volume}{18} (\bibinfo{year}{2020}) \bibinfo{pages}{56--60}. \DOIprefix\doi{10.1109/MSEC.2019.2947124}.
\bibitem[{Antonino et~al.(2021)Antonino, Woloszyn, and Roscoe}]{antonino2021guardian}
\bibinfo{author}{P.~Antonino}, \bibinfo{author}{W.~A. Woloszyn}, \bibinfo{author}{A.~W. Roscoe},
\newblock \bibinfo{title}{Guardian: Symbolic validation of orderliness in {SGX} enclaves},
\newblock in: \bibinfo{booktitle}{Cloud Computing Security Workshop (CCSW@CCS) 2021}, \bibinfo{publisher}{{ACM}}, \bibinfo{year}{2021}, pp. \bibinfo{pages}{111--123}. \DOIprefix\doi{10.1145/3474123.3486755}.
\bibitem[{Mei et~al.(2024)Mei, Xia, Wang, and Lin}]{mei2024cabin}
\bibinfo{author}{B.~Mei}, \bibinfo{author}{S.~Xia}, \bibinfo{author}{W.~Wang}, \bibinfo{author}{D.~Lin},
\newblock \bibinfo{title}{Cabin: Confining untrusted programs within confidential vms},
\newblock in: \bibinfo{booktitle}{Information and Communications Security ({ICICS}) 2024}, volume \bibinfo{volume}{15056} of \textit{\bibinfo{series}{Lecture Notes in Computer Science}}, \bibinfo{publisher}{Springer}, \bibinfo{year}{2024}, pp. \bibinfo{pages}{165--184}. \DOIprefix\doi{10.1007/978-981-97-8798-2_9}.
\bibitem[{Costan and Devadas(2016)}]{costan2016intel}
\bibinfo{author}{V.~Costan}, \bibinfo{author}{S.~Devadas},
\newblock \bibinfo{title}{Intel {SGX} explained},
\newblock \bibinfo{journal}{{IACR} Cryptology ePrint Archive}  (\bibinfo{year}{2016}) \bibinfo{pages}{86}. \URLprefix \url{https://eprint.iacr.org/2016/086.pdf}.
\bibitem[{Birkholz et~al.(2023)Birkholz, Thaler, Richardson, Smith, and Pan}]{birkholz2023rfc}
\bibinfo{author}{H.~Birkholz}, \bibinfo{author}{D.~Thaler}, \bibinfo{author}{M.~Richardson}, \bibinfo{author}{N.~Smith}, \bibinfo{author}{W.~Pan},
\newblock \bibinfo{title}{Remote attestation procedures {(RATS)} architecture},
\newblock \bibinfo{journal}{{RFC}} \bibinfo{volume}{9334} (\bibinfo{year}{2023}) \bibinfo{pages}{1--46}. \DOIprefix\doi{10.17487/RFC9334}.
\bibitem[{Helble et~al.(2021)Helble, Kretz, Loscocco, Ramsdell, Rowe, and Alexander}]{helble2021flexible}
\bibinfo{author}{S.~C. Helble}, \bibinfo{author}{I.~D. Kretz}, \bibinfo{author}{P.~A. Loscocco}, \bibinfo{author}{J.~D. Ramsdell}, \bibinfo{author}{P.~D. Rowe}, \bibinfo{author}{P.~Alexander},
\newblock \bibinfo{title}{Flexible mechanisms for remote attestation},
\newblock \bibinfo{journal}{{ACM} Transactions on privacy and security} \bibinfo{volume}{24} (\bibinfo{year}{2021}) \bibinfo{pages}{29:1--29:23}. \DOIprefix\doi{10.1145/3470535}.
\bibitem[{M{\'{e}}n{\'{e}}trey et~al.(2022)M{\'{e}}n{\'{e}}trey, G{\"{o}}ttel, Khurshid, Pasin, Felber, Schiavoni, and Raza}]{DBLP:conf/dais/MenetreyGKPFSR22}
\bibinfo{author}{J.~M{\'{e}}n{\'{e}}trey}, \bibinfo{author}{C.~G{\"{o}}ttel}, \bibinfo{author}{A.~Khurshid}, \bibinfo{author}{M.~Pasin}, \bibinfo{author}{P.~Felber}, \bibinfo{author}{V.~Schiavoni}, \bibinfo{author}{S.~Raza},
\newblock \bibinfo{title}{Attestation mechanisms for trusted execution environments demystified},
\newblock in: \bibinfo{booktitle}{Distributed Applications and Interoperable Systems ({DAIS}) 2022}, volume \bibinfo{volume}{13272}, \bibinfo{year}{2022}, pp. \bibinfo{pages}{95--113}. \DOIprefix\doi{10.1007/978-3-031-16092-9_7}.
\bibitem[{Batista et~al.(2022)Batista, Mart{\'{\i}}nez{-}Ballest{\'{e}}, and Solanas}]{DBLP:journals/istr/BatistaMS22}
\bibinfo{author}{E.~Batista}, \bibinfo{author}{A.~Mart{\'{\i}}nez{-}Ballest{\'{e}}}, \bibinfo{author}{A.~Solanas},
\newblock \bibinfo{title}{Privacy-preserving process mining: {A} microaggregation-based approach},
\newblock \bibinfo{journal}{Journal of Information Security and Application} \bibinfo{volume}{68} (\bibinfo{year}{2022}) \bibinfo{pages}{103235}. \DOIprefix\doi{10.1016/J.JISA.2022.103235}.
\bibitem[{van~der Aalst(2021)}]{FederatedPM2021}
\bibinfo{author}{W.~M.~P. van~der Aalst},
\newblock \bibinfo{title}{Federated process mining: Exploiting event data across organizational boundaries},
\newblock in: \bibinfo{booktitle}{{IEEE} International Conference on Smart Data Services ({SMDS}) 2021}, \bibinfo{publisher}{{IEEE}}, \bibinfo{year}{2021}, pp. \bibinfo{pages}{1--7}. \DOIprefix\doi{10.1109/SMDS53860.2021.00011}.
\bibitem[{Cramer et~al.(2015)Cramer, Damg{\aa}rd, and Nielsen}]{SMPC2015}
\bibinfo{author}{R.~Cramer}, \bibinfo{author}{I.~Damg{\aa}rd}, \bibinfo{author}{J.~B. Nielsen}, \bibinfo{title}{Secure Multiparty Computation and Secret Sharing}, \bibinfo{publisher}{Cambridge University Press}, \bibinfo{year}{2015}. \DOIprefix\doi{doi.org/10.1017/CBO9781107337756}.
\bibitem[{M{\"{u}}ller et~al.(2021)M{\"{u}}ller, Simonet{-}Boulogne, Sengupta, and Beige}]{muller2021process}
\bibinfo{author}{M.~M{\"{u}}ller}, \bibinfo{author}{A.~Simonet{-}Boulogne}, \bibinfo{author}{S.~Sengupta}, \bibinfo{author}{O.~Beige},
\newblock \bibinfo{title}{Process mining in trusted execution environments: Towards hardware guarantees for trust-aware inter-organizational process analysis},
\newblock in: \bibinfo{booktitle}{International Conference on Process Mining {ICPM} 2021 Workshops,}, volume \bibinfo{volume}{433} of \textit{\bibinfo{series}{Lecture Notes in Business Information Processing}}, \bibinfo{publisher}{Springer}, \bibinfo{year}{2021}, pp. \bibinfo{pages}{369--381}. \DOIprefix\doi{10.1007/978-3-030-98581-3_27}.
\bibitem[{Yuhala et~al.(2025)Yuhala, G{\"{o}}ttel, M{\'{e}}n{\'{e}}trey, Schiavoni, Kozhaya, and Felber}]{DBLP:conf/ecrts/YuhalaGMSKF25}
\bibinfo{author}{P.~Yuhala}, \bibinfo{author}{C.~G{\"{o}}ttel}, \bibinfo{author}{J.~M{\'{e}}n{\'{e}}trey}, \bibinfo{author}{V.~Schiavoni}, \bibinfo{author}{D.~Kozhaya}, \bibinfo{author}{P.~Felber},
\newblock \bibinfo{title}{On real-time guarantees in intel {SGX} and {TDX}},
\newblock in: \bibinfo{booktitle}{Euromicro Conference on Real-Time Systems {ECRTS} 2025}, volume \bibinfo{volume}{335} of \textit{\bibinfo{series}{LIPIcs}}, \bibinfo{year}{2025}, pp. \bibinfo{pages}{8:1--8:25}. \DOIprefix\doi{10.4230/LIPICS.ECRTS.2025.8}.
\bibitem[{Sun et~al.(2021)Sun, Wang, Li, and Li}]{DBLP:journals/pvldb/SunWL021}
\bibinfo{author}{Y.~Sun}, \bibinfo{author}{S.~Wang}, \bibinfo{author}{H.~Li}, \bibinfo{author}{F.~Li},
\newblock \bibinfo{title}{Building enclave-native storage engines for practical encrypted databases},
\newblock \bibinfo{journal}{Proceedings of the {VLDB} Endowment} \bibinfo{volume}{14} (\bibinfo{year}{2021}) \bibinfo{pages}{1019--1032}. \DOIprefix\doi{10.14778/3447689.3447705}.
\bibitem[{Brandenburger et~al.(2017)Brandenburger, Cachin, Lorenz, and Kapitza}]{DBLP:conf/dsn/BrandenburgerCL17}
\bibinfo{author}{M.~Brandenburger}, \bibinfo{author}{C.~Cachin}, \bibinfo{author}{M.~Lorenz}, \bibinfo{author}{R.~Kapitza},
\newblock \bibinfo{title}{Rollback and forking detection for trusted execution environments using lightweight collective memory},
\newblock in: \bibinfo{booktitle}{International Conference on Dependable Systems and Networks ({DSN}) 2017}, \bibinfo{publisher}{{IEEE} Computer Society}, \bibinfo{year}{2017}, pp. \bibinfo{pages}{157--168}. \DOIprefix\doi{10.1109/DSN.2017.45}.
\bibitem[{Aguilar et~al.(2008)Aguilar, Garc{\'{\i}}a, Ruiz, Piattini, Calahorra, Garc{\'{\i}}a, and Martin}]{DBLP:conf/biostec/AguilarCRPCGM08}
\bibinfo{author}{E.~R. Aguilar}, \bibinfo{author}{F.~Garc{\'{\i}}a}, \bibinfo{author}{F.~Ruiz}, \bibinfo{author}{M.~Piattini}, \bibinfo{author}{L.~Calahorra}, \bibinfo{author}{M.~Garc{\'{\i}}a}, \bibinfo{author}{R.~Martin},
\newblock \bibinfo{title}{Process modeling of the health sector using {BPMN:} {A} case study},
\newblock in: \bibinfo{booktitle}{International Conference on Health Informatics (HEALTHINF)}, \bibinfo{publisher}{{INSTICC} - Institute for Systems and Technologies of Information, Control and Communication}, \bibinfo{year}{2008}, pp. \bibinfo{pages}{173--178}. \DOIprefix\doi{10.5220/0001042201730178}.
\bibitem[{Sampson et~al.(2015)Sampson, Schmidt, Gardner, and Van~Orden}]{sampson2015process}
\bibinfo{author}{S.~E. Sampson}, \bibinfo{author}{G.~Schmidt}, \bibinfo{author}{J.~W. Gardner}, \bibinfo{author}{J.~Van~Orden},
\newblock \bibinfo{title}{Process coordination within a health care service supply network},
\newblock \bibinfo{journal}{Journal of Business Logistics} \bibinfo{volume}{36} (\bibinfo{year}{2015}) \bibinfo{pages}{355--373}. \DOIprefix\doi{10.1111/jbl.12106}.
\bibitem[{Jans and Hosseinpour(2019)}]{Jans.Hosseinpour/IJAIS2019:ActiveLearningProcessMiningForAuditing}
\bibinfo{author}{M.~Jans}, \bibinfo{author}{M.~Hosseinpour},
\newblock \bibinfo{title}{How active learning and process mining can act as continuous auditing catalyst},
\newblock \bibinfo{journal}{Int. J. Accounting Inf. Systems} \bibinfo{volume}{32} (\bibinfo{year}{2019}) \bibinfo{pages}{44--58}. \DOIprefix\doi{10.1016/j.accinf.2018.11.002}.
\bibitem[{Mendelson(2015)}]{Mendelson/2015:IntroductionMathematicalLogic}
\bibinfo{author}{E.~Mendelson}, \bibinfo{title}{Introduction to Mathematical Logic}, \bibinfo{publisher}{Chapman and Hall}, \bibinfo{year}{2015}.
\bibitem[{van~der Aalst(2016)}]{Aalst/2016:ProcessMiningBook:DataScienceinAction}
\bibinfo{author}{W.~M.~P. van~der Aalst}, \bibinfo{title}{Process Mining - Data Science in Action, Second Edition}, \bibinfo{publisher}{Springer}, \bibinfo{year}{2016}. \DOIprefix\doi{10.1007/978-3-662-49851-4}.
\bibitem[{Koch and Kraus(2002)}]{koch2002expressive}
\bibinfo{author}{N.~Koch}, \bibinfo{author}{A.~Kraus},
\newblock \bibinfo{title}{The expressive power of {UML}-based web engineering},
\newblock in: \bibinfo{booktitle}{Workshop on Web-Oriented Software Technology (IWWOST) 2002}, volume~\bibinfo{volume}{16}, \bibinfo{organization}{Citeseer}, \bibinfo{year}{2002}, pp. \bibinfo{pages}{40--41}. \DOIprefix\doi{doi.org/10.1007/3-540-36208-8_5}.
\bibitem[{Cachin et~al.(2011)Cachin, Guerraoui, and Rodrigues}]{Cachin.etal/2011:ReliableSecureDistributedProgramming}
\bibinfo{author}{C.~Cachin}, \bibinfo{author}{R.~Guerraoui}, \bibinfo{author}{L.~E.~T. Rodrigues}, \bibinfo{title}{Introduction to Reliable and Secure Distributed Programming {(2.} ed.)}, \bibinfo{publisher}{Springer}, \bibinfo{year}{2011}. \DOIprefix\doi{10.1007/978-3-642-15260-3}.
\bibitem[{Crampton and Morisset(2012)}]{DBLP:conf/post/CramptonM12}
\bibinfo{author}{J.~Crampton}, \bibinfo{author}{C.~Morisset},
\newblock \bibinfo{title}{Ptacl: {A} language for attribute-based access control in open systems},
\newblock in: \bibinfo{booktitle}{Principles of Security and Trust ({POST}) 2012}, volume \bibinfo{volume}{7215} of \textit{\bibinfo{series}{Lecture Notes in Computer Science}}, \bibinfo{publisher}{Springer}, \bibinfo{year}{2012}, pp. \bibinfo{pages}{390--409}. \DOIprefix\doi{10.1007/978-3-642-28641-4_21}.
\bibitem[{Claes and Poels(2014)}]{claes2014merging}
\bibinfo{author}{J.~Claes}, \bibinfo{author}{G.~Poels},
\newblock \bibinfo{title}{Merging event logs for process mining: {A} rule based merging method and rule suggestion algorithm},
\newblock \bibinfo{journal}{Expert Systems with Applications} \bibinfo{volume}{41} (\bibinfo{year}{2014}) \bibinfo{pages}{7291--7306}. \DOIprefix\doi{10.1016/J.ESWA.2014.06.012}.
\bibitem[{Birkholz et~al.(2023)Birkholz, Thaler, Richardson, Smith, and Pan}]{rfc9334}
\bibinfo{author}{H.~Birkholz}, \bibinfo{author}{D.~Thaler}, \bibinfo{author}{M.~Richardson}, \bibinfo{author}{N.~Smith}, \bibinfo{author}{W.~Pan},
\newblock \bibinfo{title}{Remote attestation procedures {(RATS)} architecture},
\newblock volume \bibinfo{volume}{9334}, \bibinfo{year}{2023}, pp. \bibinfo{pages}{1--46}. \DOIprefix\doi{10.17487/RFC9334}.
\bibitem[{Lamport(1977)}]{DBLP:journals/tse/Lamport77}
\bibinfo{author}{L.~Lamport},
\newblock \bibinfo{title}{Proving the correctness of multiprocess programs},
\newblock \bibinfo{journal}{{IEEE} Transactions on Software Engineering} \bibinfo{volume}{3} (\bibinfo{year}{1977}) \bibinfo{pages}{125--143}. \DOIprefix\doi{10.1109/TSE.1977.229904}.
\bibitem[{Mannhardt(2016)}]{seps}
\bibinfo{author}{F.~Mannhardt}, \bibinfo{title}{Sepsis cases - event log}, \bibinfo{year}{2016}. \DOIprefix\doi{10.4121/UUID:915D2BFB-7E84-49AD-A286-DC35F063A460}.
\bibitem[{Steeman(2013)}]{bpic2013}
\bibinfo{author}{W.~Steeman}, \bibinfo{title}{{BPI} challenge 2013, incidents}, \bibinfo{year}{2013}. \DOIprefix\doi{10.4121/UUID:500573E6-ACCC-4B0C-9576-AA5468B10CEE}.
\bibitem[{Bagher and Lai(2023)}]{DBLP:journals/istr/BagherL23}
\bibinfo{author}{K.~Bagher}, \bibinfo{author}{S.~Lai},
\newblock \bibinfo{title}{Sgx-stream: {A} secure stream analytics framework in sgx-enabled edge cloud},
\newblock \bibinfo{journal}{Journal of Information Security and Application} \bibinfo{volume}{72} (\bibinfo{year}{2023}) \bibinfo{pages}{103403}. \DOIprefix\doi{10.1016/J.JISA.2022.103403}.
\bibitem[{Murdock et~al.(2020)Murdock, Oswald, Garcia et~al.}]{DBLP:conf/sp/MurdockOGBGP20}
\bibinfo{author}{K.~Murdock}, \bibinfo{author}{D.~F. Oswald}, \bibinfo{author}{F.~D. Garcia}, et~al.,
\newblock \bibinfo{title}{Plundervolt: Software-based fault injection attacks against intel {SGX}},
\newblock in: \bibinfo{booktitle}{{IEEE} Symposium on Security and Privacy ({SP}) 2020}, \bibinfo{publisher}{{IEEE}}, \bibinfo{year}{2020}, pp. \bibinfo{pages}{1466--1482}. \DOIprefix\doi{10.1109/SP40000.2020.00057}.
\bibitem[{Bulck et~al.(2018)Bulck, Minkin, Weisse, Genkin, Kasikci et~al.}]{DBLP:conf/uss/BulckMWGKPSWYS18}
\bibinfo{author}{J.~V. Bulck}, \bibinfo{author}{M.~Minkin}, \bibinfo{author}{O.~Weisse}, \bibinfo{author}{D.~Genkin}, \bibinfo{author}{B.~Kasikci}, et~al.,
\newblock \bibinfo{title}{Foreshadow: Extracting the keys to the intel {SGX} kingdom with transient out-of-order execution},
\newblock in: \bibinfo{booktitle}{{USENIX} 2018}, \bibinfo{publisher}{{USENIX} Association}, \bibinfo{year}{2018}. \URLprefix \url{https://www.usenix.org/conference/usenixsecurity18/presentation/bulck}.
\bibitem[{Fei et~al.(2022)Fei, Yan, Ding, and Xie}]{DBLP:journals/csur/FeiYDX21}
\bibinfo{author}{S.~Fei}, \bibinfo{author}{Z.~Yan}, \bibinfo{author}{W.~Ding}, \bibinfo{author}{H.~Xie},
\newblock \bibinfo{title}{Security vulnerabilities of {SGX} and countermeasures: {A} survey},
\newblock \bibinfo{journal}{{ACM} Computing Surveys} \bibinfo{volume}{54} (\bibinfo{year}{2022}) \bibinfo{pages}{126:1--126:36}. \DOIprefix\doi{10.1145/3456631}.
\bibitem[{Thomas(2000)}]{Thomas/2000:SSL-TLS}
\bibinfo{author}{S.~A. Thomas}, \bibinfo{title}{{SSL} and {TLS} Essentials: Securing the Web}, \bibinfo{publisher}{Wiley}, \bibinfo{year}{2000}. \URLprefix \url{https://dl.acm.org/doi/10.5555/519018}.
\bibitem[{Pichler et~al.(2011)Pichler, Weber, Zugal, Pinggera, Mendling, and Reijers}]{DBLP:conf/bpm/PichlerWZPMR11}
\bibinfo{author}{P.~Pichler}, \bibinfo{author}{B.~Weber}, \bibinfo{author}{S.~Zugal}, \bibinfo{author}{J.~Pinggera}, \bibinfo{author}{J.~Mendling}, \bibinfo{author}{H.~A. Reijers},
\newblock \bibinfo{title}{Imperative versus declarative process modeling languages: An empirical investigation},
\newblock in: \bibinfo{booktitle}{Business Process Management Workshops ({BPM}) 2011}, volume~\bibinfo{volume}{99} of \textit{\bibinfo{series}{Lecture Notes in Business Information Processing}}, \bibinfo{publisher}{Springer}, \bibinfo{year}{2011}, pp. \bibinfo{pages}{383--394}. \DOIprefix\doi{10.1007/978-3-642-28108-2_37}.
\bibitem[{Di~Ciccio and Montali(2022)}]{DiCiccio.Montali/PMH2022:DeclarativeProcessMining}
\bibinfo{author}{C.~Di~Ciccio}, \bibinfo{author}{M.~Montali},
\newblock \bibinfo{title}{Declarative process specifications: Reasoning, discovery, monitoring},
\newblock in:  \cite{PMH2022}, \bibinfo{year}{2022}, pp. \bibinfo{pages}{108--152}. \DOIprefix\doi{10.1007/978-3-031-08848-3_4}.
\bibitem[{van~der Aalst(1997)}]{Aalst/ICATPN97:VerificationofWfNs}
\bibinfo{author}{W.~M.~P. van~der Aalst},
\newblock \bibinfo{title}{Verification of workflow nets},
\newblock in: \bibinfo{booktitle}{International Conference on Applications and Theory of Petri Nets and Concurrency ({ICATPN}) 1997}, \bibinfo{year}{1997}, pp. \bibinfo{pages}{407--426}. \DOIprefix\doi{10.1007/3-540-63139-9_48}.
\bibitem[{Barrett(1974)}]{barrett1974coefficient}
\bibinfo{author}{J.~P. Barrett},
\newblock \bibinfo{title}{The coefficient of determination—some limitations},
\newblock \bibinfo{journal}{The American Statistician}  (\bibinfo{year}{1974}) \bibinfo{pages}{19--20}. \DOIprefix\doi{10.1080/00031305.1974.10479056}.
\bibitem[{Su et~al.(2012)Su, Yan, and Tsai}]{altman2015simplelinearregression}
\bibinfo{author}{X.~Su}, \bibinfo{author}{X.~Yan}, \bibinfo{author}{C.-L. Tsai},
\newblock \bibinfo{title}{Linear regression},
\newblock \bibinfo{journal}{Wiley Interdisciplinary Reviews: Computational Statistics} \bibinfo{volume}{4} (\bibinfo{year}{2012}) \bibinfo{pages}{275--294}. \DOIprefix\doi{doi.org/10.1002/wics.1198}.
\bibitem[{Soriente et~al.(2019)Soriente, Karame, Li, and Fedorov}]{DBLP:conf/eurosp/SorienteKLF19}
\bibinfo{author}{C.~Soriente}, \bibinfo{author}{G.~Karame}, \bibinfo{author}{W.~Li}, \bibinfo{author}{S.~Fedorov},
\newblock \bibinfo{title}{Replicatee: Enabling seamless replication of {SGX} enclaves in the cloud},
\newblock in: \bibinfo{booktitle}{{IEEE} European Symposium on Security and Privacy (EuroS{\&}P) 2019}, \bibinfo{publisher}{{IEEE}}, \bibinfo{year}{2019}, pp. \bibinfo{pages}{158--171}. \DOIprefix\doi{10.1109/EUROSP.2019.00021}.
\bibitem[{Morbitzer et~al.(2018)Morbitzer, Huber, Horsch, and Wessel}]{DBLP:conf/eurosys/Morbitzer0HW18}
\bibinfo{author}{M.~Morbitzer}, \bibinfo{author}{M.~Huber}, \bibinfo{author}{J.~Horsch}, \bibinfo{author}{S.~Wessel},
\newblock \bibinfo{title}{Severed: Subverting amd's virtual machine encryption},
\newblock in: \bibinfo{booktitle}{European Workshop on Systems Security (EuroSec@EuroSys) 2018}, \bibinfo{publisher}{{ACM}}, \bibinfo{year}{2018}, pp. \bibinfo{pages}{1:1--1:6}. \DOIprefix\doi{10.1145/3193111.3193112}.
\bibitem[{Qiu et~al.(2019)Qiu, Wang, Lyu, and Qu}]{DBLP:conf/ccs/QiuWLQ19}
\bibinfo{author}{P.~Qiu}, \bibinfo{author}{D.~Wang}, \bibinfo{author}{Y.~Lyu}, \bibinfo{author}{G.~Qu},
\newblock \bibinfo{title}{Voltjockey: Breaching trustzone by software-controlled voltage manipulation over multi-core frequencies},
\newblock in: \bibinfo{booktitle}{Conference on Computer and Communications Security ({CCS}) 2019}, \bibinfo{publisher}{{ACM}}, \bibinfo{year}{2019}, pp. \bibinfo{pages}{195--209}. \DOIprefix\doi{10.1145/3319535.3354201}.
\bibitem[{Buhren et~al.(2019)Buhren, Werling, and Seifert}]{DBLP:conf/ccs/BuhrenWS19}
\bibinfo{author}{R.~Buhren}, \bibinfo{author}{C.~Werling}, \bibinfo{author}{J.~Seifert},
\newblock \bibinfo{title}{Insecure until proven updated: Analyzing {AMD} sev's remote attestation},
\newblock in: \bibinfo{booktitle}{Communications Security ({CCS}) 2019}, \bibinfo{publisher}{{ACM}}, \bibinfo{year}{2019}, pp. \bibinfo{pages}{1087--1099}. \DOIprefix\doi{10.1145/3319535.3354216}.
\bibitem[{Basile et~al.(2023)Basile, {Di Ciccio}, Goretti, and Kirrane}]{DBLP:journals/fbloc/BasileCGK23}
\bibinfo{author}{D.~Basile}, \bibinfo{author}{C.~{Di Ciccio}}, \bibinfo{author}{V.~Goretti}, \bibinfo{author}{S.~Kirrane},
\newblock \bibinfo{title}{Blockchain based resource governance for decentralized web environments},
\newblock \bibinfo{journal}{Frontiers Blockchain} \bibinfo{volume}{6} (\bibinfo{year}{2023}). \DOIprefix\doi{10.3389/FBLOC.2023.1141909}.
\bibitem[{Goretti et~al.(2025)Goretti, Kirrane, and Di~Ciccio}]{Goretti.etal/ICSOC2025:ProMiSe}
\bibinfo{author}{V.~Goretti}, \bibinfo{author}{S.~Kirrane}, \bibinfo{author}{C.~Di~Ciccio},
\newblock \bibinfo{title}{Usage control for process discovery through a trusted execution environment},
\newblock in: \bibinfo{booktitle}{ICSOC}, \bibinfo{year}{2025}, pp. \bibinfo{pages}{268--286}. \DOIprefix\doi{10.1007/978-981-95-5015-9_20}.
\bibitem[{von Voigt et~al.(2020)von Voigt, Fahrenkrog{-}Petersen, Janssen, Koschmider, Tschorsch, Mannhardt, Landsiedel, and Weidlich}]{DBLP:conf/caise/VoigtFJKTMLW20}
\bibinfo{author}{S.~N. von Voigt}, \bibinfo{author}{S.~A. Fahrenkrog{-}Petersen}, \bibinfo{author}{D.~Janssen}, \bibinfo{author}{A.~Koschmider}, \bibinfo{author}{F.~Tschorsch}, \bibinfo{author}{F.~Mannhardt}, \bibinfo{author}{O.~Landsiedel}, \bibinfo{author}{M.~Weidlich},
\newblock \bibinfo{title}{Quantifying the re-identification risk of event logs for process mining - empiricial evaluation paper},
\newblock in: \bibinfo{booktitle}{Advanced Information Systems Engineering ({CAiSE}) 2020}, volume \bibinfo{volume}{12127} of \textit{\bibinfo{series}{Lecture Notes in Computer Science}}, \bibinfo{publisher}{Springer}, \bibinfo{year}{2020}, pp. \bibinfo{pages}{252--267}. \DOIprefix\doi{10.1007/978-3-030-49435-3_16}.
\bibitem[{Fahrenkrog{-}Petersen et~al.(2023)Fahrenkrog{-}Petersen, Kabierski, van~der Aa, and Weidlich}]{DBLP:journals/is/FahrenkrogPetersenKAW23}
\bibinfo{author}{S.~A. Fahrenkrog{-}Petersen}, \bibinfo{author}{M.~Kabierski}, \bibinfo{author}{H.~van~der Aa}, \bibinfo{author}{M.~Weidlich},
\newblock \bibinfo{title}{Semantics-aware mechanisms for control-flow anonymization in process mining},
\newblock \bibinfo{journal}{Information Systems} \bibinfo{volume}{114} (\bibinfo{year}{2023}) \bibinfo{pages}{102169}. \DOIprefix\doi{10.1016/J.IS.2023.102169}.
\bibitem[{van~der Aalst and Carmona(2022)}]{PMH2022}
\bibinfo{editor}{W.~M.~P. van~der Aalst}, \bibinfo{editor}{J.~Carmona} (Eds.), \bibinfo{title}{Process Mining Handbook}, volume \bibinfo{volume}{448} of \textit{\bibinfo{series}{Lecture Notes in Business Information Processing}}, \bibinfo{publisher}{Springer}, \bibinfo{year}{2022}.

\end{thebibliography}
